\newcommand{\FPT}{$\mathsf{FPT}$ }
\newcommand{\RBDS}{\textsc{Red-Blue Dominating Set}\xspace}
\newcommand{\CFDPO}{\textsc{Connected Fair Division}\xspace}
\newcommand{\CFDP}{\textsc{Incomplete Connected Fair Division}\xspace}
\newcommand{\CFD}{$\mathsf{ICFD}$\xspace}
\newcommand{\CFDO}{$\mathsf{CFD}$\xspace}
\newcommand{\PROP}{$\mathsf{PROP}$\xspace}
\newcommand{\EF}{$\mathsf{EF}$\xspace}
\newcommand{\EFO}{$\mathsf{EF1}$\xspace}
\newcommand{\EFX}{$\mathsf{EFX}$\xspace}
\newcommand{\I}{\text{I}}
\newcommand{\II}{\text{II}}
\newcommand{\set}[1]{\left\{#1\right\}}
\newcommand{\ksum}{$(k,M)$-\textsc{SUM}\xspace}
\newcommand{\NPoly}{\textsf{NP} $\subseteq$ \textsf{coNP}$/$\textsf{poly}\xspace}
\newtheorem{theorem}{Theorem}
\crefname{theorem}{theorem}{theorems}
\newtheorem{lemma}[theorem]{Lemma}
\crefname{lemma}{lemma}{lemmas}
\newtheorem{proposition}[theorem]{Proposition}
\crefname{proposition}{proposition}{propositions}
\crefname{result}{result}{results}
\newtheorem{corollary}[theorem]{Corollary}
\crefname{corollary}{corollary}{corollaries}
\crefname{fact}{fact}{facts}
\newtheorem{observation}[theorem]{Observation}
\crefname{observation}{observation}{observations}
\crefname{question}{question}{questions}
\crefname{claim}{claim}{claims}
\crefname{note}{note}{notes}
\crefname{conj}{conjecture}{conjectures}
\newtheorem{definition}[theorem]{Definition}
\crefname{definition}{definition}{definitions}
\newtheorem{remark}[theorem]{Remark}
\crefname{remark}{remark}{remarks}
\newtheorem{theorem1}{Theorem}
\newtheorem{RR}[theorem1]{Reduction Rule}
\newcounter{claimcounter}
\numberwithin{claimcounter}{lemma}
\newenvironment{clm}{\refstepcounter{claimcounter}{\medskip\noindent \textit{Claim} \theclaimcounter:}}{}
\newenvironment{proofofclaim}{%
    
  \proof}{\endproof}
\tikzstyle{noeud}=[circle,inner sep=2, minimum size =3 pt, line width = 1pt, draw=black, fill=white]
\title{Parameterized Complexity of Incomplete Connected Fair Division}
\author{
{Harmender Gahlawat} and {Meirav Zehavi}\\
\mbox{}\\
{\small  Ben-Gurion University of the Negev, Beersheba, Israel}\\}
\begin{document}

\maketitle

\begin{abstract}
\textit{Fair division} of resources among competing agents is a fundamental problem in computational social choice and economic game theory. It has been intensively studied on various kinds of items (\textit{divisible} and \textit{indivisible}) and under various notions of \textit{fairness}. We focus on \CFDPO (\CFDO), the variant of fair division on graphs, where the \textit{resources} are modeled as an \textit{item graph}. Here, each agent has to be assigned a connected subgraph of the item graph, and each item has to be assigned to some agent. 

We introduce a generalization of \CFDO, termed \textsc{Incomplete}  \CFDO (\CFD), where exactly $p$  vertices of the item graph should be assigned to the agents. This might be useful, in particular when the allocations are intended to be ``economical'' as well as fair. We consider four well-known notions of fairness: \PROP, \EF, \EFO, \EFX. First, we prove that \EF-\CFD, \EFO-\CFD, and \EFX-\CFD are W[1]-hard parameterized by $p$ plus the number of agents, even for graphs having constant \textit{vertex cover number} ($\mathsf{vcn}$). In contrast, we present a randomized \FPT algorithm for \PROP-\CFD parameterized only by $p$. Additionally, we prove both positive and negative results concerning the kernelization complexity of \CFD under all four fairness notions, parameterized by $p$, $\mathsf{vcn}$, and the total number of different valuations in the item graph ($\mathsf{val}$). %We complement this result by showing that it is unlikely for \PROP-\textsc{ICFD} and \EF-\textsc{ICFD} to admit a polynomial kernel parameterized by $p$+$\mathsf{vcn}$+$\mathsf{val}$. Finally, we design a color-coding based randomized \FPT~algorithm for \PROP-\textsc{ICFD} parameterized by the parameter $p$ alone.
\end{abstract}

\section{Introduction}
Allocating indivisible goods among competing agents in a ``fair'' manner is a fundamental research problem in computational social choice~\cite{FD1}. Classically, this resource allocation problem is referred to as \textsc{Fair Division}, and it has been intensively studied in the literature under various notions of \textit{fairness} and \textit{efficiency}~\cite{aziz2018knowledge,barman2020approximation,bliem2016complexity,bouveret2016characterizing,budish2011combinatorial,caragiannis2019unreasonable,chaudhury2020efx,kurokawa2018fair}. 

It might be desirable, sometimes, that the allocations respect some connectivity measure. For example, while allocating offices to various research teams in a university campus, it might be desirable to provide adjoining offices to the members of the same team. Similarly, while dividing farmlands among heirs, it might be desirable to provide a connected piece to each individual. Another example can be when the items are people and the graph is a social network, and we want each team to consist of members that know each other (not necessarily directly). Thus, Bouveret et al.~\cite{Bouveret} introduced \CFDPO~(\CFDO), an adaptation of \textsc{Fair Division} to graphs. In this problem, we are given (1) a set of items represented as vertices of an input graph $G$, (2) a set of agents $A$, and (3) a set $\mathcal{U}$ of utility functions specifying the utility of each item (vertex) for each agent. The goal is to decide whether there exists an allocation of vertices to the agents which is  (i) \textit{fair} under some notion of fairness, and (ii) \textit{connected} (i.e., for each agent, the subgraph induced by the set of vertices assigned to the agent is connected). Fixing a fairness notion $\varphi$, \CFDO is referred to as $\varphi$-\CFDO. Note that  \textsc{Fair Division} is the special case of \CFDO restricted on complete graphs. Since its introduction, \CFDO~has received a significant amount of interest~\cite{aziz2018knowledge,bilo2022almost,delgikas,igarashi2019pareto}.   

We introduce an ``economic'' generalization of \CFDO, termed \textsc{Incomplete} \CFDO~(\CFD), where we have to allocate exactly $p$ vertices of $G$ to the agents (i.e., $|V(G)|-p$ vertices remain unassigned), and where each agent must be assigned at least one vertex. Note that \CFDO is the special case of \CFD when $p=|V(G)|$. Leaving some items unassigned has recently gained significant attention in context of classical \textsc{Fair Division}~\cite{berger2022almost,knop,caragiannis2019donate,chaudhury2021little}.  Here, the motivation is to attain \textit{fairness} by donating some items to  \textit{charity}. We note that the variation of \CFD~that asks to allocate at most (rather than exactly) $k$ vertices can be modeled using \CFD~(by repeating for $p=1$ to $k$).

\begin{table}
  \centering
  \renewcommand{\arraystretch}{1.2}
  \begin{tabular}{|p{2cm}|c|c|c|c|}
    \hline
    \multirow{2}{5cm}{\textbf{Fairness}} & $p+\mathsf{vcn}+|A|$ & \multicolumn{2}{c|}{$p+\mathsf{vcn}+|A|+\mathsf{val}$}\\
    % \hline
    % \textbf{Inactive Modes} & \textbf{Description}\\
    \cline{2-4}
    & \textbf{Parameterization} & \textbf{Exponential Kernel} & \textbf{No-Poly Kernel}\\
    %\hhline{~--}
    \hline

    \EF &  \textbf{W[1]-hard} \textbf{[Theorem} \textbf{\ref{th:EFHard}]} &  \textbf{[Theorem} \textbf{\ref{th:Kernel}]} & 
\textbf{[Theorem} \textbf{\ref{th:NoPoly}]} \\ \hline

    \textbf{\EFO} &  \textbf{W[1]-hard} \textbf{[{Theorem}} \textbf{\ref{th:EFOHard}]} &   \textbf{[Theorem} \textbf{\ref{th:Kernel}]} &  \textbf{[Theorem} \textbf{\ref{th:NoPoly}]}\\ \hline
    
    \textbf{\EFX} & \textbf{W[1]-hard} \textbf{[{Theorem}} \textbf{\ref{th:EFOHard}]} &   \textbf{[Theorem} \textbf{\ref{th:Kernel}]}  & \textbf{[Theorem} \textbf{\ref{th:NoPoly}]} \\ \hline

    \textbf{\PROP}  &\textcolor{blue}{\textbf{FPT}}  \textbf{by} $p$ \textbf{[Theorem} \textbf{\ref{th:CCoding}]}  & \textbf{[Theorem} \textbf{\ref{th:Kernel}]} & 
 \textbf{[Theorem} \textbf{\ref{th:NoPoly}]} \\ \hline

  \end{tabular}

  \caption{Summary of our results.}
\end{table}\label{tablepig}

\smallskip
\noindent \textbf{Our Contribution.}
In this paper, we study the parameterized complexity
of \CFD~considering four well-known notions of fairness: \PROP, \EF, \EFO, and \EFX (defined in Section~\ref{S:prelim}). An overview of our results is given in Table~\ref{tablepig}. We remark that parameterized analysis of fair allocations is an extensively studied research subject~\cite{bliem2016complexity,knop,delgikas,FD2,FD3}. 
We first prove that \EF-\CFD remains W[1]-hard when parameterized by $\mathsf{vcn}+p+|A|$, where $\mathsf{vcn}$ is the \textit{vertex cover number} of the item graph and $|A|$ is the total number of agents. To this end, we provide a simple parameterized reduction from the $(k,M)$-\textsc{SUM} problem (parameterized by $k$) to \EF-\CFD~on a star graph parameterized by $p+|A|$. Moreover, we have $|A|=2$, and both agents have \textit{identical valuation} in the reduced instance.

\begin{restatable}{theorem}{EFHard}\label{th:EFHard}
\EF-\CFD is $W[1]$-hard parameterized by $p+|A|$ even for star graphs.
\end{restatable}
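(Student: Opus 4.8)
The plan is to give a polynomial-time parameterized reduction from \ksum, which is W[1]-hard parameterized by $k$ and asks whether some $k$ of a given collection of positive integers $s_1,\dots,s_n$ sum to a target $M$. The guiding idea exploits the rigidity of connected allocations on a star: every connected bundle is either a single vertex or consists of the center together with a set of leaves, so at most one agent can receive a bundle of size at least two. Consequently, when we use only $|A|=2$ agents and set $p\geq 3$, any valid allocation must hand the center plus $p-2$ leaves to one agent and a single leaf to the other. Envy-freeness between two agents with \emph{identical} valuations then degenerates into the requirement that the two bundles have exactly equal value, which is precisely the kind of ``equality'' constraint we can use to encode a target sum.

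Concretely, given a \ksum instance I would build a star $G$ with center $c$, one leaf $v_i$ per integer $s_i$, and one additional ``heavy'' leaf $v^\star$. Both agents share the valuation $u$ defined by $u(v_i)=s_i$, $u(v^\star)=H+M$, and $u(c)=H$, where $H:=1+\sum_{i=1}^n s_i$ is chosen to dominate every individual element value and every achievable subset sum; I set $p:=k+2$. For the forward direction, a size-$k$ subset summing to $M$ yields the allocation giving $c$ and those $k$ leaves to the first agent and $v^\star$ to the second; both bundles are worth exactly $H+M$, so the allocation is connected, uses exactly $p$ vertices, assigns at least one vertex to each agent, and is envy-free. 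For the reverse direction, the structural observation forces one agent to hold $c$ together with $k$ leaves (value $H+\Sigma$, where $\Sigma$ is the sum of its leaf values) and the other to hold a single leaf; envy-freeness forces these two values to be equal, so in particular the single-leaf agent's value is at least $H$.

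The step I expect to be the main obstacle is ruling out spurious envy-free allocations in the reverse direction --- for instance, an allocation in which the single-leaf agent receives an ordinary element leaf whose value happens to equal a sum of $k$ other element values, which would \emph{not} correspond to a solution with target $M$. This is exactly what the inflated center weight $H$ is designed to prevent: since the multi-vertex bundle always contains $c$, its value is at least $H$, so the single-leaf agent's leaf must also have value at least $H$; as every element leaf has value $s_i<H$ and only $v^\star$ attains value at least $H$, the single leaf must be $v^\star$. Equality of the two bundles then reads $H+\Sigma=H+M$, i.e.\ $\Sigma=M$, and since the center-holding agent owns exactly $k$ element leaves this yields a size-$k$ subset summing to $M$. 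Finally I would verify the bookkeeping: the construction is polynomial in the input, and $p+|A|=(k+2)+2$ depends only on $k$, so this is a valid parameterized reduction and establishes W[1]-hardness of \EF-\CFD on star graphs.
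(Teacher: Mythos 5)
Your proposal is correct and follows essentially the same route as the paper: a reduction from \ksum to a star instance with two identically-valued agents, a heavy center (your $H$, the paper's $u(d_1)=1+\sum a_i$), an extra heavy leaf valued $H+t$ (your $v^\star$, the paper's $d_2$), and $p=k+2$, where the star's rigidity forces the center-plus-$k$-leaves versus single-leaf split and Observation~\ref{O:identical} turns \EF with identical valuations into exact equality of bundle values. The paper's reverse direction proceeds through the same two steps you identify---the heavy center forces the singleton to be the heavy leaf, and equality then extracts a size-$k$ subset summing to the target---so there is no substantive difference.
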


Next, we extend this result to the fairness notions \EFO and \EFX. Specifically, we show that \EFO-\CFD and \EFX-\CFD are W[1]-hard when parameterized by $\mathsf{vcn}+p+|A|$, even on graphs having $\mathsf{vcn}=2$. To this end, we provide a non-trivial parameterized reduction from $(k,M)$-\textsc{SUM} problem (parameterized by $k$) to \EFO-\CFD (resp., \EFX-\CFD) parameterized by $p+|A|$ on a graph having $\mathsf{vcn}=2$ (and $|A| = 3$). 

\begin{restatable}{theorem}{EFOHard}\label{th:EFOHard}
\EFO-\CFD and \EFX-\CFD are $W[1]$-hard parameterized by $p+|A|+\mathsf{vcn}$ even for graphs having $\mathsf{vcn}=2$.
\end{restatable}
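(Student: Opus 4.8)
The plan is to give a parameterized reduction from $(k,M)$-\textsc{SUM}, mirroring the strategy used for \cref{th:EFHard} but with a more delicate gadget construction that forces envy-freeness up to one item (resp. any item) to behave like exact envy-freeness on the critical vertices. Recall that in $(k,M)$-\textsc{SUM} we are given a collection of integers together with a target $M$, and we must decide whether some $k$ of them sum exactly to $M$; the parameter is $k$. I would encode each candidate integer as an item whose value (under a shared valuation shared by the selecting agents) equals that integer, and attach these item-vertices to a small number of high-degree vertices so that connectivity forces any connected allocation of size $p$ to pick out exactly a size-$k$ subset for one distinguished agent. The key point is to choose $p$ and $|A|$ as functions of $k$ only (the statement promises $|A|=3$), and to make the host graph have $\mathsf{vcn}=2$ by routing all the integer-items through just two cover vertices (e.g. two apex/hub vertices to which every item vertex is adjacent), so that deleting those two vertices leaves an independent set.

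The heart of the argument is the difference between \EF and \EFO/\EFX. For plain \EF (\cref{th:EFHard}) the reduction can force the selected set to sum to exactly $M$ because any deviation creates envy. Under \EFO, an agent may envy another as long as removing a single item from the envied bundle eliminates the envy; under \EFX, removing \emph{any} (in particular the least-valued) item must do so. The main obstacle is therefore to design gadgets that neutralize this ``slack of one item'': I would introduce auxiliary large-value ``padding'' or ``dummy'' items of carefully calibrated value that are allocated to a third agent, so that the only way to satisfy \EFO (and simultaneously \EFX) across all pairs of agents is to hit the exact target $M$. Concretely, I expect to give the two selecting agents identical valuations, arrange values so that the tolerance afforded by dropping one item is smaller than the smallest nonzero gap between achievable subset sums, and use the third agent as a fixed ``reference bundle'' whose value pins down the required sum. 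The $\mathsf{vcn}=2$ constraint should be maintainable because all structure lives on the two hub vertices, with every other vertex being a pendant/independent item.

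The steps, in order, are as follows. First I would formally recall $(k,M)$-\textsc{SUM} and its $W[1]$-hardness parameterized by $k$, and (by standard scaling/shifting) normalize the instance so that all integers are positive and the relevant subset-sum gaps are bounded below by a known quantity $\Delta$. Second, I would construct the item graph $G$: two hub vertices $h_1,h_2$ forming the vertex cover, a pendant integer-item $v_i$ of value $a_i$ for each input integer $a_i$ (adjacent to a hub so connectivity through the hub is possible), and a bundle of calibrated dummy/padding vertices used to fix the other two agents' allocations. Third, I would set $p$ and the agents' valuations so that any valid connected allocation of exactly $p$ vertices must give agent~$1$ a connected bundle corresponding to exactly $k$ integer-items, while agents~$2$ and~$3$ receive predetermined reference bundles. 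Fourth, I would prove the forward direction: a solution to $(k,M)$-\textsc{SUM} yields an \EFO (and \EFX) allocation. Fifth, and this is the crux, I would prove the reverse direction, arguing that the \EFO/\EFX condition together with the value calibration forces the selected $k$ items to sum exactly to $M$; here the chosen $\Delta$ must dominate the one-item slack so that near-misses violate fairness. Finally I would verify $\mathsf{vcn}=2$, $|A|=3$, and that $p$ is bounded by a function of $k$, yielding $W[1]$-hardness parameterized by $p+|A|+\mathsf{vcn}$.

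The main obstacle I anticipate is the simultaneous handling of \EFO and \EFX with a single construction: because \EFX is a strictly stronger requirement than \EFO, the same gadget must be tight enough that \EFX is satisfiable exactly when the subset-sum target is met, yet loose enough that the intended ``yes'' allocation is genuinely \EFX (not merely \EFO). Balancing these through the choice of dummy-item values -- so that the least-valued removable item in each envied bundle exactly compensates the allowed slack -- is the delicate part of the argument, and I expect most of the technical effort to concentrate on verifying the \EFX direction of the equivalence.
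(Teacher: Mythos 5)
Your high-level plan coincides with the paper's: a parameterized reduction from \ksum with $|A|=3$, a two-vertex cover, identical valuations for the two ``selecting'' agents, a third agent holding a reference bundle, and a calibrated dummy item compensating the one-item slack. However, there is a genuine gap in your central mechanism. You propose to ``arrange values so that the tolerance afforded by dropping one item is smaller than the smallest nonzero gap between achievable subset sums,'' obtained by scaling. This cannot work: under \EFO the tolerance is the value of the maximum-valued removable item of the envied bundle, and that item is itself one of your scaled items (or a dummy whose value must be comparable to them), so multiplying the integers by $\Delta$ scales the slack by the same factor. Worse, to control the cardinality of the selected set one gives every integer-item an additive offset $C\geq \max_i a_i$ (the paper sets $u(v_j)=C+a_j$), whence the one-item slack is at least $C$ while subset sums of equal cardinality differ by at most $C$ --- the inequality you need goes the wrong way. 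The paper instead obtains \emph{exactness} by a two-sided squeeze using two different agent pairs: the lower bound $\sum_{a\in S}a\geq t$ comes from agent~1's envy toward the bundle $\{x_4,x_5,x_6\}$, whose only valuable vertex $x_4$ is a cut vertex of that bundle, so $x_4\notin\tau_2$ and \emph{no} one-item discount is available --- for this pair \EFO/\EFX degenerate to plain \EF; the upper bound $\sum_{a\in S}a\leq t$ comes from agent~3's envy toward agent~1, where a pendant dummy $x_1$ is engineered to be exactly the item whose removal is counted. Your proposal gestures at exact compensation but never identifies the $\tau$-exclusion trick, without which the ``$\geq t$'' side has no source.

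Your anticipated main obstacle --- one gadget that is simultaneously tight for \EFX and satisfiable for \EFO --- also dissolves once you notice that nothing forces a single instance to serve both notions. The paper builds one graph (a path $x_1\ldots x_5$ with a pendant $x_6$ at $x_4$ and $N$ pendants at $x_2$; vertex cover $\{x_2,x_4\}$, $p=k+6$) and two valuation functions differing in a \emph{single entry}: setting $u_3(x_1)=NC$ makes $x_1$ the maximum-valued vertex of $\tau_1$, which is the right discount for \EFO, while setting $u_3'(x_1)=0$ makes it the minimum-valued one, which is the right discount for \EFX; all other valuations coincide and the two correctness arguments run in parallel. Without these two devices --- the non-removable high-value cut vertex for the lower bound, and the one-entry tweak separating \EFO from \EFX --- your step five (``near-misses violate fairness'') does not go through as described. (A minor point: the theorem statement does not promise $|A|=3$; that is a feature of the construction, which you would need to establish, as the paper does.)
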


Since $\varphi$-\CFD, for $\varphi \in \{ \mathsf{EF},\mathsf{EF1},\mathsf{EFX}\}$, is W[1]-hard when parameterized by $p+\mathsf{vcn}+|A|$, we include another parameter in quest of tractability. Let $\mathsf{val}$ be the total number of distinct valuations that agents assign to items ($\mathsf{val}$ is the range of the function $\mathcal{U}$). We note that  $\mathsf{val}$ is used to achieve tractability of \CFDO by Deligkas et al.~\cite{delgikas}. We establish that $\varphi$-\CFD, for $\varphi \in \{ \mathsf{PROP}, \mathsf{EF},\mathsf{EF1},\mathsf{EFX}\}$, is \FPT parameterized by $\mathsf{vcn}+\mathsf{val}+p$  by designing an exponential kernel.

\begin{restatable}{theorem}{Kernel}\label{th:Kernel}
For $\varphi \in \{ \mathsf{EF},\mathsf{EF1},\mathsf{EFX}\}$, $\varphi$-\CFD admits a kernel with at most $p2^{\mathsf{vcn}}\mathsf{val}^{\mathsf{val}}+\mathsf{vcn}$ vertices. Moreover, \PROP-\CFD admits a kernel with at most $p2^{\mathsf{vcn}}\mathsf{val}^{\mathsf{val}}+\mathsf{vcn}+p$ vertices.
\end{restatable}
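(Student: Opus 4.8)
The plan is to design a kernelization algorithm based on a vertex cover decomposition of the item graph. Let $C$ be a minimum vertex cover of $G$, so $|C| = \mathsf{vcn}$, and let $I = V(G) \setminus C$ be the corresponding independent set. The crucial structural observation is that every vertex in $I$ has all its neighbors inside $C$; hence each vertex $u \in I$ is fully described, from the point of view of connectivity, by its \emph{neighborhood type} $N(u) \subseteq C$, of which there are at most $2^{\mathsf{vcn}}$ possibilities. From the point of view of fairness, $u$ is described by the vector of valuations that the agents assign to it. Since each agent assigns one of at most $\mathsf{val}$ distinct values to any item, and since (as suggested by the hardness reductions using identical valuations) what matters is the \emph{profile} of utilities rather than the identity of the agent, the number of distinct valuation profiles relevant to the problem is bounded by $\mathsf{val}^{\mathsf{val}}$. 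Combining these two coordinates, I would partition $I$ into at most $2^{\mathsf{vcn}} \cdot \mathsf{val}^{\mathsf{val}}$ \emph{equivalence classes}, where two vertices are equivalent if they have the same neighborhood in $C$ and the same valuation profile.

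The heart of the argument is a reduction rule: within any single equivalence class, it suffices to keep a bounded number of vertices. Concretely, I would argue that keeping $p$ representatives per class is enough, since any valid allocation uses at most $p$ vertices of $I$ in total, and any vertex of a given class is interchangeable with any other vertex of the same class without affecting either connectivity (same neighborhood) or any agent's utility (same profile) for all four fairness notions. Thus the reduction rule deletes all but $p$ vertices from each equivalence class. I would prove correctness by exhibiting, for any feasible and $\varphi$-fair allocation in the original instance, a corresponding allocation in the reduced instance obtained by swapping each used vertex for a surviving representative of its class, and conversely; the swap preserves connectivity because neighborhoods are identical and preserves the fairness inequalities because the utility of each bundle to each agent is unchanged. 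After exhaustive application, $|I| \le p \cdot 2^{\mathsf{vcn}} \cdot \mathsf{val}^{\mathsf{val}}$, and together with the $\mathsf{vcn}$ vertices of $C$ this yields the claimed bound $p2^{\mathsf{vcn}}\mathsf{val}^{\mathsf{val}} + \mathsf{vcn}$ for $\varphi \in \{\mathsf{EF}, \mathsf{EF1}, \mathsf{EFX}\}$.

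For \PROP specifically, the proportionality threshold for an agent depends on that agent's total valuation summed over the items it \emph{could} receive, which in the incomplete setting is governed by the total pool of available items rather than only the selected ones; deleting vertices changes these thresholds. Hence I expect to need an additional $+p$ slack in the bound, as stated, to account for bookkeeping vertices that preserve the proportional share computation — for instance, retaining enough extra representatives so that the proportional thresholds in the reduced instance coincide with (or correctly dominate) those of the original instance. This is why the \PROP bound carries the extra additive $p$ term.

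The main obstacle will be verifying that the profile-based bound $\mathsf{val}^{\mathsf{val}}$ genuinely captures all fairness-relevant distinctions among vertices for each of the four notions simultaneously, and that the swapping argument is sound when a single equivalence class contributes several vertices to several different agents' bundles at once. In particular, I must ensure that after swapping, no two agents are assigned the same physical vertex and that the total number of used vertices remains exactly $p$; this requires a careful counting argument showing that the $p$ surviving representatives per class always suffice to realize any allocation's demands on that class. I would handle this by processing classes one at a time and matching the original allocation's per-class usage (at most $p$ vertices) injectively into the surviving representatives, which is always possible precisely because we retained $p$ of them.
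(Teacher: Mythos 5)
Your first half---for $\varphi \in \{\mathsf{EF},\mathsf{EF1},\mathsf{EFX}\}$---is essentially the paper's proof: the same equivalence classes on the independent set (identical neighborhood in the cover and identical valuation vector), the same rule keeping $p$ representatives per class, and the same injective-swap correctness argument (the paper makes the counting explicit: with $X$ the unused surviving representatives and $Y$ the used deleted vertices, $(p-|X|)+|Y|\le p$ forces $|X|\ge |Y|$, and then $y_j$ is swapped for $x_j$). Two small remarks: for $\mathsf{EF1}$ and $\mathsf{EFX}$ it does not suffice that bundle utilities are unchanged---you must also check that the swap preserves the $\max$/$\min$ over the removable items $\tau_j$, which the paper handles via an explicit value-preserving bijection between $\pi_j$ and $\pi'_j$; and the $\mathsf{val}^{\mathsf{val}}$ factor arises by first counting classes as $2^{\mathsf{vcn}}\cdot\mathsf{val}^{|\mathcal{A}|}$ over agent \emph{types} and then invoking $|\mathcal{A}|\le \mathsf{val}$, a step your ``profile'' gloss leaves implicit.

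The $\mathsf{PROP}$ half has a genuine gap. You correctly identify the obstacle (deletion changes each threshold $u_i(V(G))/n$; the paper even gives a counterexample showing induced subgraphs do not preserve $\mathsf{PROP}$), but your remedy---``retaining enough extra representatives so that the proportional thresholds \ldots coincide with (or correctly dominate) those of the original''---cannot work as stated: a class may contain up to $m$ vertices, so the deleted vertices can carry an arbitrarily large total valuation, and no fixed set of $p$ extra surviving vertices, whose values are given, can absorb that loss; dominating thresholds would moreover break one direction of the equivalence. The paper's mechanism is different: a preprocessing step adds $n$ dummy vertices $d_1,\ldots,d_n$ and $n$ dummy agents, doubles every real agent's valuations on real vertices, makes dummy agent $n+i$ value only $d_i$ (forcing every $\mathsf{PROP}$ allocation to give $d_i$ to agent $n+i$), and sets $p'=p+n$; the reduction rule then deletes surplus class vertices while \emph{adding their value for real agent $i$ onto $u_i(d_i)$}, so $u_i(V(\cdot))$ is invariant even though the items are gone. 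The extra $+p$ in the bound is exactly these $n\le p$ dummy vertices, not bookkeeping representatives. Two further ingredients you would still need: the monotonicity observation that a $\mathsf{PROP}$ yes-instance for budget $q$ remains one for every larger budget (used in the reverse direction, since the real agents' bundles in a reduced solution may cover only $q\le p$ vertices), and an additional rule rejecting when some $u_i(d_i)$ exceeds $u_i$ of the remaining graph, which keeps the kernel's valuations unary/binary-encodable so that the instance size, not just the vertex count, is bounded.
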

%parameterized by $\mathsf{vcn}+val+p$

%(Theorem~\ref{T:kernel} and~\ref{T:propKernel}).

Next, we complement our exponential kernels by showing that it is unlikely for $\varphi$-\CFD, for $\varphi \in \{ \mathsf{PROP},\mathsf{EF},\mathsf{EF1},\mathsf{EFX}\}$, parameterized by $\mathsf{vcn}+\mathsf{val}+ |A|+p$ to admit a polynomial compression. To this end, we provide \textit{polynomial parameter transformations} from \RBDS.

\begin{restatable}{theorem}{NoPoly}\label{th:NoPoly}
For $\varphi \in \{ \mathsf{PROP}, \mathsf{EF},\mathsf{EF1},\mathsf{EFX}\}$, $\varphi$-\CFD~parameterized by $\mathsf{vcn}+\mathsf{val}+ |A|+p$ does not admit polynomial compression, unless \NPoly.
\end{restatable}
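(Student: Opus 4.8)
The plan is to establish the no-polynomial-compression result via \emph{polynomial parameter transformations} (PPTs) from \RBDS, which is known not to admit a polynomial compression when parameterized by the size of the smaller side of the bipartition (equivalently, the solution size together with one side), unless \NPoly. Recall that an instance of \RBDS consists of a bipartite graph with red vertices $R$ and blue vertices $B$, together with an integer $k$, and the question is whether there is a set $S \subseteq B$ of size $k$ dominating all of $R$. A PPT is a polynomial-time reduction that maps an instance of the source problem to an instance of the target problem such that the target parameter is bounded by a polynomial in the source parameter; since polynomial compression composes with PPTs, exhibiting such a transformation transfers the lower bound. The key requirement here is therefore to design, for each fairness notion $\varphi \in \{\mathsf{PROP},\mathsf{EF},\mathsf{EF1},\mathsf{EFX}\}$, a reduction in which all four target parameters $\mathsf{vcn}$, $\mathsf{val}$, $|A|$, and $p$ are simultaneously bounded by a polynomial (ideally, a constant or linear function) of $k$ alone.

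First I would construct the item graph. The natural encoding places the blue vertices $B$ on one side as ``selector'' items and uses the red vertices $R$ as ``constraint'' gadgets; the crucial structural demand is that the vertex cover number stays bounded by a function of $k$. To this end I would route all adjacencies through a small set of $\mathcal{O}(k)$ hub vertices, so that deleting the hubs leaves an independent set, guaranteeing $\mathsf{vcn} = \mathcal{O}(k)$. I would set $p$ to select exactly the number of items corresponding to a size-$k$ dominating set (plus a bounded number of auxiliary items), keeping $p = \mathcal{O}(k)$, and use a constant number of agents so that $|A| = \mathcal{O}(1)$. The valuations must be designed so that a fair (under $\varphi$) allocation of exactly $p$ items exists if and only if the chosen $p$ items encode a red-blue dominating set; here connectivity is exploited to force the selected items to attach to the hub structure, and the fairness threshold (proportionality or envy-freeness among the bounded number of agents) is calibrated to encode the domination constraint, i.e.\ that every red vertex is covered. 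Critically, I would use only a bounded palette of distinct utility values so that $\mathsf{val} = \mathcal{O}(1)$.

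The main obstacle I expect is the simultaneous control of \emph{all four} parameters, and in particular reconciling a small $\mathsf{val}$ with the need to encode the combinatorial domination constraint. Bounding $\mathsf{vcn}$, $|A|$, and $p$ individually is routine, but a small $\mathsf{val}$ severely limits expressive power: one cannot use distinct ``large'' weights to distinguish individual red vertices, so the coverage condition must be enforced purely through the graph's connectivity structure and a handful of reused values rather than through arithmetic on utilities. The technical heart of the argument will be a gadget in which each red vertex $r$ contributes a small item whose only connected placements force it into the bundle of an agent precisely when some neighbor of $r$ in $B$ has been selected; verifying that no ``cheating'' allocation can satisfy $\varphi$ while leaving a red constraint unsatisfied is where the correctness proof concentrates. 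Since \EF is the strongest of the notions and the others are relaxations, I would design the core gadget for the hardest-to-satisfy notion first and then argue, via the calibration of the fairness threshold and the structure of feasible bundles, that the same construction (possibly with a minor adjustment of one value or one auxiliary item, as the three W[1]-hardness reductions already suggest is necessary) simultaneously handles $\mathsf{EF1}$, $\mathsf{EFX}$, and $\mathsf{PROP}$.

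Finally I would complete the proof by establishing the two directions of correctness for each $\varphi$: a red-blue dominating set of size $k$ yields a $\varphi$-fair allocation of exactly $p$ items, and conversely any such allocation projects to a dominating set of size $k$. Because the parameters satisfy $\mathsf{vcn}+\mathsf{val}+|A|+p = \mathcal{O}(k)$, the transformation is a valid PPT, and composing it with the assumed absence of polynomial compression for \RBDS yields that $\varphi$-\CFD~admits no polynomial compression unless \NPoly, as claimed.
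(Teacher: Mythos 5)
Your high-level framing is right---the paper also proves this via polynomial parameter transformations from \RBDS parameterized by $|T|+k$---but your plan contains a genuine gap rooted in a misreading of what the PPT requires. You insist that $\mathsf{vcn}+\mathsf{val}+|A|+p$ be bounded by a polynomial in $k$ \emph{alone}, whereas a PPT from \RBDS only needs it bounded by a polynomial in the source parameter $|T|+k$. This is not a cosmetic point: with $p=\mathcal{O}(k)$ your coverage gadget cannot work. In \CFD, items may simply be left unallocated, and for \EF, \EFO, and \EFX the fairness conditions compare only the \emph{allocated} bundles; so an allocation can ignore every ``red constraint'' item for each $r\in T$, and nothing in your construction forces all $|T|$ red vertices to be covered (the ``only if'' direction collapses). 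Your proposed fix---an item per red vertex that must land in some bundle exactly when a neighbor is selected---would force $p\geq |T|$, contradicting $p=\mathcal{O}(k)$. The paper resolves this tension by embracing the larger budget: it sets $p=|T|+k+2$ (for \PROP and \EF), uses two agents with $\mathsf{val}=3$, and calibrates a single heavy dummy vertex $d_1$ so that in any fair allocation one agent takes exactly $\{d_1\}$ and the other takes all of $T$ plus at most $k$ vertices of $N$; connectivity of that bundle in the bipartite graph then \emph{is} the domination constraint. Since $\mathsf{vcn}\leq |T|+2$ and $p\leq |T|+k+2$, the parameter is polynomial in $|T|+k$, which suffices.

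The second gap is your claim that designing the gadget for \EF and then making ``a minor adjustment of one value or one auxiliary item'' handles \EFO and \EFX. Relaxing the fairness notion enlarges the set of admissible allocations, so the reverse implication of your equivalence can fail even when the forward one survives: a cheating \EFO allocation may exist with no dominating set. The paper's own treatment confirms this is not minor---for \EFO/\EFX it abandons the two-agent construction entirely and builds a separate reduction with $|A|=2t+6$ agents of two types, pendant matching vertices $W$ (so that removing a selected $N$-vertex disconnects its pendant, preventing the ``up to one item'' slack from erasing the selection), large blocks $Z^j$ of $32kt$ forced filler vertices, and $p=8kt+24k$. Note also that several of its parameters ($|A|$, $p$, $\mathsf{vcn}$) genuinely depend on $t=|T|$, which again is only legitimate because the source parameter is $|T|+k$---your $|A|=\mathcal{O}(1)$, $\mathsf{val}=\mathcal{O}(1)$, $p=\mathcal{O}(k)$ targets are simultaneously unnecessary and, for \EFO/\EFX, likely unachievable by anything resembling your sketch. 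Finally, for \PROP you should be aware that unallocated items \emph{do} enter the threshold $u_i(V(G))/n$, and the paper exploits exactly this (it is also why \PROP needs separate care elsewhere in the paper); your proposal does not engage with this asymmetry between \PROP and the envy-based notions.
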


Finally, we establish that \PROP-\CFD is \FPT~when parameterized by $p$ alone. To this end, we provide a color-coding based randomized algorithm with constant success probability\footnote{Clearly, repetition allows to boost the success probability to any constant.}.

\begin{restatable}{theorem}{CCoding}\label{th:CCoding}
There exists a randomized algorithm that solves \PROP-\CFD, in time $e^pp^{\mathcal{O}(p \log p)}m^{\mathcal{O}(1)}$ with success probability at least $1-\frac{1}{e}$. 
\end{restatable}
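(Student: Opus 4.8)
The plan is to design a color-coding algorithm parameterized by $p$, after two simplifications. Since each agent must receive at least one vertex and exactly $p$ vertices are allocated in total, every yes-instance has $|A|\le p$; if $|A|>p$ we reject immediately. Moreover, under the standard convention an allocation is proportional exactly when each agent $a$ receives a bundle whose value reaches a threshold $\tau_a$ that is fixed by the instance and independent of the chosen allocation (namely $\tau_a=\frac{1}{|A|}\,u_a(V(G))$). Hence \PROP-\CFD reduces to the purely combinatorial question: is there a set $S$ of exactly $p$ vertices and a partition $S=A_1\uplus\dots\uplus A_{|A|}$ into nonempty parts such that each $G[A_a]$ is connected and $u_a(A_a)\ge \tau_a$ for every agent $a$?

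I would then color $V(G)$ with $p$ colors drawn independently and uniformly at random, and search only for a \emph{colorful} (rainbow) solution, i.e. one whose $p$ vertices receive all $p$ colors. For any fixed target solution this happens with probability $p!/p^p\ge e^{-p}$, so repeating the coloring $\lceil e^p\rceil$ times makes the probability of missing every solution at most $(1-e^{-p})^{e^p}\le 1/e$. This simultaneously accounts for the $e^p$ factor and for the claimed success probability $1-\frac1e$.

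For a single coloring I would find a colorful solution by a two-phase dynamic program over subsets of the color set $[p]$. In the first phase, for every agent $a$ and every $C\subseteq[p]$, I compute the maximum value under $u_a$ of a connected subgraph whose vertices carry pairwise distinct colors forming exactly the set $C$; this is the rainbow connected-subgraph problem, solvable by a color-coding recurrence in the spirit of the Dreyfus--Wagner/Steiner-tree dynamic program, where a partial subgraph rooted at a vertex is either grown along an edge into a fresh color or obtained by merging two subtrees that share only the root. I mark $(a,C)$ feasible whenever this maximum is at least $\tau_a$; because enlarging a bundle only increases value, maximizing is exactly the right thing for a lower-bound threshold. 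In the second phase I run an assignment dynamic program $T(a,C)$ deciding whether the colors of $C$ can be split among agents $1,\dots,a$ into nonempty feasible bundles, via $T(a,C)=\bigvee_{\emptyset\ne C'\subseteq C,\ (a,C')\text{ feasible}} T(a-1,C\setminus C')$, the answer for this coloring being $T(|A|,[p])$. Colorfulness guarantees that bundles with disjoint color sets occupy disjoint vertex sets and that a full partition of $[p]$ uses exactly $p$ vertices, so a positive answer reconstructs a genuine connected, proportional allocation of exactly $p$ items.

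For the running time, each phase runs in $p^{O(p)}m^{O(1)}$ time per coloring (a subset dynamic program over the $2^p$ color sets together with the per-agent rainbow-subgraph computation), which lies within $p^{O(p\log p)}m^{O(1)}$; multiplying by the $\lceil e^p\rceil$ colorings gives the stated bound $e^p p^{O(p\log p)}m^{O(1)}$. I expect the main obstacle to be the first-phase program: one must compute, simultaneously for all agents and all color sets, the optimal value of a connected subgraph that is \emph{exactly} rainbow on a prescribed color set, ensuring that the grow/merge recurrence never reuses a color and that value-maximization correctly captures the proportionality thresholds. The assignment phase and the probabilistic analysis are comparatively routine.
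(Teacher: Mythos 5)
Your proof is correct, and it reaches the stated bound by a genuinely different organization of the color-coding than the paper's. The paper colors $V(G)$ with $n=|A|$ colors, where color $i$ directly marks the vertices destined for agent $i$ (Lemma~\ref{L:color}: a ``suitable'' coloring arises with probability $(1/n)^p$, so the trial is repeated $n^p\le p^p$ times); a fixed coloring is then tested (Lemma~\ref{L:CC}) by enumerating all composition vectors $(p_1,\dots,p_n)$ with $\sum_i p_i=p$ (at most $p^p$ of them) and solving, per agent, the subproblem ``connected subgraph of size $p_i$ and weight at least $u_i(V(G))/n$'' via a second, \emph{inner} layer of color-coding based on perfect hash families (Lemma~\ref{L:imp}). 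You instead spend all the randomness in a single layer: $p$ colors, a rainbow target solution hit with probability $p!/p^p\ge e^{-p}$, repeated $\lceil e^p\rceil$ times, and you replace both the composition-vector enumeration and the inner hash family by your two-phase subset DP --- the per-agent rainbow connected-subgraph table (which is essentially the paper's inner recurrence $M[v,C]=\max_{u\in N[v],\,C'\subseteq C\setminus \{col(v)\}} M[u,C']+M[v,C\setminus C']$, run once per agent rather than once per composition entry) followed by the partition DP $T(a,C)$, which encodes bundle sizes implicitly as $|C|$ and enforces nonemptiness via $C'\neq\emptyset$. Your soundness argument is the right one: exact rainbowness on pairwise disjoint color sets forces vertex-disjoint bundles (a shared vertex would put its color in two disjoint sets), and a full partition of $[p]$ forces exactly $p$ allocated vertices; your reliance on the threshold $\tau_a=u_a(V(G))/|A|$ being fixed independently of the allocation is exactly what makes \PROP amenable here (the paper makes the same point when contrasting with $\mathsf{PRP}$-\CFD in its conclusion), and rejecting when $|A|>p$ matches the paper's standing assumption $n\le p$. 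Quantitatively your variant is in fact tighter --- roughly $e^p 3^p\, p\, m^{\mathcal{O}(1)}$ work versus the paper's $p^p e^p p^{\mathcal{O}(p\log p)} m^{\mathcal{O}(1)}$, both within the stated $e^p p^{\mathcal{O}(p\log p)} m^{\mathcal{O}(1)}$ --- and it needs only $e^p$ repetitions instead of $n^p$; what the paper's layering buys is modularity (Lemma~\ref{L:imp} is a reusable standalone tool whose inner color-coding is already derandomized by perfect hash families), whereas your single outer coloring stays randomized, though it too could be derandomized by an $(m,p)$-perfect hash family at a $p^{\mathcal{O}(\log p)}$ overhead.
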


The choice of $\mathsf{vcn}$ as a structural parameter is quite natural for our results. Arguably, $\mathsf{vcn}$ is the best structural parameter for providing negative results---indeed, our negative results on parameterized complexity and kernelization complexity of \CFD parameterized by $\mathsf{vcn}$ (plus other relevant parameters) imply the same for other smaller parameters such as treewidth, cliquewidth, treedepth, and feedback vertex set (plus other relevant parameters)~\cite{bookParameterized}. Furthermore, $\mathsf{vcn}$ is one of the most efficiently computable parameters from both approximation~\cite{bookApprox} and parameterized~\cite{bookParameterized} points of view, making it fit from an applicative perspective even when a vertex cover is not given along with the input.

\smallskip
\noindent\textbf{Brief Survey.}
The need for fairly dividing resources among competing agents is one of the oldest problems in human civilization. The famous cut-and-choose method to remove envy between two agents dates back to the Book of Genesis. Fairly dividing a divisible item among agents is a very old and classical problem, also referred to as \textit{cake-cutting}, and is studied extensively in the literature~\cite{brams1996fair,sonmez2010course}.

When the items are indivisible, the problem is well studied under the name \textsc{Fair Division}~\cite{aziz2018knowledge,barman2020approximation,bliem2016complexity,bouveret2016characterizing,brams2003fair,budish2011combinatorial,caragiannis2019unreasonable,chaudhury2020efx,kurokawa2018fair}. %Currently, we have a fairly good understanding  of its computational complexity under several perspectives on fairness~\cite{}. 
The study of \CFDO~was initiated by Bouveret et al.~\cite{Bouveret}. Later, their work was extended to include the notion of \textit{chores} as well by Aziz et al.~\cite{aziz2018knowledge}. Various notions of fairness like maximin share allocation (Greco and Scarcello~\cite{greco2020complexity}) and Pareto-optimal allocations (Igarashi and Peters~\cite{igarashi2019pareto}) are also studied. Bil{\`o} et al.~\cite{bilo2022almost} studied the conditions that guarantee various notions of fair allocations. Recently, Deligkas et al.~\cite{delgikas} provided a comprehensive picture of the parameterized complexity of \CFDO.

\smallskip
\noindent\textbf{Motivation.} A wide range of considerations motivates our definition of \CFD. First, it is often the case that the allocator wants to save some resources for later. For instance, consider the example of a university campus where the professors and research groups are first allocated offices while saving some rooms  for various administrative purposes and classrooms, or for future professors in case the university is expanding. In these settings, the requirement of connectivity might be desirable.    

Second, partial allocations are specifically useful when the allocator wants to save some resources for the future. For instance, it is  practical for a university department to not allocate all of its resources (like computer equipment and travel funding) in one round of allocations and save some for future use. The notion of connectivity can be introduced based on the graph of devices that work well with each other. A similar setting arises while allocating ground to companies/people to build buildings while saving some ground for the future. Here, the connectivity requirement comes naturally.

Third, it saves resources from the viewpoint of the allocators: using this framework, they can assign the least amount of resources that makes the agents ``happy''. This is best exemplified when the allocator actually needs to buy the items. For example, when a company actually needs to rent offices in an office space or has to buy some equipment for its employees.

For another example, consider a park where different groups want to organize picnics on a specific day, and the park committee has to allocate these picnic spots. Allocating a connected spot to each group makes good sense here. Further, each day, the committee receives multiple applications with the preferences of each group. Now, it may be impractical to delay the allocation process till the very last day. So, it may be a good idea to allocate these spots in phases such that in each phase, the applicants are provided the spots ``fairly''  while saving a sufficient amount of spots for further phases.

Due to the connectivity requirement, it might so happen that there are some ``problematic'' vertices, which, when assigned to one of the agents, may deem the allocation ``unfair'' in \CFDO. For example, consider a vertex $v$ such that $\ell$ degree-one vertices are attached to it. Moreover, let $n$ be the number of agents, and $\ell$ is much larger than $n$. In this case, since each vertex is to be assigned to some agent in \CFDO, at least $\ell-n$ of these degree-one vertices must be assigned to the same agent that is assigned the vertex $v$, possibly making other agents ``envious'' of this agent.  In these scenarios, it is a practical (and desirable) question to seek a fair allocation by leaving some vertices unassigned.

\section{Preliminaries}\label{S:prelim}
For $\ell \in \mathbb{N}$, let $[\ell]=\{1,\ldots, \ell\}$ and  $[0,\ell]=\set{0}\cup [\ell]$. Let $\mathbb{N}_0=\mathbb{N}\cup \set{0}$.

\smallskip
\noindent\textbf{Graph Theory.} We consider finite, simple, and connected graphs. For a graph $G$, let $V(G)$ and $E(G)$ denote the \textit{vertex set} and the \textit{edge set} of $G$, respectively. For a vertex $v\in V(G)$, let $N(v)$ denote the \textit{open neighborhood} of $v$, that is, $N(v) = \{u~|~uv\in E(G)\}$. For a subset $X\subseteq V(G)$, let $N_X(v)=N(v)\cap X$. For a subset $X\subseteq V(G)$, let $G[X]$ denote the subgraph of $G$ induced by vertices in $X$. Moreover, let $G-X$ denote the subgraph $G[V(G)\setminus X]$ of $G$. A set $U \subseteq V(G)$ is a \textit{vertex cover} of $G$ if for every edge in $G$, at least one of its endpoints is in $U$. The minimum cardinality of a vertex cover of $G$ is its \textit{vertex cover number} ($\mathsf{vcn}$). Given a graph $G$, let $d_G(v)$ be the \textit{degree} of the vertex $v$ in $G$, i.e., $d_G(v) = |N(v)|$. For standard graph theoretic terminology not defined explicitly in this paper, we refer to the book by Diestel~\cite{diestel}.% When clear from context, we denote the $\mathsf{vcn}$ of a graph by $t$.

\smallskip
\noindent\textbf{Incomplete Connected Fair Division.} An instance of \CFDP (\CFD) consists of $(G,A,$ $\mathcal{U},p)$ where $G$ is the \textit{utility graph} on $m$ vertices, $A = [n]$ is the set of $n$ \textit{agents}, $\mathcal{U}$ is the set of \textit{utility functions} $\{u_i: V(G) \rightarrow \mathbb{N}_0\}_{i\in [n]}$, and $p \in  [m]$. When clear from context, we denote $|V(G)|$ by $m$ and $|A|$ by $n$. Every vertex $v \in V(G)$ corresponds to an \textit{item}. It is a standard assumption in the literature, and we assume it too, that the agents have \textit{additive valuations}, i.e., for every $X\subseteq V(G)$ and for every $i\in A$, we have $u_i(X) = \sum_{v\in X}u_i(v)$. An \textit{allocation} of items is a tuple $\Pi = (\pi_1,\ldots, \pi_n)$ such that:
\begin{enumerate}
    \item For $i\in A$: $\pi_i \subseteq V(G)$, $|\pi_i| \geq 1$, and $G[\pi_i]$ is connected.
    \item For $i,j\in A$ such that $i\neq j$: $\pi_i\cap\pi_j = \emptyset$.
    \item $|\bigcup_{i\in A}\pi_i| = p$.
\end{enumerate}

We say that the \textit{bundle} $\pi_i$ is \textit{assigned} to agent $i$ in $\Pi$. For a bundle $\pi_i$, let $\tau_i = \{v\in \pi_i~|~ G[\pi_i\setminus \{v\}]\text{ is connected}\}$. Next, we have the following notions of \textit{fairness} (that we consider in this paper). An allocation $\Pi = (\pi_1,\ldots, \pi_n)$ is:

\begin{itemize}
    \item \textit{proportional} (\PROP) if for every $i\in A$, $u_i(\pi_i) \geq \frac{u_i(V(G))}{n}$;
    \item \textit{envy-free} (\EF) if for every $i,j \in A$, $u_i(\pi_i) \geq u_i(\pi_j)$;
    \item \textit{envy-free up to one item} (\EFO) if for every $i,j\in A$, $u_i(\pi_i) \geq u_i(\pi_j) - \max_{v\in \tau_j}u_i(v)$;
    \item \textit{envy-free up to any item} (\EFX) if for every $i,j\in A$, $u_i(\pi_i) \geq u_i(\pi_j) - \min_{v\in \tau_j}u_i(v)$.
\end{itemize}

For a fairness criterion $\varphi \in \{\mathsf{PROP}, \mathsf{EF}, \mathsf{EF1}, \mathsf{EFX}\}$, $\varphi-$\CFD~asks whether there exists an allocation $\Pi$ for $(G,A,\mathcal{U},p)$ satisfying $\varphi$. An allocation $\Pi$ that satisfies $\varphi$ is termed as $\varphi$-\textit{allocation}.
Agents $i$ and $j$ have \textit{same type} if for every $v\in V(G)$, $u_i(v) = u_j(v)$. Let $\mathcal{A}$ denote the set of all agent types, i.e., $\mathcal{A}$ is the partition of  $A$ such that agents $i$ and $j$ are in the same part $\mathsf{a} \in \mathcal{A}$ if and only if $i$ and $j$ have the same type. The type of an agent $i$ is the type $\mathsf{a}\in \mathcal{A}$ such that $i\in \mathsf{a}$; let $\mathsf{a}_i$ denote the type of agent $i$. Since the agents of each type have the same valuation function, we let $u_{\mathsf{a}}$ denote the valuation function $u_i$ where $i\in \mathsf{a}$. When $|\mathcal{A}| = 1$, we say that the agents have \textit{identical valuations} and for each vertex $v$ and agent $i$, we also denote $u_i(v)$ by $u(v)$ for brevity.
We say that an instance $(G,A,\mathcal{U},p)$ admits \textit{unary valuation} (resp., \textit{binary valuation}) if the valuations (in the valuation function $\mathcal{U}$) are encoded in unary (resp., binary), i.e., there is some polynomial (resp., exponential) function $f$ such that $\max_{i\in A, v\in V(G)}$ $u_i(v) \leq f(|V(G)|)$.

\smallskip
\noindent\textbf{Parameterized Complexity.}
In the framework of parameterized complexity, each instance of a problem $\Pi$ is associated with a non-negative integer \textit{parameter} $k$. A parametrized problem $\Pi_p$ is \textit{fixed-parameter tractable} ($\mathsf{FPT}$) if there is an algorithm that, given an instance $(I,k)$ of $\Pi_p$, solves it in time $f(k)\cdot |I|^{\mathcal{O}(1)}$, for some computable function $f(\cdot)$. Central to parameterized
complexity is the following hierarchy of complexity classes:
$
\mathsf{FPT} \subseteq \mathsf{W[1]}  \subseteq \mathsf{W[2]} \subseteq \ldots \subseteq \mathsf{XP}.
$
We note that \FPT $\neq$ W[1] under $\mathsf{ETH}$.

Two instances $I$ and $I'$ (possibly of different problems) are  \textit{equivalent} when $I$ is a Yes-instance if and only if $I'$ is a Yes-instance. A \textit{compression} of a parameterized problem $\Pi_1$ into a (possibly non-parameterized) problem $\Pi_2$ is a polynomial-time algorithm that maps each instance $(I,k)$ of $\Pi_1$ to an equivalent instance $I'$ of $\Pi_2$ such that size of $I'$ is bounded by $g(k)$, for a computable function $g(\cdot)$. If $g(\cdot)$ is polynomial, then the problem is said to admit a \textit{polynomial compression}. When $\Pi_1 = \Pi_2$, compression is also termed as {\em kernelization}.

A \textit{polynomial parameter transformation} from $\Pi_1$ to $\Pi_2$ is a polynomial time algorithm that given an instance $(I,k)$ of $\Pi_1$ generates an equivalent instance $(I',k')$ of $\Pi_2$
such that $k' \leq p(k)$, for some polynomial $p(\cdot)$. Here, if $\Pi_1$ does not admit a polynomial compression, then $\Pi_2$ does not admit a polynomial compression~\cite{bookParameterized}. We refer to the books by Cygan et al.~\cite{bookParameterized} and Fomin et al.~\cite{bookKernelization} for more details on parameterized complexity.

\subsection{Preliminary Results and Observations}\label{SS:P}
Consider an instance $(G,A,\mathcal{U},p)$ of \CFD. For each $i\in A$, we assume without loss of generality that there is at least one vertex $v$ such that $u_i(v) > 0$. Moreover, since each agent must be assigned at least one vertex, we can safely assume that $|A| >p$, otherwise $\varphi$-\CFD~becomes trivial for $\varphi \in \{\mathsf{PROP}, \mathsf{EF},\mathsf{EF1},\mathsf{EFX}\}$. We have the following basic observations and lemmas.

%\begin{observation}\label{O:trivial}
% Let $(G,A,\mathcal{U},p)$ be an instance of \CFD~such that $|A|>p$, then 
% \begin{enumerate}
%     \item \PROP-\CFD~and \EF-\CFD~does not exist, and
%     \item \EFO-\CFD~and \EF-\CFD~always exist.
% \end{enumerate}
% \end{observation}
%\begin{proof}
% \todo[inline]{Does our definition allows that some agent $i$ can be allocated empty set (i.e., $\pi_i = \emptyset$)? Then only this observation makes sense, otherwise $|A| \geq p$ is a trivial requirement. I guess we can assume that each $\pi_i$ is non-empty, and hence $p\geq a$.}
%   \end{proof}

\begin{observation}\label{O:identical}
Let $\Pi$ be an allocation for an instance $(G,A,$ $\mathcal{U},p)$ of \EF-\CFD~with identical valuations (i.e., $|\mathcal{A}| = 1$). Then, $\Pi$ admits \EF if and only if for any two agents $i,j\in A$, $u(\pi_i)=u(\pi_j)$. 
\end{observation}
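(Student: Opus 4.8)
The plan is to unfold the definition of envy-freeness under the assumption of identical valuations and observe that the family of pairwise inequalities collapses to equalities precisely because the defining inequality is quantified over \emph{every ordered pair} of agents. Since $|\mathcal{A}| = 1$, there is a single valuation function $u$ with $u_i = u$ for all $i \in A$, so the defining condition of \EF, namely $u_i(\pi_i) \geq u_i(\pi_j)$ for every $i,j \in A$, becomes simply $u(\pi_i) \geq u(\pi_j)$ for every ordered pair $(i,j)$.

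For the forward direction, I would assume that $\Pi$ admits \EF and fix an arbitrary pair $i,j \in A$. Instantiating the envy-freeness condition with the pair $(i,j)$ gives $u(\pi_i) \geq u(\pi_j)$, and instantiating it with the reversed pair $(j,i)$ gives $u(\pi_j) \geq u(\pi_i)$. Combining the two inequalities yields $u(\pi_i) = u(\pi_j)$, which is the desired conclusion. Since $i$ and $j$ were arbitrary, this holds for all pairs.

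For the converse, I would assume that $u(\pi_i) = u(\pi_j)$ for every pair $i,j \in A$. Then in particular $u(\pi_i) \geq u(\pi_j)$ holds for every ordered pair, which is exactly the \EF condition under identical valuations; hence $\Pi$ admits \EF.

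There is no genuine obstacle here: the only point worth emphasizing is that envy-freeness is a symmetric-looking but directional condition quantified over all ordered pairs, so that applying it to both $(i,j)$ and $(j,i)$ is what forces equality rather than a one-sided inequality. The additivity of the valuations and the structural constraints on $\Pi$ (connectivity of each $G[\pi_i]$, pairwise disjointness, and $|\bigcup_{i\in A}\pi_i| = p$) play no role in this argument and need not be invoked.
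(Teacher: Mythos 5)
Your proof is correct and is essentially the same argument as the paper's: both directions amount to unfolding the \EF definition under a single valuation function, with the only cosmetic difference that the paper proves the forward direction by contradiction (assuming $u(\pi_i)<u(\pi_j)$ and noting agent $i$ envies $j$) whereas you instantiate the condition directly on both ordered pairs $(i,j)$ and $(j,i)$ to force equality. Your remark that the ordered-pair quantification is what collapses the inequalities to equalities is exactly the content of the paper's observation.
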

\begin{proof}
In one direction, suppose $\Pi$ satisfies \EF. Targeting contradiction, assume that there are two agents $i$ and $j$ such that $u(\pi_i)\neq u(\pi_j)$. Without loss of generality assume that $u(\pi_i)< u(\pi_j)$. Then, observe that the agent $i$ is envious of the agent $j$.

The reverse direction is rather easy to see as it directly satisfies the condition for \EF~(i.e., for every $i,j\in A$, $u_i(\pi_i) \geq u_i(\pi_j)$). 
  \end{proof}

% In the following lemma, we prove that if $\varphi$-\CFD, where $\varphi \in \{ \mathsf{EF},\mathsf{EF1},\mathsf{EFX}\}$, exists for a subgraph $H$ of $G$, then it exists for $G$ as well. In particular, we have the following lemma. 

\begin{lemma}\label{L:subgraph}
Consider an instance $(G,A,\mathcal{U},p)$ of $\varphi$-\CFD~where $\varphi \in \{ \mathsf{EF},\mathsf{EF1},\mathsf{EFX}\}$. Let $H$ be an induced subgraph of $G$. If $(H,A,\mathcal{U},p)$ is a Yes-instance of $\varphi$-\CFD, then $(G,A,\mathcal{U},p)$ is a Yes-instance of $\varphi$-\CFD. 
\end{lemma}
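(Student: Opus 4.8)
The plan is to show that any valid $\varphi$-allocation for the smaller graph $H$ transfers verbatim to $G$. First I would observe that since $H$ is an induced subgraph of $G$, we have $V(H) \subseteq V(G)$ and, crucially, $E(H) = E(G[V(H)])$; hence for every set $X \subseteq V(H)$, the induced subgraph $H[X]$ is identical to $G[X]$. This means connectivity of a bundle is preserved: if $H[\pi_i]$ is connected, then so is $G[\pi_i]$. The valuation functions $\mathcal{U}$ are the same in both instances (they are defined on $V(G) \supseteq V(H)$), and the number of agents $A$ and the target $p$ are unchanged.

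Concretely, let $\Pi = (\pi_1,\ldots,\pi_n)$ be a $\varphi$-allocation witnessing that $(H,A,\mathcal{U},p)$ is a Yes-instance. I would verify directly that $\Pi$ is also a valid allocation for $(G,A,\mathcal{U},p)$: each $\pi_i \subseteq V(H) \subseteq V(G)$ with $|\pi_i| \geq 1$; the bundles remain pairwise disjoint; $|\bigcup_{i\in A}\pi_i| = p$ is a set-theoretic condition independent of the ambient graph; and, as noted above, $G[\pi_i] = H[\pi_i]$ is connected. Thus all three conditions in the definition of an allocation hold in $G$.

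For the fairness condition, the key point is that each of \EF, \EFO, and \EFX is defined purely in terms of the quantities $u_i(\pi_i)$, $u_i(\pi_j)$, and the sets $\tau_j$ derived from the bundles. Since the bundles $\pi_1,\ldots,\pi_n$ are literally the same sets in both instances and the valuation functions agree, every term appearing in the defining inequalities is unchanged; note too that $\tau_j = \{v \in \pi_j \mid G[\pi_j \setminus \{v\}]\text{ connected}\}$ coincides with its $H$-counterpart because $G[\pi_j\setminus\{v\}] = H[\pi_j\setminus\{v\}]$. Hence $\Pi$ satisfies $\varphi$ in $G$ exactly when it does in $H$, and $(G,A,\mathcal{U},p)$ is a Yes-instance.

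There is really no hard step here; the statement is essentially a monotonicity observation, and the only subtlety worth flagging is \emph{why} it is restricted to $\varphi \in \{\mathsf{EF},\mathsf{EF1},\mathsf{EFX}\}$ and excludes \PROP. The reason is that \PROP compares each agent's utility against $\frac{u_i(V(G))}{n}$, a threshold that \emph{depends on the whole vertex set} $V(G)$; passing from $H$ to the larger $G$ can only increase $u_i(V(\cdot))$ (since utilities are nonnegative), so a \PROP-allocation for $H$ need not remain \PROP for $G$. I would make sure the proof only invokes the envy-based comparisons, which are insensitive to vertices outside the allocated bundles, and therefore do not see the extra vertices of $G \setminus H$.
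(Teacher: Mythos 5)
Your proof is correct and takes essentially the same route as the paper's: lift the $\varphi$-allocation for $H$ verbatim to $G$, using that $H$ induced in $G$ gives $G[X]=H[X]$ for every $X\subseteq V(H)$ (so bundle connectivity and the sets $\tau_j$ are preserved), and that \EF, \EFO, and \EFX depend only on bundle-internal quantities. Your closing remark on why \PROP is excluded likewise matches the paper's own discussion immediately after the lemma, so there is nothing to fix.
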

\begin{proof}
Let $\Pi = (\pi_1, \ldots, \pi_n)$ be an allocation for  $(H,A,\mathcal{U},p)$ that satisfies $\varphi$ ($\varphi \in \{ \mathsf{EF},\mathsf{EF1},\mathsf{EFX}\}$). First, observe that $\Pi$ is also an allocation for $(G,A,\mathcal{U},p)$ (since for $i\in A$, each $G[\pi_i]$ is connected, $|\cup \pi_i|=p$, and for distinct $i,j\in A$, $\pi_i \cap \pi_j =\emptyset)$. Moreover, it is easy to see that the allocation $\Pi$ for $(G,A,\mathcal{U},p)$ satisfies $\varphi$.
  \end{proof}

We remark here that for an induced subgraph $H$ of $G$, it might happen that $(H,A,\mathcal{U},p)$ is a Yes-instance of \PROP-\CFD~but $(G,A,\mathcal{U},p)$ is a No-instance of \PROP-\CFD. To see this, consider a Yes-instance $(H,A,\mathcal{U},p)$ of \PROP-\CFD~such that $H = G-\{v\}$ and $|A|>1$. Now, for each agent $i\in A$, let $u_i(v) = (n+1)\sum_{x\in V(H)} u_i(x)$. Now, consider the instance $(G,A,\mathcal{U},p)$. Since $|A|>1$, there is at least one agent, say, $j$, such that the vertex $v$ is not assigned to the agent $j$. Observe that even if $\pi_j = V(G)\setminus \set{v}$, $u_j(\pi_j) < \frac{u_j(V(G))}{n}$ (assuming $u_j(V(H)) > 0$). Hence, $(G,A,\mathcal{U},p)$ is a No-instance. 
Next, we have the following observation concerning \PROP-\CFD~that we will use later.

\begin{observation}\label{O:proper}
Let $p<p' \leq |V(G)|$. Moreover, let $(G,A,\mathcal{U},p)$ be a Yes-instance of \PROP-\CFD. Then, $(G,A,\mathcal{U},p')$ is a Yes-instance of \PROP-\CFD~as well.
\end{observation}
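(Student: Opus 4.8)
The plan is to show directly that any \PROP-allocation for $(G,A,\mathcal{U},p)$ can be extended to a \PROP-allocation for $(G,A,\mathcal{U},p')$ by adding the required number of extra vertices without destroying connectivity, and then observing that adding vertices can never decrease any agent's utility. First I would let $\Pi = (\pi_1,\ldots,\pi_n)$ be a \PROP-allocation witnessing that $(G,A,\mathcal{U},p)$ is a Yes-instance, so that $|\bigcup_{i\in A}\pi_i| = p$ and $u_i(\pi_i) \geq \frac{u_i(V(G))}{n}$ for every $i\in A$. Set $S = \bigcup_{i\in A}\pi_i$, the set of currently assigned vertices, and note $|S| = p < p' \leq |V(G)|$, so $V(G)\setminus S \neq \emptyset$.

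The key step is to grow one bundle to absorb the extra $p'-p$ vertices while preserving connectivity. Since $G$ is connected, as long as $S \subsetneq V(G)$ there is an edge between $S$ and $V(G)\setminus S$; pick such an edge with endpoint $v\in V(G)\setminus S$ adjacent to some $u\in \pi_j$ for some agent $j$. I would add $v$ to $\pi_j$: the resulting set $\pi_j \cup \{v\}$ is still connected because $v$ attaches to the connected subgraph $G[\pi_j]$ via $u$. Repeating this augmentation exactly $p'-p$ times (each time using connectivity of $G$ to find a new boundary vertex, which exists because we stop before exhausting $V(G)$ since $p' \leq |V(G)|$) yields a new allocation $\Pi' = (\pi'_1,\ldots,\pi'_n)$ with $\pi_i \subseteq \pi'_i$ for all $i$, with each $G[\pi'_i]$ connected, with the bundles still pairwise disjoint, and with $|\bigcup_{i\in A}\pi'_i| = p'$.

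It remains to verify \PROP for $\Pi'$. Because valuations are additive and non-negative ($u_i\colon V(G)\to\mathbb{N}_0$), $\pi_i \subseteq \pi'_i$ gives $u_i(\pi'_i) = \sum_{x\in\pi'_i} u_i(x) \geq \sum_{x\in\pi_i} u_i(x) = u_i(\pi_i) \geq \frac{u_i(V(G))}{n}$ for every $i\in A$. Hence $\Pi'$ is a valid allocation assigning exactly $p'$ vertices that satisfies the proportionality condition, so $(G,A,\mathcal{U},p')$ is a Yes-instance.

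The only subtlety, and the step I would state most carefully, is the connectivity-preserving augmentation: one must argue that a boundary edge always exists throughout the process. This follows from the global connectivity of $G$ together with the invariant that the assigned set $S$ is a nonempty proper subset of $V(G)$ at every step before the last, which holds precisely because $p \geq n \geq 1$ ensures $S\neq\emptyset$ and $p' \leq |V(G)|$ ensures we never need more vertices than are available. There is no real obstacle here; the statement is robust because monotonicity of additive non-negative valuations makes enlarging a bundle harmless for \PROP, in sharp contrast to the \EF-type notions where enlarging one agent's bundle can create envy in others.
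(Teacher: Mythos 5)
Your proof is correct and matches the paper's argument in essence: the paper proves the same statement by induction on the number of assigned vertices, adding one unassigned vertex adjacent to some bundle at each step using the connectivity of $G$, exactly your iterative boundary-edge augmentation. Your explicit remarks on monotonicity of additive non-negative valuations and on why a boundary edge always exists are slightly more detailed than the paper's one-line justification, but they formalize the same reasoning rather than offering a different route.
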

\begin{proof}
We will prove this using an inductive argument. The base case $(G,A,\mathcal{U},p)$ is a Yes-instance by our assumption. Next, let $(G,A,\mathcal{U},k)$ is a Yes-instance for some $p\leq k \leq m$ (where $m = |V(G)|$).  Then, we show that  $(G,A,\mathcal{U},k+1)$ is a Yes-instance (where $k+1\leq m$). Let $\Pi=(\pi_1,\ldots,\pi_n)$ be a \PROP~allocation for $(G,A,\mathcal{U},k)$. Since $G$ is a connected graph and $k<m$, observe that there is at least one vertex $v$ such that $v$ is not assigned to any agent in $\Pi$, and there is some $\pi_i\in \Pi$ such that $G[\pi_i\cup v]$ is connected. Observe that $\Pi'=(\pi'_1,\ldots,\pi'_n)$ where $\pi'_i=\pi'_i\cup \set{v}$ and for $j\neq i$, $\pi'_j=\pi_j$ is an allocation for $(G,A,\mathcal{U},k+1)$ satisfying \PROP. This completes our proof.
  \end{proof}
% Finally, we have the following observation. \begin{observation}\label{O:smallp}
% For $p<p'\leq V(G)$, if $(G,A,\mathcal{U},p)$ is a Yes-instance of \PROP-\CFD, then $(G,A,\mathcal{U})$
% \end{observation}

% \section{Our Results}
% \subsection{\textsf{NP}-hardness on Star Graphs}
% We can use the previously known construction to show that our problem is \textsf{NP}-hard for star graphs. 

Finally, consider the following trivial observation that follows directly from the definition of \EFO-\CFD and \EFX-\CFD.

\begin{observation}\label{O:PreEFO}
Consider an allocation $\Pi = (\pi_1,\ldots,\pi_n)$ for an instance $(G,A,\mathcal{U},p)$. If there are two agents $i$ and $j$ such that for each vertex $v\in \tau_j$, $u_i(\pi_i) < u_i(\pi_j)- u_i(v)$, then  $\Pi$ is neither an \EFO allocation nor an \EFX allocation. 
\end{observation}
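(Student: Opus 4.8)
The plan is to observe that this statement follows by simply instantiating the hypothesis at the extremal vertices of $\tau_j$ and comparing against the defining inequalities of \EFO and \EFX. The hypothesis asserts that the strict inequality $u_i(\pi_i) < u_i(\pi_j) - u_i(v)$ holds simultaneously for \emph{every} $v \in \tau_j$; in particular it must hold at whichever vertex of $\tau_j$ maximizes $u_i(v)$ and at whichever vertex minimizes it. Each of these two specializations is exactly the negation of one of the two fairness conditions, so no further work is needed beyond selecting the right witness.

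More precisely, first I would recall that by definition $\Pi$ is an \EFO allocation only if $u_i(\pi_i) \geq u_i(\pi_j) - \max_{v\in\tau_j} u_i(v)$ for the ordered pair $(i,j)$. Choosing $v^\star \in \tau_j$ with $u_i(v^\star) = \max_{v\in\tau_j} u_i(v)$ and applying the hypothesis at $v = v^\star$ gives $u_i(\pi_i) < u_i(\pi_j) - u_i(v^\star) = u_i(\pi_j) - \max_{v\in\tau_j} u_i(v)$, which directly contradicts the \EFO requirement for this pair. Hence $\Pi$ is not an \EFO allocation.

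Symmetrically, $\Pi$ is an \EFX allocation only if $u_i(\pi_i) \geq u_i(\pi_j) - \min_{v\in\tau_j} u_i(v)$ for $(i,j)$. Picking $v_\star \in \tau_j$ with $u_i(v_\star) = \min_{v\in\tau_j} u_i(v)$ and applying the hypothesis at $v = v_\star$ yields $u_i(\pi_i) < u_i(\pi_j) - u_i(v_\star) = u_i(\pi_j) - \min_{v\in\tau_j} u_i(v)$, contradicting the \EFX requirement. Thus $\Pi$ is neither \EFO nor \EFX.

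There is essentially no obstacle here: the only point worth recording is that the maximum and minimum over $\tau_j$ are well defined, which holds because $\tau_j \neq \emptyset$ for any bundle $\pi_j$ (and indeed the premise already quantifies over $v \in \tau_j$, so the claim is vacuous otherwise). The whole argument is a single evaluation of the hypothesis at the two extremizers, matched against the two definitions.
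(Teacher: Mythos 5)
Your proof is correct and matches the paper exactly in spirit: the paper states this observation without proof, calling it trivial and ``directly from the definition'' of \EFO and \EFX, and your argument---instantiating the universally quantified hypothesis at the maximizer $v^\star$ and minimizer $v_\star$ of $u_i$ over $\tau_j$ to negate the two defining inequalities---is precisely that direct argument spelled out. Your side remark that $\tau_j \neq \emptyset$ (so the extremizers exist) is also consistent with how the paper treats bundles throughout, e.g.\ singleton bundles with the unique vertex in $\tau_j$.
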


\section{W[1]-hardness Results}
In this section, we show that $\varphi$-\CFD~where $\varphi \in \{ \mathsf{EF},\mathsf{EF1},\mathsf{EFX}\}$ is W[1]-hard when parameterized by $p+\mathsf{vcn}+|A|$. %We remark that in Section~\ref{S:ColorProp}, we show that \PROP-\CFD is admits a randomized \FPT algorithm parameterized only by $p$.
For this purpose, we first define the following problem. The \ksum problem is to determine, given $N$ integers $a_1,\ldots, a_N \in [0,M]$ and a target integer $t\in [0,M]$, whether there exists $S\subseteq N$ such that $|S|=k$ and $\sum_{i\in S} a_i = t$. We also assume that $2\leq k\leq N-2$ (as otherwise, we can solve the problem in polynomial time). Abboud et al.~\cite{kSUM} proved that \ksum is W[1]-complete parameterized by $k$ even when $M= N^{2k}$:

\begin{proposition}[\cite{kSUM}]\label{P:ksum}
$(k,N^{2k})$-\textsc{SUM} parameterized by $k$ is $W[1]$-complete.
\end{proposition}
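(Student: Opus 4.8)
The plan is to establish the two directions of $W[1]$-completeness separately, namely membership in $W[1]$ and $W[1]$-hardness, with the latter carrying essentially all of the work. For membership, I would give an fpt-reduction from \ksum (parameterized by $k$) to weighted satisfiability of weft-$1$ circuits (equivalently, to \textsc{Independent Set} parameterized by solution size)~\cite{bookParameterized}. The key enabling fact here is precisely the bound $M=N^{2k}$: each input integer has only $O(k\log N)$ bits, and a candidate solution is a choice of exactly $k$ of the $N$ integers, i.e.\ a weight-$k$ witness. One encodes the selection by $N$ boolean variables constrained to weight exactly $k$ and checks the linear constraint $\sum_{i\in S}a_i=t$ on the $O(k\log N)$ relevant bits; the bounded bit-length is what lets this verification be cast with bounded weft. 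I expect this direction to be routine, and I would not dwell on it.

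For hardness I would reduce from \textsc{Multicolored Clique} (partitioned clique), which is $W[1]$-hard parameterized by the number $k$ of color classes~\cite{bookParameterized}. Given a graph with classes $V_1,\dots,V_k$, each of size $n$ with local indices in $[n]$, and edge set $E$, I would build a \ksum instance whose solution size is $k'=k+\binom{k}{2}$ (still a function of $k$, as required for a parameterized reduction). Each integer is specified by its base-$B$ digits, grouped into blocks: a \emph{vertex-selection} block with one digit per class, an \emph{edge-selection} block with one digit per pair of classes, and an \emph{incidence} block with one digit per ordered pair $(i,j)$, $i\neq j$. A vertex $v\in V_i$ of index $d$ contributes $1$ to the selection digit of class $i$ and $d$ to each incidence digit $(i,j)$; an edge $uv$ with $u\in V_i,\,v\in V_j$ contributes $1$ to the selection digit of the pair $\{i,j\}$ and the \emph{complemented} indices $C-\mathrm{id}_i(u)$ and $C-\mathrm{id}_j(v)$ to incidence digits $(i,j)$ and $(j,i)$ respectively, for a fixed constant $C>n$. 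The target $t$ has a $1$ in every selection digit and $C$ in every incidence digit.

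The correctness argument then runs as follows. Since the total number of chosen integers is fixed at $k'$, the selection digits summing to $1$ force exactly one vertex per class and exactly one edge per pair to be selected, consuming all $k'$ slots. Consequently each incidence digit $(i,j)$ equals $d_i+\bigl(C-e_i\bigr)$, where $d_i$ is the index of the chosen $V_i$-vertex and $e_i$ is the index of the chosen $\{i,j\}$-edge's $V_i$-endpoint; this equals the target $C$ if and only if $d_i=e_i$, i.e.\ the chosen edge is incident to the chosen vertex in class $i$. Imposing this for both $(i,j)$ and $(j,i)$ forces the chosen edges to be exactly the edges among the chosen vertices, which is a multicolored clique; conversely a clique yields a valid subset sum. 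To make this digit-wise reasoning sound I would choose $B$ strictly larger than the largest single-digit total achievable by any $k'$ of the integers (which is $O(k'\cdot C)=O(k^2 n)$), guaranteeing that no digit overflows into its neighbor; this no-carry property is the linchpin of the argument.

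Finally, for the magnitude bound there are $O(k^2)$ digits and $B=(kn)^{O(1)}$, so each integer is at most $M=(kn)^{O(k^2)}$, while the number of integers is $N=|V|+|E|=\Theta(k^2 n^2)$; hence $M=N^{\Theta(k')}$, and by taking $B$ as small as the no-carry condition permits one lands in the regime $M\le N^{2k'}$ claimed by the proposition. The step I expect to be the main obstacle is the hardness reduction, on two fronts: (i) designing the incidence encoding so that a valid subset sum is \emph{forced} to be a clique, where the decisive idea is the complementation against a fixed $C$ so that vertex and edge contributions cancel to $C$ exactly under incidence; and (ii) balancing $B$ so that carries are impossible while the integers stay small enough to meet the $M\le N^{2k}$ bound. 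Membership, by contrast, should follow routinely from the $O(k\log N)$ bit-length together with the standard weft-$1$ normalization.
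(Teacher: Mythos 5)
First, note that the paper itself does not prove this proposition: it is quoted directly from Abboud, Lewi and Williams \cite{kSUM}, so your attempt must be measured against their proof. Your hardness direction is essentially the standard multicolored-clique-to-subset-sum digit encoding and is sound: with base $B$ chosen larger than any column sum, no carries occur, the selection digits force exactly one vertex per class and one edge per pair among the $k'=k+\binom{k}{2}$ chosen numbers, and the complementation against the constant $C$ enforces incidence in both directions; the magnitude bookkeeping $M\le N^{2k'}$ also works out (modulo a harmless assumption like $n\ge k^{2}$, which one can enforce by brute-forcing small instances). This matches the hardness half of the cited result in spirit.

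The genuine gap is the membership direction, which you dismiss as routine; it is in fact the main contribution of \cite{kSUM}. Knowing that each number has $O(k\log N)$ bits does not let you check $\sum_{i\in S}a_i=t$ with a weft-$1$ circuit: over the $N$ selection variables this check is an iterated addition, i.e., a carry-propagating, threshold-like computation, and no constant-depth circuit with a single layer of unbounded fan-in gates per path is known to express it. The naive alternative---a single large OR ranging over all $k$-subsets whose sum equals $t$---has size $N^{\Theta(k)}$ and therefore is not a legal fpt-reduction. This is precisely why $W[1]$-membership of $(k,M)$-\textsc{SUM} was nontrivial and why Abboud, Lewi and Williams needed their ``losing weight by gaining edges'' machinery: they first shrink the numbers via Chinese remaindering (here the bound $M=N^{2k}$ is essential, yielding $O(k)$ residues of $O(\log N)$ bits each), pass to a vector-sum formulation, and then give a genuinely clever encoding of that \emph{global} linear constraint into a clique instance, since global constraints do not decompose into pairwise edge constraints for free. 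Without an argument of this kind, your proposal establishes only $W[1]$-hardness, not $W[1]$-completeness as the proposition asserts.
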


\subsection{\EF-\CFD}
First, we prove that \EF-\CFD is W[1]-hard parameterized by $p+\mathsf{vcn}+|A|$. To this end, we provide a parameterized reduction from \ksum to \EF-\CFD~on a star graph such that $p = k+2$ and $|A|=2$. First, we explain our construction.

\medskip
\noindent\textbf{Construction.} Let $(a_1,\ldots,a_N,t)$ be an instance of \ksum. We create an instance $(G,A,\mathcal{U},p)$ of \EF-\CFD in the following manner. Let $G$ be a star on $N+2$ vertices such that $V(G) = \{v_1,\ldots, v_N\} \cup \{d_1,d_2\}$ and $E(G) = \{d_1d_2\} \cup \set{d_1v_i~|~i\in[N]}$ (i.e., $G$ is a star with $N+1$ leaves and $d_1$ as the center vertex). Let $A = \set{1,2}$. Moreover, for $j\in A$ and $i\in [N]$, let $u_j(v_i) = a_i$, $u_j(d_1)=1+ \sum a_i$, and $u_j(d_2) = u_j(d_1)+t$ (here, $|\mathcal{A}|=1$). Finally, we set $p=k+2$. Since $k\geq 2$, note that $p\geq 4$. %Since the valuations are identical, we will use $u(v)$ instead of $u_i(v)$ (for $v\in V(G)$ and $i\in A$) through out this proof.

Now, we have the following lemma.
\begin{lemma}\label{L:ksum}
$(a_1,\ldots,a_N,t)$ is a Yes-instance of \ksum if and only if $(G,A,\mathcal{U},p)$ is a Yes-instance of \EF-\CFD.
\end{lemma}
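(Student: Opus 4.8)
The plan is to establish the biconditional by exploiting the structure of the star graph together with the identical valuations, which by Observation~\ref{O:identical} reduce \EF to the requirement that both agents receive bundles of exactly equal utility. I would first record two structural facts about any valid allocation on $G$. Since $G$ is a star with center $d_1$ and leaves $\{v_1,\ldots,v_N,d_2\}$, any connected bundle of size at least $2$ must contain the center $d_1$; since the two bundles are disjoint, at most one agent can hold $d_1$, hence the other agent must receive a single-vertex bundle. Thus in any allocation exactly one agent, say agent~$2$, gets a singleton $\{x\}$, and agent~$1$ gets a connected set containing $d_1$ and $p-1 = k+1$ further leaves.

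For the forward direction, suppose $S \subseteq [N]$ with $|S|=k$ and $\sum_{i\in S} a_i = t$. I would assign to agent~$1$ the bundle $\pi_1 = \{d_1\} \cup \{d_2\} \cup \{v_i : i \in S\}$ and to agent~$2$ the singleton $\pi_2 = \{x\}$ for a suitable leaf $x$; the natural choice is to give agent~$2$ a vertex of weight exactly $u(\pi_1)$. Computing, $u(\pi_1) = u(d_1) + u(d_2) + \sum_{i\in S} a_i = u(d_1) + (u(d_1)+t) + t$, which I will need to match. The weights of $d_1$ and $d_2$ were chosen precisely so that the equal-utility condition forces $S$ to sum to $t$, so the key design check is that the single vertex handed to agent~$2$ can be made to equal $u(\pi_1)$; I would verify the arithmetic makes the two sides coincide exactly when $\sum_{i\in S}a_i = t$, and then invoke Observation~\ref{O:identical} to conclude \EF.

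For the reverse direction, I would start from an arbitrary \EF allocation. By the structural facts, one agent holds a singleton $\{x\}$ and the other holds $d_1$ together with $k+1$ leaves; by Observation~\ref{O:identical} the two utilities are equal, $u(x) = u(\pi_1)$. The crucial quantitative step is to argue that the only way a single leaf can be as valuable as a $(k+2)$-vertex bundle containing the heavy center $d_1$ is when $x = d_2$ (the unique largest-weight vertex), and correspondingly that $d_2$ must lie inside the large bundle or be the singleton. I would use the inequalities $u(d_1) = 1+\sum a_i > \sum a_i$ and $u(d_2) = u(d_1)+t$ to pin down which vertices can play the roles of $x$ and of the remaining leaves: because $u(d_1)$ strictly exceeds the total weight of all the $v_i$, no collection of the small leaves can compensate for the presence or absence of a center-weight item, which localizes the heavy vertices $d_1,d_2$ to specific positions. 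Once the placement is forced, the equation $u(x)=u(\pi_1)$ collapses to $\sum_{i\in S} a_i = t$ where $S$ indexes the $k$ small leaves in the large bundle, yielding the desired subset.

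The main obstacle I anticipate is the reverse direction's case analysis: I must rule out every allocation in which the heavy vertices $d_1$ and $d_2$ are distributed in a way other than the intended one, and show such allocations cannot satisfy the equal-utility condition. This is where the specific weight choices $u(d_1)=1+\sum a_i$ and $u(d_2)=u(d_1)+t$ earn their keep, since the ``$+1$'' gap and the placement of $t$ are engineered to make the balancing equation solvable only via a genuine size-$k$ subset summing to $t$. I would carry out this bookkeeping carefully, checking that $k+2 = p \le N+2$ so that a valid allocation of the required sizes exists, and confirming that the singleton for agent~$2$ must be $d_2$ rather than some light leaf, which the strict inequalities guarantee.
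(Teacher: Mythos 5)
Your reverse direction follows the paper's argument in outline and is sound: in a star with $p=k+2\geq 4$, one agent must hold a singleton $\{x\}$ and the other must hold $d_1$; Observation~\ref{O:identical} forces equal utilities; since every leaf $v_i$ satisfies $u(v_i)=a_i\leq \sum_j a_j < 1+\sum_j a_j = u(d_1)\leq u(\pi_1)$, the singleton must be $d_2$, and the balancing equation $u(d_2)=u(d_1)+\sum_{i\in S}a_i$ then collapses to $\sum_{i\in S}a_i=t$ with $|S|=k$. (Watch the bookkeeping, though: the large bundle has $p-1=k+1$ vertices in total, i.e.\ $d_1$ plus $k$ leaves, not ``$d_1$ and $k+1$ further leaves'' or a ``$(k+2)$-vertex bundle'' as you write in two places; your final count of $k$ small leaves is the correct one.)

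The genuine gap is in your forward direction: the allocation you propose fails. You set $\pi_1 = \{d_1\}\cup\{d_2\}\cup\{v_i : i\in S\}$ and $\pi_2=\{x\}$ for a leaf $x$ of weight $u(\pi_1)$. First, this allocates $|S|+3 = k+3 \neq p$ vertices, violating the size constraint (and contradicting your own structural fact that the non-singleton bundle has $p-1$ vertices). Second, the ``suitable leaf'' does not exist: $u(\pi_1) = u(d_1)+\bigl(u(d_1)+t\bigr)+t = 2u(d_1)+2t$, while the heaviest vertex in the graph is $d_2$ with weight $u(d_1)+t$, so no singleton can match $u(\pi_1)$, and the arithmetic you defer to verification cannot be made to work. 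The correct allocation --- which your own reverse-direction analysis actually forces --- is the paper's: $\pi_1=\{d_1\}\cup\{v_i : i\in S\}$ and $\pi_2=\{d_2\}$, giving $u(\pi_1)=u(\pi_2)=1+\sum_{i\in[N]}a_i+t$ and $|\pi_1|+|\pi_2|=k+2=p$, after which Observation~\ref{O:identical} yields \EF. The fix is immediate, but as written the forward construction is not a valid allocation and does not satisfy \EF.
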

\begin{proof}
In one direction, suppose $(a_1,\ldots,a_N,t)$ is a Yes-instance of \ksum and $S\subseteq [N]$ is a set such that $|S|=k$ and $\sum_{i\in S} a_i = t$. Consider the allocation $\Pi = (\pi_1,\pi_2)$ as $\pi_1 = \set{d_1} \cup \set{v_i~|~ i\in S}$ and $\pi_2 = d_2$. 
Note that here $u(\pi_1) = u(\pi_2) = t+\sum_{i\in [N]} a_i$. Since the valuations are identical, due to Observation~\ref{O:identical}, $\Pi$ satisfies \EF. 

In the other direction, suppose $(G,A,\mathcal{U},p)$ is a Yes-instance of \EF-\CFD~and $\Pi=(\pi_1,\pi_2)$ is an allocation for $(G,A,\mathcal{U},p)$ satisfying \EF. Since $G$ is a star graph and $p\geq 4$, note that at least one of the agents is assigned at least two vertices and that agent is assigned vertex $d_1$ as well. Without loss of generality, let us assume that $d_1\in \pi_1$. Then, note that $u(\pi_1) \geq 1+\sum_{i\in [N]} u(v_i)$. Now, we claim that the vertex $d_2\in \pi_2$. Targeting contradiction, assume that $d_2\notin \pi_2$. Then, $u(\pi_2) \leq \sum_{i\in [N]} u(v_i) < u(\pi_1)$. This contradicts the fact that $\Pi$ satisfies \EF~(due to Observation~\ref{O:identical}). 

Hence, $d_2\in \pi_2$. Since $d_1$ is assigned to $\pi_1$, note that no other vertex can be assigned to $\pi_2$ while keeping $G[\pi_2]$ connected. Hence, $\pi_2 = \set{d_2}$ and $u(\pi_2) = u(d_1)+t$. Since $|\pi_2| = 1$, note that $|\pi_1|= p-1 = k+1$. Hence, exactly $k$ vertices are assigned to $\pi_i$ other than $d_1$. Moreover, since $\Pi$ satisfies \EF, due to Observation~\ref{O:identical}, we have $u(\pi_1) = u(\pi_2)$. Therefore, there is a set $S\subseteq [N]$ such that $|S| = k$ and $\sum_{i\in S} u(v_i) = u(\pi_2)-u(d_1) = t$. This implies that $\sum_{i\in S} a_i = t$ (since $a_i = u(v_i)$). Hence $(a_1,\ldots, a_N,t)$ is a Yes-instance. 
  \end{proof}

% \begin{theorem}\label{T:WHard}
% \EF-\CFD~is $W[1]$-hard parameterized by $p+|A|$ even for star graphs.
% \end{theorem}
We have the following theorem as a consequence of  Proposition~\ref{P:ksum} and Lemma~\ref{L:ksum}.

\EFHard*
\begin{proof}
Note that in our construction, $G$ is a star graph, $p=k+2$, and $|A|=2$. The rest of the proof follows from our construction, Proposition~\ref{P:ksum}, and Lemma~\ref{L:ksum}. Moreover, the problem remains W[1]-hard parameterized by $p+|A|$ even when the maximum valuation is bounded by $n^{2p}$ (binary valuations).
  \end{proof}

\subsection{\EFO-\CFD and \EFX-\CFD}
Now, we prove that \EFO-\CFD and \EFX-\CFD are W[1]-hard when parameterized by $p+\mathsf{vcn}+|A|$, even when both $\mathsf{vcn}$ and $|A|$ are small constants. Here also, we provide a parameterized reduction from \ksum to \EFO-\CFD (resp.,\EFX-\CFD) on a graph with $\mathsf{vcn}= 2$, while $|A| = 3$, and $p = k+6$. First, we explain our construction. 

\begin{figure}
    \centering
    \includegraphics{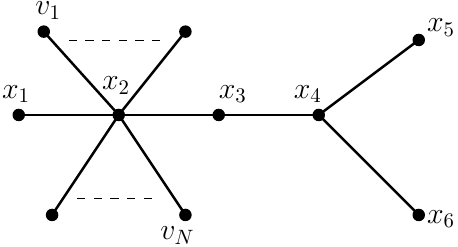}
    \caption{Illustration of the construction of graph $G$.}
    \label{fig:construction}
\end{figure}
\medskip
\noindent\textbf{Construction.} Let $(a_1, \ldots, a_N,t)$ be an instance of \ksum. First, we will define how to construct the graph $G$ as it is the same for both \EFO-\CFD and \EFX-\CFD.  Consider a path on five vertices $x_1,\ldots,x_5$ and attach a vertex $x_6$ to $x_4$. Now, we add $N$ vertices $v_1,\ldots,v_N$ and attach them to $x_2$ by an edge. More formally, let $V(G) = \{ x_1,\ldots, x_6, v_1,\ldots, v_N\}$ and $E(G) = \{x_2v_i~|~i\in [N]\} \cup \{x_4x_6 \} \cup \{x_ix_{i+1}~|~1\leq i\leq 4 \}$. See Figure~\ref{fig:construction} for a reference. Next, we have $A=[3]$ and $p=k+6$. Finally, we define the valuation functions $\mathcal{U}$ and $\mathcal{U}'$ for \EFO-\CFD and \EFX-\CFD, respectively. These functions will be identical except for the valuation $u_3(x_1)$ and $u'_3(x_1)$. Moreover, let $C= \sum_{i\in [N]} a_i$. The valuation functions are defined as follows:
\begin{itemize}
    \item $u_i(v_j) = C+ a_i$ for $i\in [3]$ and $j\in [N]$,
    \item $u_i(x_2) = 2NC$ for $i\in [3]$,
    \item $u_1(x_3)=u_2(x_3) = 0$ and $u_3(x_3)=2NC+kC+t$,
    \item $u_1(x_4)=u_2(x_4) = 3NC+kC+t$ and $u_3(x_4)=0$,
    \item $u_i(x_5) = 0$ for $i\in [3]$,
    \item $u_i(x_6) = 0$ for $i\in [3]$,
    \item $u_i(x_1) = NC$ for $i\in [3]$. Finally, $u'_3(x_1)= 0$.
\end{itemize}

First, we have the following straightforward lemma that proves one side of our reduction.
\begin{lemma}\label{LF:EFX}
    If $(a_1,\ldots, a_N,t)$ is a Yes-instance of \ksum, then $(G,A,\mathcal{U}, p)$ and $(G,A,\mathcal{U}',p)$ are Yes-instances of \EFO-\CFD and \EFX-\CFD, respectively.
\end{lemma}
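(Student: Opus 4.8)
The plan is to exhibit one explicit allocation that, with only the obvious change of valuation function, certifies both fairness notions. Let $S\subseteq[N]$ be a witness for the \ksum instance, so $|S|=k$ and $\sum_{j\in S}a_j=t$. I would set
\[
\pi_1=\{x_4,x_5,x_6\},\qquad \pi_2=\{x_1,x_2\}\cup\{v_j~|~j\in S\},\qquad \pi_3=\{x_3\}.
\]
First I would verify that $\Pi=(\pi_1,\pi_2,\pi_3)$ is a legal allocation for both $(G,A,\mathcal{U},p)$ and $(G,A,\mathcal{U}',p)$: the bundles are pairwise disjoint and nonempty; $G[\pi_1]$ is the star centered at $x_4$ (with leaves $x_5,x_6$ through the edges $x_4x_5,x_4x_6$), $G[\pi_2]$ is the star centered at $x_2$ (leaves $x_1$ and the chosen $v_j$), and $G[\pi_3]$ is a single vertex, so each induced subgraph is connected; and $|\pi_1|+|\pi_2|+|\pi_3|=3+(2+k)+1=k+6=p$.

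Next I would record the bundle values, repeatedly using the identity $\sum_{j\in S}u_i(v_j)=kC+\sum_{j\in S}a_j=kC+t$, valid for every agent $i$. For agents $1$ and $2$, who share the same valuation under both $\mathcal{U}$ and $\mathcal{U}'$, this yields $u_i(\pi_1)=u_i(x_4)=3NC+kC+t$ and $u_i(\pi_2)=u_i(x_1)+u_i(x_2)+(kC+t)=NC+2NC+kC+t=3NC+kC+t$, while $u_i(\pi_3)=u_i(x_3)=0$. Thus each of agents $1$ and $2$ attains the maximum bundle value $3NC+kC+t$ and is therefore (fully) envy-free; this step is immediate and identical under $\mathcal{U}$ and $\mathcal{U}'$.

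The heart of the argument is agent $3$, and this is exactly where $\mathcal{U}$ and $\mathcal{U}'$ diverge. In both cases $u_3(\pi_1)=0$ and $u_3(\pi_3)=2NC+kC+t$, so agent $3$ never envies agent $1$; the only possible envy is toward $\pi_2$. Under $\mathcal{U}$ we have $u_3(\pi_2)=NC+2NC+kC+t=3NC+kC+t$, so agent $3$ envies agent $2$ by exactly $NC$. To certify \EFO I would determine $\tau_2$: since $G[\pi_2]$ is a star centered at $x_2$, every leaf is removable, so $x_1\in\tau_2$ with $u_3(x_1)=NC$. The main (and only mildly delicate) point is the bookkeeping that $x_1$ realizes $\max_{v\in\tau_2}u_3(v)$, which reduces to $NC\ge C+a_j=u_3(v_j)$; this holds because $C=\sum_i a_i\ge a_j$ and $N\ge4$ (as $2\le k\le N-2$). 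Then $u_3(\pi_2)-\max_{v\in\tau_2}u_3(v)=(3NC+kC+t)-NC=2NC+kC+t=u_3(\pi_3)$, so the \EFO inequality holds with equality. Under $\mathcal{U}'$, the single change $u'_3(x_1)=0$ lowers $u'_3(\pi_2)$ to $2NC+kC+t=u'_3(\pi_3)$, so agent $3$ envies no one; hence $\Pi$ is fully envy-free under $\mathcal{U}'$, which in particular gives \EFX. Combining the three agents' analyses shows that $\Pi$ witnesses \EFO for $(G,A,\mathcal{U},p)$ and \EFX for $(G,A,\mathcal{U}',p)$, completing the proof.
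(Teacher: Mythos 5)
Your proof is correct and essentially identical to the paper's: you exhibit the same witness allocation (with the bundles of agents $1$ and $2$ swapped, which is immaterial since they have identical valuations) and perform the same value computations, including the key step that $x_1$ is a removable vertex of the star bundle with $u_3(x_1)=NC$, giving \EFO with equality under $\mathcal{U}$ and full envy-freeness (hence \EFX) under $\mathcal{U}'$. Your explicit check that $NC\ge C+a_j$ via $N\ge k+2\ge 4$ is a slight refinement of the paper's argument, but not needed, since $x_1\in\tau_2$ already gives $\max_{v\in\tau_2}u_3(v)\ge NC$, which suffices for the \EFO inequality.
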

\begin{proof}
    Let $S\subseteq \{a_1,\ldots a_N\}$ be a set such that $\sum_{a\in S}a = t$ and $|S|=k$. Then, consider the following allocation $\Pi = (\pi_1,\pi_2,\pi_3)$:
    \begin{itemize}
        \item $\pi_1 = S\cup \{x_1,x_2\}$
        \item $\pi_2 = \{x_4,x_5,x_6\}$
        \item $\pi_3 = \{x_3\}$
    \end{itemize}

    It is easy to see that $|\pi_1|+|\pi_2|+|\pi_3| = |S|+6=k+6 = p$. Furthermore, it is not difficult to verify that $\Pi$ is an \EFO-\CFD and \EFX-\CFD for $(G,A,\mathcal{U}, p)$ and $(G,A,\mathcal{U}',p)$, respectively. For completeness, we give explicit details below.
    \begin{enumerate}
        \item $\Pi$ is an \EFO-\CFD for $(G,A,\mathcal{U}, p)$: Here, observe that $u_1(\pi_1) = u_1(x_1)+u_1(x_2) + u_1(S) = NC+2NC+kC+t= 3NC+kC+t$. Since $u_1(\pi_2) = u_1(x_4)+u_1(x_5)+u_1(x_6)=3NC+kC+t$ and $u_1(\pi_3)=u_1(x_3)=0$, Agent~1 is not envious of any other agent. Similarly, $u_2(\pi_2) = 3NC+kC+t \geq u_2(\pi_1) =3NC+kC+t > 0=u_2(\pi_3)$. Hence, Agent~2 is not envious of any other agent. Similarly, $u_3(\pi_3) = 2NC+kC+t > 0= u_3(\pi_2)$. Hence, Agent~3 is not envious of Agent~2. Now, the only thing remaining to prove is that Agent~3 is not envious of Agent~1 up to one item, i.e., $u_3(\pi_3) \geq u_3(\pi_1)- \max_{v\in \tau_1}u_3(v)$. Now, observe that $x_1\in \tau_1$ since $x_1$ is a degree-one vertex. Hence, $u_3(\pi_1)- \max_{v\in \tau_1}u_3(v) \leq u_3(\pi_1) - u_3(x_1)= 2NC+kC+t$.  Therefore, $u_3(\pi_2) \geq u_3(\pi_1)- \max_{v\in \tau_1}u_3(v)$, which completes our argument.

        \item $\Pi$ is an \EFX-\CFD for $(G,A,\mathcal{U}',p)$: This proof is similar to the proof of the previous case. Since $\mathcal{U}$ and $\mathcal{U}'$ differ only in the valuation of $u_3(x_1)$ and $u'_3(x_1)$, it follows using the arguments in the proof of the previous case that Agents~1 and 2 are not envious of Agent~3 and Agent~3 is not envious of Agent~2. Moreover, since $u'_3(\pi_3) = 2NC+kC+t$ and $u'_3(\pi_1) = u'_3(x_1)+u'_3(x_2)+u'_3(S) = 0+2NC+kC+t$, we have that Agent~3 is not envious of Agent~2 as well. 
    \end{enumerate}
    This completes our proof.   
\end{proof}

\subsubsection{Some Useful Observations.}
Here, we prove some observations that will be useful for proving the other direction of  our reduction.  For the observations where we do not use any of $u_3(x_1)$ and $u'_3(x_1)$, we will use only the valuation functions $\mathcal{U}$ to ease the presentation. Moreover, we remark that these observations are valid for both \EFO-\CFD (assuming valuations $\mathcal{U}$) and \EFX-\CFD (assuming valuations $\mathcal{U}'$). First, observe that agents 1 and 2 have identical valuations. Moreover, since $k\geq 2$, note that $p\geq 8$. Furthermore, observe that $\{x_2,x_4\}$ is a vertex cover of $G$ and hence $G[V(G)\setminus \{x_2,x_4\}]$ is an independent set.

For the rest of this section, let $\Pi= (\pi_1,\pi_2,\pi_3)$ be an allocation that is an \EFO-\CFD for $(G,A,\mathcal{U},p)$ (resp., \EFX-\CFD for $(G,A,\mathcal{U}',p)$).
First, we have the following easy observation.
\begin{observation}\label{O:notInTau}
    If $x_2\in \pi_i$, for $i\in [3]$, then $x_2 \notin \tau_i$.
\end{observation}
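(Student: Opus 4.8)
The plan is to prove this purely structurally, invoking neither the valuation functions nor the \EFO/\EFX property, but only the fact that $\Pi$ is a valid connected allocation of exactly $p = k+6$ vertices. First I would set up a counting argument. The graph $G$ consists of the six ``backbone'' vertices $x_1, \dots, x_6$ together with the $N$ leaves $v_1, \dots, v_N$, each of which is adjacent only to $x_2$. Since $\Pi$ assigns exactly $p = k + 6$ vertices and at most $6$ of them can be backbone vertices, at least $(k+6) - 6 = k \ge 2$ of the assigned vertices must lie in $\{v_1, \dots, v_N\}$.

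Next I would pin all these assigned leaves into the bundle $\pi_i$ containing $x_2$. Indeed, the unique neighbour of each $v_j$ in $G$ is $x_2$; hence any bundle $\pi_\ell$ that contains some assigned $v_j$ must also contain $x_2$, since otherwise $v_j$ would be an isolated vertex of $G[\pi_\ell]$, contradicting the connectivity requirement on bundles. Because the bundles are pairwise disjoint, only $\pi_i$ contains $x_2$, so every assigned leaf lies in $\pi_i$. Combined with the counting step, $\pi_i$ contains at least two leaves, say $v_a$ and $v_b$.

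Finally I would delete $x_2$ and inspect $G[\pi_i \setminus \{x_2\}]$. As the sole neighbour of each of $v_a$ and $v_b$ is $x_2$, both become isolated once $x_2$ is removed, so $G[\pi_i \setminus \{x_2\}]$ has at least two connected components and is therefore disconnected. By the definition $\tau_i = \{v \in \pi_i : G[\pi_i \setminus \{v\}]\text{ is connected}\}$, this is exactly the statement $x_2 \notin \tau_i$. I do not anticipate a genuine obstacle here; the only step warranting care is the counting bound, where one must verify that at least $k$ leaves are forced to be assigned irrespective of how many backbone vertices are used, and that $k \ge 2$ — both immediate from $p = k+6$ and the standing assumption $2 \le k$. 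It is worth noting that the argument never uses fairness, so the same observation holds for any valid allocation of this instance.
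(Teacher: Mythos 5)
There is a genuine gap in your pinning step. The connectivity requirement on bundles does \emph{not} force a bundle containing an assigned leaf $v_j$ to also contain $x_2$: a singleton bundle $\pi_\ell = \{v_j\}$ induces a one-vertex graph, which is connected, so no contradiction arises. Singleton bundles are not a pathological edge case here --- the paper's own allocations use them (e.g., $\pi_3 = \{x_3\}$ in Lemma~\ref{LF:EFX}, and Observation~\ref{O:Hard2} later shows $\pi_3 = \{x_3\}$ is in fact forced). Your claim only holds for bundles of size at least two. Consequently, ``every assigned leaf lies in $\pi_i$'' is false, and so is the conclusion that $\pi_i$ contains two leaves: with $k=2$ (so $p=8$), the allocation $\pi_i = \{x_1,\ldots,x_6\}$, $\pi_j = \{v_1\}$, $\pi_\ell = \{v_2\}$ is a valid connected allocation in which $\pi_i$ contains $x_2$ but \emph{no} leaf at all. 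The observation still holds for this allocation (removing $x_2$ separates $x_1$ from $x_3$), but your argument cannot see this, since it hinges entirely on isolating two leaves.

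Your overall instinct --- that the statement is purely structural and needs neither valuations nor fairness --- is correct, and the paper's proof is indeed valuation-free. But the paper closes the singleton loophole by counting: since $G$ is a tree, $G[\pi_i]$ is a tree, so it suffices to show $d_{G[\pi_i]}(x_2) \geq 2$. Setting $\alpha = |\pi_i \cap \{x_3,x_4,x_5,x_6\}|$, each of the other two bundles is either a single vertex from $\{x_1, v_1,\ldots,v_N\}$ (this is exactly where your missed singleton case lives) or a subset of the remaining vertices of $\{x_3,\ldots,x_6\}$, giving $|\pi_j|+|\pi_\ell| \leq 5-\alpha$; with $p \geq 8$ this forces $|\pi_i| \geq 3+\alpha$, hence $\pi_i$ contains at least two vertices from $\{x_1, v_1,\ldots,v_N\}$ --- all of which are neighbors of $x_2$, including $x_1$, which your argument ignores as a potential pendant neighbor of $x_2$. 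To repair your proof you would need essentially this case analysis: at most two assigned leaves can escape $\pi_i$ (one per remaining bundle, and only as singletons), which suffices for $k \geq 4$, but the cases $k \in \{2,3\}$ additionally require counting $x_1$ and $x_3$ as possible second neighbors, at which point you have reproduced the paper's argument.
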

\begin{proof}
Since $G$ is a tree, note that $G[\pi_i]$ is also a tree. Observe that removal of any vertex that has degree at least two from a tree gives at least two connected components. Hence, to prove our claim, it suffices to show that $d_{G[\pi_i]}(x_2)>1$. 

Recall that $|\pi_1|+|\pi_2|+|\pi_3| \geq 8$. We will not use the valuation functions for this proof, and hence, without loss of generality, we can assume that $x_2\in \pi_1$.  Let $\alpha = |\pi_1 \cap \{x_3,x_4,x_5,x_6\}|$ ($\alpha$ can be equal to $0$). Then, due to connectivity constraints, observe that $|\pi_2|+|\pi_3| \leq 4- \alpha +1$ as agent 2 (resp., agent 3) can be either assigned a vertex from $\{v_1,\ldots,v_N\}\cup \{x_1\}$ or at most $4-\alpha$ vertices from $\{x_3,\ldots,x_6\}$. Now, $|\pi_1|+|\pi_2|+|\pi_3| \geq 8$, and hence, $|\pi_1| \geq 8- (|\pi_2|+|\pi_3|) \geq 8 - (5- \alpha)$. Thus, $|\pi_1|\geq 3+\alpha$. Therefore, irrespective of the value of $\alpha$, $\pi_1$ contains at least two vertices from $\{x_1,v_1,\ldots, v_N\}$, and hence $d_{G[\pi_1]}(x_2)>1$. This completes our proof. 
\end{proof}

Next, we have the following observation.

\begin{observation}\label{O:Hard1}
    $x_2\in \pi_1\cup \pi_2$.
\end{observation}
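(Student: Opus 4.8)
The plan is to show that the two ways $x_2$ could fail to lie in $\pi_1\cup\pi_2$ — namely, $x_2$ being unassigned, or $x_2\in\pi_3$ — both lead to contradictions. These two cases call for rather different arguments: a pure counting argument in the unassigned case, and an envy argument in the $x_2\in\pi_3$ case, the latter resting crucially on Observation~\ref{O:notInTau}. Throughout I will exploit that deleting the ``hub'' $x_2$ (and, when needed, $x_4$) shatters $G$ into isolated vertices plus at most one small component; recall in particular that $\{x_2,x_4\}$ is a vertex cover of $G$, so $G-\{x_2,x_4\}$ is edgeless.

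First I would rule out $x_2$ being unassigned. In that case every $\pi_i$ avoids $x_2$, so each $G[\pi_i]$ is a connected subgraph of $G-x_2$ and hence sits inside a single connected component of $G-x_2$. Those components are the $N+1$ isolated vertices $x_1,v_1,\dots,v_N$ together with the four-vertex component $\{x_3,x_4,x_5,x_6\}$. Since there are only three agents, at most three bundles are spread over these components; a bundle inside a singleton component has exactly one vertex, while all bundles inside the four-vertex component together use at most four vertices. A one-line case distinction on how many of the three bundles fall in the big component yields $|\pi_1|+|\pi_2|+|\pi_3|\le 6$, contradicting $p=k+6\ge 8$ (as $k\ge 2$).

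Next I would rule out $x_2\in\pi_3$. Since $x_4$ is a single vertex it lies in at most one bundle, so at least one of the identical agents $1,2$ — call it $\ell$ — satisfies $x_4\notin\pi_\ell$; as also $x_2\notin\pi_\ell$, we get $\pi_\ell\subseteq V(G)\setminus\{x_2,x_4\}$. Because this set is independent, the connected bundle $\pi_\ell$ is a single vertex, so $u_\ell(\pi_\ell)\le\max\{u_\ell(x_1),\max_j u_\ell(v_j)\}=NC$ (using $a_j\le C$ and $N\ge k+2\ge 4\ge 2$). On the other hand $x_2\in\pi_3$ gives $u_\ell(\pi_3)\ge u_\ell(x_2)=2NC$, and by Observation~\ref{O:notInTau} we have $x_2\notin\tau_3$; hence for every removable item $v\in\tau_3$ the vertex $x_2$ survives in $\pi_3\setminus\{v\}$, so $u_\ell(\pi_3)-u_\ell(v)=u_\ell(\pi_3\setminus\{v\})\ge u_\ell(x_2)=2NC>NC\ge u_\ell(\pi_\ell)$. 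By Observation~\ref{O:PreEFO} this forces $\Pi$ to be neither an \EFO nor an \EFX allocation, contradicting the standing assumption on $\Pi$. Combining the two cases gives $x_2\in\pi_1\cup\pi_2$.

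I expect the envy argument to be the delicate step. The whole danger is that an adversary hides the large value $u_\ell(x_2)=2NC$ inside $\pi_3$ behind the single-item slack that \EFO/\EFX permit; the argument only closes because Observation~\ref{O:notInTau} forbids removing $x_2$ itself from $\pi_3$, so at least $2NC$ of value always remains after any one removal. (I tacitly assume $C\ge 1$, which is without loss of generality since $C=0$ makes the source \ksum instance trivial.)
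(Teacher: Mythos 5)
Your proof is correct and follows essentially the same route as the paper's: the same counting over the connected components of $G-x_2$ (one component of size at most $4$, all others singletons, giving at most $6<8\le p$ assigned vertices) rules out $x_2$ being unassigned, and the same envy argument via Observation~\ref{O:notInTau} together with Observation~\ref{O:PreEFO} rules out $x_2\in\pi_3$. The only cosmetic difference is that you pigeonhole on which of agents $1,2$ misses $x_4$ and contradict for that agent directly, where the paper first deduces $x_4\in\pi_1$ from $u_1(\pi_1)\ge 2NC$ and then contradicts via the forced singleton bundle $\pi_2$; you also make explicit the tacit assumption $C\ge 1$, which the paper's proof relies on implicitly as well.
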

\begin{proof}
    First, we will prove that $x_2\notin \pi_3$. Targeting contradiction assume that $x_2\in \pi_3$.  Thus, due to Observation~\ref{O:notInTau}, $x_2\notin \tau_3$. Hence, we have that $u_1(\pi_3)- \max_{v\in \tau_3}(u_1(v)) \geq u_1(x_2)$. Therefore, to satisfy \EFX (resp., \EFO) $u_1(\pi_1) \geq u_1(x_2) = 2NC$. By an identical argument, $u_2(\pi_2) \geq 2NC$. Now, notice that the only way such that $u_1(\pi_1) \geq 2NC$ is if $x_4\in \pi_1$. Now, since $G[V(G)\setminus \{x_2,x_4\}]$ is an independent set, $\pi_2$ consists of exactly one vertex, say $y$, such that $y\notin \{x_2,x_4\}$. Since for each such $y$, $u_2(y)<2NC$, this contradicts the fact that $\Pi$ respects \EFX (resp., \EFO). 

    Therefore, we have that $x_2\notin \pi_3$. Finally, to see that $x_2\in \pi_1\cup \pi_2$, observe that $G[V(G)\setminus \{x_2\}]$ contains one connected component of size\footnote{\textit{Size} of a connected component is the number of vertices in it.} at most $4$, and each remaining  component has size at most $1$.  Hence, if $x_2\notin \pi_1\cup \pi_2 \cup \pi_3$, then $|\pi_1|+|\pi_2|+|\pi_3| \leq 4+1+1 = 6$, a contradiction to the fact that $p\geq 8$. Therefore, $x_2 \in \pi_1\cup \pi_2$.       
\end{proof}

Since agents 1 and 2 have identical valuations and $x_2\in \pi_1\cup \pi_2$ (due to Observation~\ref{O:Hard1}), for the rest of this section, we can assume without loss of generality that $x_2\in \pi_1$. We have the following remark.

\begin{remark}\label{remark}
    For the rest of this section, we assume that $x_2\in \pi_1$.
\end{remark}

Next, we have the following observation.
\begin{observation}\label{O:Hard2}
    $\pi_3 = \{x_3\}$ and $\pi_2 = \{x_4,x_5,x_6\}$.
\end{observation}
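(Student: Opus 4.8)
The plan is to exploit the fact that $G$ is a tree and that $x_2\in\pi_1$ (\Cref{remark}): the unique edge joining $x_2$ to the ``tail'' $W=\{x_3,x_4,x_5,x_6\}$ is $x_2x_3$, so every bundle other than the one holding $x_2$ is either a single leaf of $L=\{x_1,v_1,\dots,v_N\}$ or a connected subset of $W$. Observe first that each single leaf is worth at most $2C\le NC$ to every agent (since $a_j\le C$ and $N\ge 2$), while $u_3(x_2)=2NC$ and $u_3(x_3)=2NC+kC+t$. I would begin by showing $x_3\notin\pi_1$: otherwise $\pi_1$ holds the two items most valuable to agent~$3$, giving $u_3(\pi_1)\ge 4NC+kC+t$, whereas $u_3(\pi_3)\le NC$; as the most valuable discountable item of $\pi_1$ is worth at most $u_3(x_3)=2NC+kC+t$, the residual value is at least $2NC>NC$, so by \Cref{O:PreEFO} the allocation is neither \EFO nor \EFX, a contradiction. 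By connectivity this yields $\pi_1\cap W=\emptyset$, i.e.\ $\pi_1\subseteq\{x_2\}\cup L$, and $W$ is distributed among $\pi_2$, $\pi_3$ and the unassigned vertices.

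The workhorse for the remaining steps is that $x_2\notin\tau_1$ (\Cref{O:notInTau}): since $\pi_1\subseteq\{x_2\}\cup L$ and each leaf is worth at most $NC$, discounting any one item of $\pi_1$ still leaves value at least $u(x_2)=2NC$ under every agent's valuation $u$. Using this I would argue $x_4\in\pi_2$: it cannot lie in $\pi_1$, and if $x_4\in\pi_3$ then agent~$2$ (identical to agent~$1$) can reach neither $x_4$ nor $x_2$, so $u_{12}(\pi_2)\le NC$, while the template gives agent~$2$ residual envy $\ge 2NC>NC$ of agent~$1$. Next I would show $x_3\in\pi_3$ by the same device: if $x_3\in\pi_2$ or $x_3$ is unassigned then $u_3(\pi_3)\le NC$, and agent~$3$ again has residual envy $\ge 2NC$ of agent~$1$ (the exact-size constraint $|\pi_1|+|\pi_2|+|\pi_3|=k+6\ge 8$ keeps $\pi_1$ large enough for these counts to be consistent). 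Finally, once $x_3\in\pi_3$, its only neighbours $x_2\in\pi_1$ and $x_4\in\pi_2$ isolate it in $G-\pi_1-\pi_2$, forcing $\pi_3=\{x_3\}$.

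It remains to prove $\pi_2=\{x_4,x_5,x_6\}$, and this is the step I expect to be the main obstacle. We already have $\pi_2\subseteq\{x_4,x_5,x_6\}$ with $x_4\in\pi_2$; moreover $x_5,x_6$ are worth $0$ to everyone and their only neighbour is $x_4\in\pi_2$, so each is in $\pi_2$ or unassigned and carries no fairness signal. The only available leverage is therefore the exact-cardinality constraint together with an upper bound on $|\pi_1|$. I would extract that bound from agent~$3$'s non-envy of agent~$1$: writing $\pi_1=\{x_2\}\cup S'$ with $S'\subseteq L$, the inequality $u_3(\pi_3)\ge u_3(\pi_1)-\max_{v\in\tau_1}u_3(v)$ with $u_3(\pi_3)=2NC+kC+t$ rearranges to $u_3\big(S'\setminus\{\text{argmax leaf}\}\big)\le kC+t$; since each $v_j$ is worth at least $C$ to agent~$3$, this forces $|S'|\le k+1$ (using $t<C$), hence $|\pi_1|\le k+2$, and then the count $|\pi_1|+|\pi_2|+1=k+6$ with $|\pi_2|\le 3$ yields $|\pi_2|=3$. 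The delicate points are to verify that $\max_{v\in\tau_1}u_3(v)$ is attained at a single leaf and is small, that the degenerate instances with $t$ close to $C$ (which are vacuous for \ksum as $k\le N-2$) do not break the count, and that the per-leaf lower bound survives the passage from $\mathcal U$ to $\mathcal U'$, where $u_3'(x_1)=0$ must be treated separately. Throughout, each contradiction is produced as a violation of \EFO, and since a violation of \EFO is a fortiori a violation of \EFX, the same reasoning settles both \EFO-\CFD and \EFX-\CFD at once.
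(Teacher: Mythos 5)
Your proposal is correct and follows essentially the same route as the paper's proof: you force $x_4\in\pi_2$ and $\pi_3=\{x_3\}$ from the residual value $u_3(\pi_1)-\max_{v\in\tau_1}u_3(v)\geq u_3(x_2)=2NC$ guaranteed by Observation~\ref{O:notInTau}, bound $|\pi_1|\leq k+2$ via agent~3's non-envy of agent~1 using the same $x_1$ max/min dichotomy as the paper's Claim~\ref{C:size1}, and finish with the cardinality count $p=k+6$ (your extra upfront step $x_3\notin\pi_1$ is subsumed in the paper by deriving $\pi_3=\{x_3\}$ directly). One small caution: your closing remark that every contradiction arises as an \EFO-violation is not quite accurate for the \EFX instance when $x_1\in\pi_1$ --- there the max-based bound can fail to yield a contradiction (removing the most valuable leaf may leave residual only about $2NC+kC$, which is below $u'_3(\pi_3)=2NC+kC+t$), and one genuinely needs the minimum, i.e., $u'_3(x_1)=0$, which is exactly the separate treatment you flagged and the paper carries out in Claim~\ref{C:size1}.
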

\begin{proof}
    First, we have the following claim.
    
    \begin{clm}\label{C:4in2}
        $x_4\in \pi_2$.
    \end{clm}
    \begin{proofofclaim}
    Targeting contradiction, assume that $x_4 \notin \pi_2$. Since $G[V(G)\setminus \{x_2,x_4\}]$ is an independent set and $x_2,x_4 \notin \pi_2$ (as, due to Remark~\ref{remark}, $x_2\in \pi_1$), $\pi_2$ consists of exactly one vertex, say $y$, such that $y \notin \{x_2,x_4\}$. Moreover, due to Observation~\ref{O:notInTau}, $x_2\notin \tau_1$. Therefore, if $\Pi$ respects either of \EFO or \EFX, we should have that $u_2(y) \geq u_2(x_2)$. But, since no such $y$ exists, we reach a contradiction. Hence, $x_4 \in \pi_2$. 
    \end{proofofclaim}

    Second, we have the following claim.
    
    \begin{clm}\label{C:3in3}
    $\pi_3 = \{ x_3\}$.
    \end{clm}
    \begin{proofofclaim}
        Since $x_2\in \pi_1$ (due to Remark~\ref{remark}), $x_4\in \pi_2$ (due to Claim~\ref{C:4in2}), and $G[V(G)\setminus \{ x_2,x_4\}]$ is an independent set, $\pi_3$ consists of a single vertex. Moreover, since $x_2\notin \tau_1$ (due to Observation~\ref{O:notInTau}), to respect either of \EFO or \EFX, $u_3(\pi_3) \geq u_3(x_2) = 2NC$. This is only possible if $\pi_3 = \{x_3\}$.\end{proofofclaim}

    Finally, we show that $\pi_2 = \{x_4,x_5,x_6\}$. Observe that if we can prove that $|\pi_1| \leq k+2$, then, since $|\pi_3| = 1$, we will have that $|\pi_2| \geq p-k-3  = 3$. This will be only possible if $|\pi_2| = 3$ and $\pi_2 = \{x_4,x_5,x_6\}$. Hence, we prove the following claim, which in turn establishes that $\pi_2 = \{x_4,x_5,x_6\}$ 

    \begin{clm}\label{C:size1}
        $|\pi_1| \leq k+2$. 
    \end{clm}
    \begin{proofofclaim}
    Targeting contradiction assume that $|\pi_1| \geq k+3$. Recall that $|\pi_3| = \{x_3\}$, and hence, $u_3(\pi_3) =  2NC+kC+t$. Moreover, recall that $x_2 \in \pi_1$ (due to Remark~\ref{remark}) and $x_2\notin \tau_1$ (due to Observation~\ref{O:notInTau}). Moreover, observe that $G[\pi_1]$ induces a star graph, and hence, $\tau_1 = \pi_1\setminus \{x_2\}$. Therefore, the following equations follow directly: (i) $\pi_1 = \tau_1 \cup \{x_2\}$ and (ii) $|\tau_1| \geq k+2$. Moreover, $\tau_1 \subseteq \{x_1,v_1,\ldots, v_N\}$. 

    First, observe that for any $S\subseteq \{ v_1,\ldots, v_N\}$, $u_3(S)+u_3(x_2) > 2NC+(k+1)C > 2NC+kC+t = u_3(\pi_3)$ (since $t<C$). Hence, if $\tau_1 \subseteq \{v_1,\ldots, v_N\}$ (i.e., $x_1\notin \tau_1$), then, since $|\tau_1|>k+1$, for any vertex $v\in \tau_1$, $u_3(\pi_1)-u_3(v) > u_3(\pi_3)$, which contradicts that $\Pi$ is an \EFO-\CFD (resp., \EFX-\CFD) for $(G,A,\mathcal{U},p)$ (resp., $(G,A,\mathcal{U}',p)$). Finally, for the case when $x_1 \in \tau_1$, it is sufficient to observe that for any vertex $v \in \{v_1,\ldots, v_N\}$, $u_3(x_1) > u_3(v)$ and $u'_3(x_1) < u'_3(v)$. Thus, we have  $u_3(\pi_1)- \max_{v\in \tau_1} u_3(v) > u_3(\pi_3)$ and $u'_3(\pi_1)- \min_{v\in \tau_1} u'_3(v) > u'_3(\pi_3)$. This contradicts the fact that $\Pi$ is an \EFO-\CFD (resp., \EFX-\CFD) for $(G,A,\mathcal{U},p)$ (resp., $(G,A,\mathcal{U}',p)$). 
    \end{proofofclaim}
This completes our proof. 
\end{proof}

Finally, we are ready to prove our main lemma, which implies the other side of the reduction.
\begin{lemma}\label{L:EFX2}
If $(G,A,\mathcal{U},p)$ (resp., $(G,A,\mathcal{U}',p)$) is a Yes-instance of \EFO-\CFD (resp., \EFX-\CFD), then $(a_1,\ldots,a_N,t)$ is a Yes-instance of \ksum.
\end{lemma}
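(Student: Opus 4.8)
The plan is to combine the structural observations already established (Observations~\ref{O:Hard1},~\ref{O:Hard2} together with Remark~\ref{remark}) to pin down the allocation $\Pi$ almost completely, and then read off the \ksum solution from the leaves assigned to agent~1. Assuming $(G,A,\mathcal{U},p)$ is a Yes-instance witnessed by an \EFO-allocation $\Pi=(\pi_1,\pi_2,\pi_3)$ (the \EFX case with $\mathcal{U}'$ is handled in parallel), we may assume by Remark~\ref{remark} that $x_2\in\pi_1$. By Observation~\ref{O:Hard2} we already know $\pi_3=\{x_3\}$ and $\pi_2=\{x_4,x_5,x_6\}$. Hence all of $x_2,x_3,x_4,x_5,x_6$ are accounted for, and the remaining $p-5=k+1$ allocated vertices must all belong to $\pi_1$ and be drawn from $\{x_1,v_1,\ldots,v_N\}$ (the only vertices adjacent to $x_2$ among the unassigned ones, keeping $G[\pi_1]$ connected). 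So $\pi_1=\{x_2\}\cup T$ where $T\subseteq\{x_1,v_1,\ldots,v_N\}$ with $|T|=k+1$.

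The next step is to show that $x_1\notin\pi_1$, forcing $T$ to consist of exactly $k$ of the vertices $v_1,\ldots,v_N$. The idea is to compare agent~3's valuation of its own bundle against agent~1's bundle. We have $u_3(\pi_3)=u_3(x_3)=2NC+kC+t$, while $u_3(\pi_1)=u_3(x_2)+u_3(T)=2NC+u_3(T)$. Since $x_2\notin\tau_1$ (Observation~\ref{O:notInTau}) and $G[\pi_1]$ is a star centered at $x_2$, we have $\tau_1=T$, so the relevant \EFO inequality for agent~3 not envying agent~1 is $u_3(\pi_3)\geq u_3(\pi_1)-\max_{v\in T}u_3(v)$. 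If $x_1\in T$, then $T$ contains $k$ of the $v_j$'s whose total $u_3$-value is $kC+\sum_{j}a_3=kC+k a_3$ plus $u_3(x_1)=NC$; removing the single most valuable item (which for $\mathcal{U}$ is $x_1$ with value $NC$, and for $\mathcal{U}'$ is some $v_j$ since $u'_3(x_1)=0$) still leaves strictly more than $u_3(\pi_3)$, exactly the computation carried out in Claim~\ref{C:size1}. So by the same reasoning one rules out $x_1\in\pi_1$, and $T=\{v_j : j\in S\}$ for some $S\subseteq[N]$ with $|S|=k$.

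Finally, the equality $\sum_{i\in S}a_i=t$ follows from making agent~3's \EFO constraint tight enough. With $T=\{v_j:j\in S\}$ and $|S|=k$, we have $u_3(\pi_1)=2NC+\sum_{j\in S}(C+a_3)=2NC+kC+k a_3$ and, since every $v_j$ lies in $\tau_1$, $\max_{v\in T}u_3(v)=C+a_3$, so the \EFO inequality for agent~3 reads $2NC+kC+t\geq 2NC+kC+ka_3-(C+a_3)=2NC+kC+(k-1)a_3-C+\ldots$, which upon simplification bounds the quantity $\sum_{i\in S}a_i$ in terms of $t$; combined with the symmetric envy constraint of agent~1 (or agent~2) against agent~1's own feasibility, one obtains $\sum_{i\in S}a_i=t$, so $S$ certifies that $(a_1,\ldots,a_N,t)$ is a Yes-instance of \ksum. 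I expect the main obstacle to be the bookkeeping in this last step: the valuation constants $C$, $kC$, $t$ are tuned precisely so that the \EFO/\EFX thresholds collapse to the equality $\sum_{i\in S}a_i=t$, and one must verify carefully that \emph{both} the ``$\geq$'' forced by agent~3 not envying agent~1 and the complementary ``$\leq$'' (forced because otherwise agent~1 or agent~2 would be over-allocated or some other agent would become envious) hold simultaneously, yielding the exact target sum rather than merely an inequality.
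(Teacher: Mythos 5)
Your proposal breaks at the central structural step, and the break is fatal: you set out to prove $x_1\notin\pi_1$, but the truth is the exact opposite, and it is what the paper proves (Claim~\ref{C:L}): $x_1\in\pi_1$ is \emph{forced}. A quick sanity check already shows your claim cannot hold: the forward direction (Lemma~\ref{LF:EFX}) exhibits, on Yes-instances, the allocation $\pi_1=S\cup\{x_1,x_2\}$, so any argument that every \EFO/\EFX allocation has $x_1\notin\pi_1$ would contradict it. The reason $x_1$ is forced in is a constraint you never use: $x_4\in\pi_2$ and $x_4\notin\tau_2$ (deleting $x_4$ disconnects $x_5$ from $x_6$), so agent~1 must satisfy $u_1(\pi_1)\geq u_1(x_4)=3NC+kC+t$; since $u_1(x_2)=2NC$ and each $u_1(v_j)\leq 2C$, a bundle of $x_2$ plus $k+1$ leaves from $\{v_1,\ldots,v_N\}$ reaches at most $2NC+(k+2)C<3NC+kC+t$ (as $N\geq k+2$), so $\pi_1$ must contain $x_1$ with $u_1(x_1)=NC$. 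Your appeal to the computation of Claim~\ref{C:size1} to exclude $x_1$ does not transfer: that argument needs $|\tau_1|\geq k+2$, so that after deleting the extremal item at least $k+1$ of the $v_j$'s remain, with $u_3$-value exceeding $kC+t$ because $t<C$; here $|\tau_1|=k+1$, and after removing $x_1$ only $k$ of the $v_j$'s remain, with value $kC+\sum_{j\in S}a_j$, which is $\leq u_3(\pi_3)-u_3(x_2)$ precisely when $\sum_{j\in S}a_j\leq t$ --- no contradiction arises. There is also a plain count mismatch in your sketch: you correctly derive $|T|=k+1$, so if $x_1\notin T$ then $T$ would contain $k+1$ (not $k$) of the $v_j$'s, and you could not read off a $k$-element solution to \ksum at all.

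As a consequence, your final extraction step cannot be repaired as written. The paper obtains the two inequalities from two \emph{different} envy constraints: $\sum_{b\in B}b\geq t$ comes from agent~1 not envying agent~2 (again via $x_4\notin\tau_2$, and crucially using $u_1(x_1)=NC$, hence $x_1\in\pi_1$), while $\sum_{b\in B}b\leq t$ comes from agent~3 not envying agent~1, where the subtracted vertex is $x_1$ --- the $u_3$-maximum in $\tau_1$ under $\mathcal{U}$ and the $u'_3$-minimum under $\mathcal{U}'$, which is exactly the design trick that makes one argument serve both \EFO and \EFX (Claim~\ref{C:F}). Your sketch tries to extract both bounds from agent~3's constraint plus an unspecified ``symmetric envy constraint of agent~1,'' and never invokes $x_4\notin\tau_2$; with $x_1$ excluded from $\pi_1$ the lower bound is unobtainable (indeed, no \EFO/\EFX allocation with $x_1\notin\pi_1$ exists). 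A minor further slip: you took the construction's typo $u_i(v_j)=C+a_i$ literally, writing $u_3(T)=kC+ka_3$; the intended valuation is $u_i(v_j)=C+a_j$, as the paper's own computations (e.g., $u_1(S)=kC+\sum_{b\in B}b$) make clear.
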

\begin{proof}
    Let $\Pi$ be an allocation that satisfies \EFO (resp., \EFX) for $(G,A,\mathcal{U},p)$ (resp., $(G,A,\mathcal{U}',p)$). Then, due to Observation~\ref{O:Hard2}, we know that $\pi_2=\{x_4, x_5,x_6\}$ and $\pi_3 = \{x_3\}$. Moreover, due to Remark~\ref{remark} and Observation~\ref{O:notInTau}, we know that $x_2\in \pi_1$ and $x_2\notin \tau_1$. Moreover, since $p = k+6$, we know that $|\pi_1| = k+2$. Let $\pi_1 = S'\cup \{x_2\}$ ($S' \subseteq \{x_1,v_1,\ldots,v_N\}$ and $|S'| = k+1$). Now, we have the following claim. 
    
    \begin{clm}\label{C:L}
     $x_1\in \pi_1$.   
    \end{clm}
    \begin{proofofclaim}
    Targeting contradiction assume that $x_1\notin \pi_1$, i.e., $S' \subset \{v_1,\ldots,$ $v_N\}$. Now consider $\pi_2$. First, observe that $x_4 \notin \tau_2$ since $G[\pi_2]$ have two connected components, each containing exactly one vertex ($x_5$ and $x_6$).
    Therefore, for $\Pi$ to respect \EFO (resp., \EFX), $u_1(\pi_1) \geq u_1(x_4)$. Hence, $u_1(x_2)+u_1(S') \geq 3NC+kC+t$,  i.e., $2NC+u_1(S') \geq 3NC+kC+t$, i.e., $u_1(S') \geq NC+kC+t$. Since $N\geq k+2$, this implies that $u_1(S') \geq (k+3)C+t$. But this is not possible since $u_1(S') \leq (k+1)C +C$ (as $u_1(v_i) = C+a_i$, $|S'| = k+1$, and $\sum a_i = C$). Hence, we reach a contradiction.
    \end{proofofclaim} 

    Hence, due to Claim~\ref{C:L}, we have that $x_1 \in \pi_1$. Now, Let $S = \pi_1\setminus \{x_1,x_2\}$. Moreover, let $B \subseteq \{a_1,\ldots,a_N\}$ such that if $v_i \in S$, then $a_i \in B$. Note that $|S|=|B| = k$. Moreover, observe that due to the construction of $B$ and the definition of $\mathcal{U}$, $u_1(S) = kC+ \sum_{b\in B} b$. Now, we have the following crucial claim.
    
    \begin{clm}\label{C:F}
      $\sum_{b\in B} b = t$.
    \end{clm}
    \begin{proofofclaim}
    First, we will prove that $\sum_{b\in B} b \geq t$. Recall that, as discussed in the proof of Claim~\ref{C:L}, $x_4\notin \tau_2$ and $x_4 \in \pi_2$ (due to Observation~\ref{O:Hard2}). Therefore, $u_1(\pi_1) \geq u_1(x_4) = 3NC+kC+t$, i.e., $u_1(x_1)+u_1(x_2) +u_1(S) \geq 3NC+kC+t$, i.e., $NC+2NC+u_1(S) \geq 3NC+kC+t$. Hence, $u_1(S) \geq kC+t$. Recall that $\sum_{v\in S} u_1(v) = kC+ \sum_{b\in B} b$. Therefore, by combining these two equations, $kC+ \sum_{b\in B} b \geq kC+t$, i.e., $\sum_{b\in B} b \geq t$.

    Second, we will prove that $\sum_{b\in B} b \leq t$. Now, consider $\pi_3$ and recall that $\pi_3 = \{x_3\}$ (due to Observation~\ref{O:Hard2}). Recall that $\{x_1,x_2\} \subseteq \pi_1$. As discussed in the proof of Claim~\ref{C:size1}, for any vertex $v \in \{v_1,\ldots, v_N\}$, $u_3(x_1) > u_3(v)$ and $u'_3(x_1) < u'_3(v)$. Therefore, to satisfy \EFO (resp., \EFX), $u_3(x_3) \geq u_3(\pi_1) - u_3(x_1) = u_3(S) +u_3(x_2)$ (resp. $u'_3(x_3) \geq u'_3(\pi_1) - u'_3(x_1) = u'_3(S) +u'_3(x_2)$).  Since $u_3(x_3) = u'_3(x_3),  u_3(S) =  u'_3(S)$, and $u_3(x_2) = u'_3(x_2)$, it is sufficient to consider the equation $u_3(x_3) \geq u_3(S) +u_3(x_2)$. Hence, $2NC+kC+t \geq u_3(S)+ 2NC$, i.e., $u_3(S) \leq kC+t$. Since $u_3(S) = u_1(S) = kC+ \sum_{b\in B} b$ (as for $i\in [N]$, $u_3(v_i) = u_1(v_i)$), we have that $\sum_{b\in B} b \leq t$. 

    Now, combining $\sum_{b\in B} b \leq t$ and $\sum_{b\in B} b \geq t$, we have that $\sum_{b\in B} b = t$.    
    \end{proofofclaim}
    Finally, since $\sum_{b\in B} b = t$, $|B|=k$, we have that $(a_1,\ldots, a_N, t)$ is a Yes-instance of \ksum.  
\end{proof}
Finally, we have the following theorem as a consequence of our construction, Proposition~\ref{P:ksum}, and Lemma~\ref{LF:EFX} and Lemma~\ref{L:EFX2}.
\EFOHard*

Finally, we have the following remark concerning our hardness results.
\begin{remark}
Theorem~\ref{th:EFHard} and Theorem~\ref{th:EFOHard} establish that, for $\varphi \in \{ \mathsf{EF},\mathsf{EF1},\mathsf{EFX}\}$, $\varphi$-\CFD~is W[1]-hard parameterized by $p+|A|+t$ when the valuations are encoded in binary. We remark that these results cannot be directly generalized to unary valuations as \ksum is in P when $M$ is encoded in unary. 
\end{remark}

\section{Kernelization by $\mathsf{vcn}+p+\mathsf{val}$}
Let $(G,A,\mathcal{U},p)$ be an instance of \CFD. We consider the parameterization of \CFD~parameterized by $\mathsf{vcn}+p+\mathsf{val}$. Recall that $\mathsf{vcn}$ is the vertex cover of the input graph $G$ and $\mathsf{val} = |\{u_i(v)~|~i\in[n],v\in V(G)\}|$. Moreover, let $\mathsf{VAL} = \{u_i(v)~|~i\in[n],v\in V(G)\}$ and $\mathcal{A}$ be the set of all agent types. 

Let $U$ be a vertex cover of size $t$. If no such vertex cover is given, then we can compute a vertex cover $U$ of size $t\leq 2\mathsf{vcn}$ using a polynomial-time approximation algorithm~\cite{bookApprox}. Let $I$ be the independent set $V(G)\setminus U$. First, we provide an overview of our kernelization algorithm. Our kernelization algorithm uses techniques similar to the ones used by Deligkas et al.~\cite{delgikas} to provide an \FPT algorithm for \CFDO parameterized by $\mathsf{vcn}+|\mathcal{A}|+\mathsf{val}$. We partition the vertices of $I$ into equivalence classes such that for any two vertices $u$ and $v$ of the same equivalence class, $N_U(u)= N_U(v)$ and $u_i(v) = u_i(u)$ for each $i\in A$. Since vertices of each equivalence class are ``indistinguishable'' for the agents, we keep only at most $p$ vertices from each equivalence class. We then establish that we can have at most $2^t\mathsf{val}^{|\mathcal{A}|}$ many equivalence classes, thus giving us a kernel with at most $2^t\mathsf{val}^{|\mathcal{A}|}p+t$ vertices. Below, we discuss these ideas formally. 

\subsection{Exponential Kernel for \EF-\CFD, \EFO-\CFD, and \EFX-\CFD}
We have the following reduction rule.
\begin{RR}[RR\ref{RR1}]\label{RR1}
Let $(G,A,\mathcal{U},p)$ be an instance of $\varphi$-\CFD~where $\varphi \in \{ \mathsf{EF},\mathsf{EF1},\mathsf{EFX}\}$. Let $S\subseteq I$ be a set of vertices such that $|S|>p$ and for any two vertices $u,v\in S$, $N_U(u)= N_U(v)$ and for each agent $i\in A$, $u_i(u)=u_i(v)$. Moreover, let $v_1,\ldots v_{|S|}$ be an ordering of vertices in $S$. Then, let $H \Leftarrow G-\set{v_{p+1},\ldots, v_{|S|}}$. 
\end{RR}

We have the following lemma to prove that RR\ref{RR1} is safe.

\begin{lemma}\label{L:RR1}
RR\ref{RR1} is safe.
\end{lemma}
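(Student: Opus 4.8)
The plan is to prove that $(G,A,\mathcal{U},p)$ and $(H,A,\mathcal{U},p)$ are equivalent instances. The forward direction---if $(H,A,\mathcal{U},p)$ is a Yes-instance then so is $(G,A,\mathcal{U},p)$---is immediate from Lemma~\ref{L:subgraph}, since $H$ is an induced subgraph of $G$ and $\varphi \in \{\mathsf{EF},\mathsf{EF1},\mathsf{EFX}\}$. The substance of the argument is the reverse direction.

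For the reverse direction, the first observation I would record is that the vertices of $S$ are pairwise \emph{false twins}: they lie in the independent set $I$, so they are pairwise non-adjacent and each has all of its neighbours in $U$; by the hypothesis of RR\ref{RR1} they share a common neighbourhood $N_U(\cdot)$ and a common valuation $u_i(\cdot)$ for every agent $i$. Consequently, any permutation of $V(G)$ that permutes $S$ and fixes every other vertex is an automorphism of $G$ that moreover preserves all valuations. I would verify this twin claim by a short edge-by-edge check split into the cases: both endpoints outside $S$; exactly one endpoint in $S$; both endpoints in $S$.

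Next I would exploit that an allocation uses exactly $p$ vertices in total. Starting from a $\varphi$-allocation $\Pi=(\pi_1,\dots,\pi_n)$ for $G$, let $S_\Pi = S \cap \bigcup_i \pi_i$ be the used vertices of $S$; then $|S_\Pi|\le p = |K|$, where $K=\{v_1,\dots,v_p\}$ is the kept part of $S$ (well-defined since $|S|>p$). Because $|S_\Pi|\le |K|$, I can build a permutation $\sigma$ of $S$ with $\sigma(S_\Pi)\subseteq K$, extended by the identity outside $S$; by the twin observation $\sigma$ is a valuation-preserving automorphism of $G$. I then set $\Pi' = (\sigma(\pi_1),\dots,\sigma(\pi_n))$. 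Since $\sigma$ fixes $V(G)\setminus S$ and sends $S_\Pi$ into $K$, no deleted vertex is used, so the bundles of $\Pi'$ lie in $V(H)$; connectivity of each $G[\sigma(\pi_i)]$, pairwise disjointness, the total size $p$, and $|\sigma(\pi_i)|\ge 1$ all transfer because $\sigma$ is an automorphism fixing cardinalities.

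The delicate point---and the step I expect to be the main obstacle---is checking that $\Pi'$ still satisfies $\varphi$, particularly for $\mathsf{EF1}$ and $\mathsf{EFX}$, whose definitions involve the set $\tau_j$ of removable vertices. Here the automorphism viewpoint pays off: since $\sigma$ is an automorphism, the removable set of $\sigma(\pi_j)$ equals $\sigma(\tau_j)$, and since $\sigma$ preserves valuations we get $u_i(\sigma(\pi_j))=u_i(\pi_j)$ together with $\max_{v\in \sigma(\tau_j)}u_i(v)=\max_{v\in\tau_j}u_i(v)$ (and likewise for $\min$). Thus every defining inequality of $\mathsf{EF}$, $\mathsf{EF1}$, and $\mathsf{EFX}$ is literally unchanged under $\sigma$, so $\Pi'$ is a $\varphi$-allocation for $H$, witnessing that $(H,A,\mathcal{U},p)$ is a Yes-instance and completing the proof.
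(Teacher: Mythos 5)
Your proposal is correct and follows essentially the same route as the paper: the paper's proof performs exactly your substitution, pairing the used deleted vertices $Y\subseteq\{v_{p+1},\ldots,v_{|S|}\}$ with unused kept vertices $X\subseteq\{v_1,\ldots,v_p\}$ (noting $|X|\ge |Y|$ because only $p$ vertices are allocated in total) and replacing each $y_j$ by $x_j$, which is precisely your permutation $\sigma$ restricted to the used part of $S$, with the forward direction likewise delegated to Lemma~\ref{L:subgraph}. Your automorphism packaging is a slightly tidier wrapper: it gives $\tau'_j=\sigma(\tau_j)$ and the invariance of all $\varphi$-inequalities in one stroke, where the paper instead verifies connectivity (via $N_U(y_j)=N_U(x_j)$), disjointness, the size bound, and the $\max/\min$ identities over $\tau_j$ by hand through a valuation-preserving bijection $f:\pi_j\to\pi'_j$.
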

\begin{proof}
Since $H$ is an induced subgraph of $G$, due to Lemma~\ref{L:subgraph}, if $(H,A,\mathcal{U},p)$ is a Yes-instance of $\varphi$-\CFD~($\varphi \in \{ \mathsf{EF},\mathsf{EF1},\mathsf{EFX}\}$), then $(G,A,\mathcal{U},p)$ is a Yes-instance of $\varphi$-\CFD.

Next, we prove that if $(G,A,\mathcal{U},p)$ is a Yes-instance of $\varphi$-\CFD, then $(H,A,\mathcal{U},p)$ is a Yes-instance of $\varphi$-\CFD.
Let $\Pi=(\pi_1, \ldots, \pi_n)$ be an allocation that satisfies $\varphi$ for $(G,A,\mathcal{U},p)$. Let $X \subseteq \set{v_1,\ldots,v_p}$ such that, in allocation $\Pi$, no vertex in $X$ is assigned to any agent $i$. Moreover, let $Y\subseteq \set{v_{p+1},\ldots, v_{|S|}}$ such that each vertex in $Y$ is assigned to some agent $i$ in $\Pi$. Since the total number of vertices that are assigned to agents in $\Pi$ is $p$, observe that $(p-|X|) + |Y| \leq p$, and hence $|X| \geq |Y|$. Let $x_1,\ldots, x_{|X|}$ be an ordering of vertices in $X$ and $y_1,\ldots,y_{|Y|}$ be an ordering of vertices in $Y$. Now, we will create a $\varphi$ allocation $\Pi' = (\pi'_1,\ldots,\pi'_n)$ for $(H,A,\mathcal{U},p)$ using $\Pi$ in the following manner.
For each agent $i\in A$, $\pi'_i = \set{(\pi_i \setminus \set{y_j}) \cup \set{x_j}~|~ y_j\in \pi_i}$. 
Now, we observe some properties of $\Pi'$ to establish that $\Pi'$ is indeed a $\varphi$-\CFD allocation for $(H,A,\mathcal{U},p)$. First, we have the following claim to establish that $\Pi'$ is a valid allocation for $(H,A,\mathcal{U},p)$.

\begin{clm}\label{O:validKernel} 
The following are true for the tuple $\Pi'$. 
\begin{enumerate}
    \item For $i \in A$, $\pi'_i \subseteq V(H)$ and $H[\pi'_i]$ is connected.
    \item For distinct $i,j\in A$, $\pi'_i \cap \pi'_j = \emptyset$.
    \item $|\bigcup_{i\in A} \pi'_i| = p$.
\end{enumerate}
\end{clm}
\begin{proofofclaim}
\begin{enumerate}
    \item First, we show that for $i\in A$, $H[\pi'_i]$ is connected. If $|\pi_i| = 1$, then observe that $|\pi'_i| = 1$ and hence $H[\pi'_i]$ is connected. So, we assume $|\pi_i| >1$. If $\pi'_i = \pi_i$, then clearly $H[\pi'_i]$ is connected since $H$ is an induced subgraph of $G$. Otherwise, there is some vertex $y_j$ such that $y_j\in \pi_i$, $y_j \notin \pi'_i$, and $x_j \in \pi'_i$. Since $G[\pi_i]$ is connected and $y_j \in I$, there is at least one vertex $w\in N[y_j]$ such that $w\in \pi_i$. Note that $w\in \pi'_i$ as well (by definition of $\pi'_i$). Since $N_U(y_j) = N_U(x_j)$, observe that $H[(\pi'_i\setminus \set{y_j}) \cup \set{x_j}]$ is connected. Therefore, $H[\pi'_i]$ is connected.
    
    \item Second, we show that for distinct $i,j\in A$, $\pi'_i\cap \pi'_j = \emptyset$. Since $\pi_i \cap \pi_j = \emptyset$, $(\pi'_i\setminus X)\cup (\pi'_j\setminus X) = \emptyset$. Targeting contradiction, assume that $\pi'_i \cap \pi'_j \neq \emptyset$. Then, there exists some $x_\ell \in X$ such that $x_\ell \in \pi'_i$ and $x_\ell \in \pi'_j$. Therefore, by the construction of $\pi'_i$ and $\pi'_j$, $y_\ell \in \pi_i$ and $y_\ell \in \pi_j$, which is a contradiction to the fact that $\pi_i\cap \pi_j = \emptyset$.
    
    \item Finally, we show that $|\bigcup_{i\in A} \pi_i| = p$. This can easily be seen by observing that for $i\in A$, $|\pi'_i| = |\pi_i|$. (It is trivial when $\pi_i = \pi'_i$. Otherwise, if there is some element $y\in \pi_i\cap Y$, then we replace it with a unique element from $X$ in $\pi'_i$.)
\end{enumerate}
This completes our proof.    
\end{proofofclaim}

Claim~\ref{O:validKernel} establishes that $\Pi'$ is indeed an allocation for $(H,A,\mathcal{U},p)$. In the following claim, we prove additional properties of $\Pi'$ that we will use to establish that $\Pi'$ also satisfies $\varphi$.

\begin{clm}\label{O:goodPi}
Consider the allocations $\Pi$ and $\Pi'$. For a bundle $\pi'_i$, let $\tau'_i = \set{v\in \pi'_i~|~ H[\pi'_i \setminus \set{v}] \text{ is connected}}$. Then,  for $i,j \in A$,
\begin{enumerate}
    \item $u_i(\pi'_j) = u_i(\pi_j)$,
    \item $\max_{v\in \tau'_j} u_i(v) = \max_{w \in \tau_j} u_i(w)$, and 
    \item $\min_{v\in \tau'_j} u_i(v) = \min_{w \in \tau_j} u_i(w)$.
\end{enumerate}
\end{clm}
\begin{proofofclaim}
All these statements are trivial when $\pi'_j = \pi_j$. So, for the rest of the proof, we assume that $\pi'_j \neq \pi_j$.
\begin{enumerate}
    \item Due to the construction of $\pi'_j$, $\pi'_j \setminus (X \cup Y) = \pi_j \setminus (X\cup Y)$. Moreover, note that we replace each element of $\pi_j \cap Y$ with a unique element from $X$ to get $\pi'_j$. Since for all vertices $x\in X$ and $y \in Y$, $u_i(x) = u_i(y)$ (for $i\in A$), we have that $u_i(\pi'_j) = u_i(\pi'_j)$. 
    
    \item Using the arguments used in Case (1), it is easy to see that there is a bijection $f:\pi_j \rightarrow \pi'_j$ such that for $v\in \pi_j$, $u_i(v) = u_i(f(v))$. Therefore, $\max_{v\in \tau'_j} u_i(v) = \max_{w \in \tau_j} u_i(w)$.
    
    \item Similarly to Case (2), due to the bijection $f$, $\min_{v\in \tau'_j} u_i(v) = \min_{w \in \tau_j} u_i(w)$.
\end{enumerate}
This completes our proof.    
\end{proofofclaim}

Finally, we use Claim~\ref{O:goodPi} to show that $\Pi'$ satisfies $\varphi$. For the sake of completeness, we now prove explicitly for each $\varphi \in \{ \mathsf{EF},\mathsf{EF1},\mathsf{EFX}\}$ that if $\Pi'$ is not a $\varphi$ allocation for $(H,A,\mathcal{U},p)$, then $\Pi$ is not a $\varphi$ allocation for $(G,A,\mathcal{U},p)$. Targeting contradiction, assume that $\Pi'$ is not a $\varphi$ allocation for $(H,A,\mathcal{U},p)$. We have the following cases:

\begin{enumerate}
    \item $\mathbf{\varphi}= \textbf{EF}:$ There exist some $i,j\in A$ such that $u_i(\pi'_i) < u_i(\pi'_j)$. Since for each $p,q \in A$, $u_p(\pi_q) = u_p(\pi'_q)$ (due to Claim~\ref{O:goodPi} (1)), we have that $u_i(\pi_i) <u_i(\pi_j)$. This contradicts the fact that $\Pi$ is an \EF~allocation for $(G,A,\mathcal{U},p)$.
    
    \item $\mathbf{\varphi}= \textbf{EF1}:$ There exist some $i,j\in A$ such that $u_i(\pi'_i) < u_i(\pi'_j)-\max_{v \in \tau'_j} u_i(v)$. Since for each $p,q \in A$, $u_p(\pi_q) = u_p(\pi'_q)$ and $\max_{v \in \tau_q} u_p(v) = \max_{w \in \tau'_q} u_p(w)$ (due to Claim~\ref{O:goodPi}), we have that $u_i(\pi_i) <u_i(\pi_j)- \max_{v \in \tau_j} u_i(v)$. This contradicts the fact that $\Pi$ is an \EFO~allocation for $(G,A,\mathcal{U},p)$. 
    
    \item $\mathbf{\varphi}= \textbf{EFX}:$ There exist some $i,j\in A$ such that $u_i(\pi'_i) < u_i(\pi'_j)-\min_{v \in \tau'_j} u_i(v)$. Since for each $p,q \in A$, $u_p(\pi_q) = u_p(\pi'_q)$ and $\min_{v \in \tau_q} u_p(v) = \min_{w \in \tau'_q} u_p(w)$ (due to Claim~\ref{O:goodPi}), we have that $u_i(\pi_i) <u_i(\pi_j)- \min_{v \in \tau_j} u_i(v)$. This contradicts the fact that $\Pi$ is an \EFX~allocation for $(G,A,\mathcal{U},p)$. 
    
\end{enumerate}
% \smallskip
% \noindent\textbf{EF:} Let $(G,A,\mathcal{U},p)$ is a Yes-instance of \EF-\CFD~and $\Pi = (\pi_1,\ldots,\pi_n)$ is an \EF~allocation for $(G,A,\mathcal{U},p)$. Now, we will create an \EF~allocation $\Pi' = (\pi'_1,\ldots,pi'_n)$ for $(H,A,\mathcal{U},p)$ using $\Pi$ in the following manner. If $\pi_i \in V(H)$ (for $i\in A$), then $\pi'_i = \pi_i$. Otherwise, 
This completes the proof of our lemma.
  \end{proof}

%where $\varphi \in \{ \mathsf{EF},\mathsf{EF1},\mathsf{EFX}\}$ 

In the following lemma we bound the size of our kernel.
%\todo{$\mathcal{A} \rightarrow \abs{\mathcal{A}}$.}
\begin{lemma}\label{L:kernelSize}
Let $(G,A,\mathcal{U},p)$ be an instance of $\varphi$-\CFD~such that RR\ref{RR1} cannot be applied to $(G,A,\mathcal{U},p)$. Then, $|V(G)|\leq 2^t\cdot val^{|\mathcal{A}|}\cdot p+t$.
\end{lemma}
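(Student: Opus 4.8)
The plan is to bound $|V(G)|$ by treating the vertex cover $U$ and the independent set $I = V(G)\setminus U$ separately. Since $|V(G)| = |U| + |I| = t + |I|$, it suffices to show that $|I| \le 2^t\cdot \mathsf{val}^{|\mathcal{A}|}\cdot p$. To this end, I would partition $I$ into equivalence classes, declaring two vertices $u,v\in I$ equivalent exactly when $N_U(u)=N_U(v)$ and $u_i(u)=u_i(v)$ for every agent $i\in A$; these are precisely the two conditions that appear in RR\ref{RR1}.

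The first step is to count the equivalence classes. Each class is determined by the pair consisting of (i) the common neighborhood $N_U(v)\subseteq U$, and (ii) the common valuation profile. As $|U|=t$, there are at most $2^t$ choices for (i). For (ii), the key observation is that the requirement $u_i(u)=u_i(v)$ for all $i\in A$ is equivalent to $u_{\mathsf{a}}(u)=u_{\mathsf{a}}(v)$ for every agent type $\mathsf{a}\in\mathcal{A}$, since agents sharing a type share a valuation function. Hence a valuation profile is simply a function $\mathcal{A}\to\mathsf{VAL}$, and there are at most $|\mathsf{VAL}|^{|\mathcal{A}|}=\mathsf{val}^{|\mathcal{A}|}$ of them. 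Multiplying, the number of equivalence classes is at most $2^t\cdot\mathsf{val}^{|\mathcal{A}|}$.

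The second step is to bound the size of each class using the hypothesis that RR\ref{RR1} is not applicable to $(G,A,\mathcal{U},p)$. If some class $S\subseteq I$ contained more than $p$ vertices, then $S$ would witness all preconditions of RR\ref{RR1} (namely $|S|>p$, a common $U$-neighborhood, and, for every agent $i\in A$, a common value $u_i$), so RR\ref{RR1} would apply, contradicting the hypothesis. Therefore each class has at most $p$ vertices.

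Combining the two steps, $|I|$ is at most the number of classes times the maximum class size, i.e.\ $|I|\le 2^t\cdot\mathsf{val}^{|\mathcal{A}|}\cdot p$; adding $|U|=t$ gives the claimed bound $|V(G)|\le 2^t\cdot\mathsf{val}^{|\mathcal{A}|}\cdot p+t$. This is essentially a routine count of ``indistinguishable'' independent-set vertices, so I do not expect a real obstacle; the only point demanding care is reducing the per-agent equality condition to a per-\emph{type} condition, which is exactly what keeps the exponent at $|\mathcal{A}|$ rather than at the (potentially much larger) number of agents $n$.
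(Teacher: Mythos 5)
Your proposal is correct and follows essentially the same route as the paper: partition $I$ into equivalence classes by common $U$-neighborhood and common valuation, reduce the per-agent equality to a per-type equality to get at most $2^t\cdot\mathsf{val}^{|\mathcal{A}|}$ classes, and use non-applicability of the reduction rule to cap each class at $p$ vertices. Your explicit note that the type-reduction is what keeps the exponent at $|\mathcal{A}|$ rather than $n$ is exactly the observation the paper makes.
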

\begin{proof}
We say that a subset $S\subseteq I$ is a \textit{equivalence class} if for every $v,w\in S$, $N(w) = N(v)$ and for any two agents $i,j\in A$, $u_i(w) = u_i(v)$.
For two vertices $w,v \in V(G)$, note that $u_i(w) = u_i(v)$ for each $i\in A$ if and only if $u_\mathsf{a}(w) = u_\mathsf{a}(v)$ for each $\mathsf{a} \in \mathcal{A}$.  Therefore, the total number of equivalence classes in $I$ can be at most $2^t val^{|\mathcal{A}|}$ (since the total type of neighbourhoods in $I$ can be at most $2^t$). Moreover, note that if any of the equivalence class $S$ contains more than $p$ vertices, then we can apply RR\ref{RR1}. Therefore, each equivalence class contains at most $p$ vertices. Hence, $|I| \leq 2^t\cdot val^{|\mathcal{A}|}\cdot p$. Since $|U| =t$, we have that $|V(G)|\leq 2^t\cdot val^{|\mathcal{A}|}\cdot p+t$.
  \end{proof}

Since $|\mathcal{A}| \leq \mathsf{val}$,  RR\ref{RR1}, along with  Lemma~\ref{L:RR1} and Lemma~\ref{L:kernelSize}, imply that for $\varphi \in \{ \mathsf{EF},\mathsf{EF1},\mathsf{EFX}\}$, $\varphi$-\CFD admits a kernel with at most $p2^{\mathsf{vcn}}\mathsf{val}^{\mathsf{val}}+\mathsf{vcn}$ vertices. This gives us the desired kernel for $\varphi$-\CFD, for $\varphi \in \{ \mathsf{EF},\mathsf{EF1},\mathsf{EFX}\}$, from Theorem~\ref{th:Kernel}.

% \begin{theorem}\label{T:kernel}
% Let $(G,A,\mathcal{U},p)$ be an instance of $\varphi$-\CFD~where $\varphi \in \{ \mathsf{EF},\mathsf{EF1},\mathsf{EFX}\}$ and $G$ has a vertex cover $U$ of size $t$. Then, $\varphi$-\CFD~is \FPT~parameterized by $t+p+val$. More specifically, $\varphi$-\CFD~admits a kernel with at most $2^t\cdot val^{\mathcal{A}}\cdot p+t$ vertices.
% \end{theorem}

\subsection{Exponential Kernel for \PROP-\CFD}
In this section, we design an exponential kernel for \PROP-\CFD. First, we give an overview of the ideas leading to this kernelization. As remarked earlier in Section~\ref{SS:P}, since it might so happen that \PROP-\CFD~exists for a subgraph $H$ of $G$ but not for $G$, we cannot simply delete vertices as before (which might decrease the overall valuation of the graph for some agents) to get our kernel. So, we have to keep track of the ``valuation'' that is lost for each agent while deleting the vertices. In this quest, we augment the graph with some dummy vertices and assign them the valuation for each deleted vertex. Hence, even if we start with a unary (resp. binary) valuation instance, we might end up with an instance that does not admit unary (resp. binary) valuations. We tackle this problem by establishing that if the valuation for a dummy vertex gets ``too large'', then we are dealing with a No-instance. This ensures that we have a kernel that respects the unary (resp. binary) valuation if the original instance respects unary (resp. binary) valuation. Now, we discuss the details of our kernelization algorithm.

Let $(G,A,\mathcal{U},p)$ be an instance of \PROP-\CFD. Recall that $U$ is a vertex cover of $G$ and $I= V(G)\setminus U$. First, we use the following preprocessing step to generate an augmented instance $(G',A',$ $\mathcal{U}',p')$ in the following manner. 

\medskip
\noindent\textsc{Preprocessing Step}: We get $(G',A',\mathcal{U}',p')$ by adding $n$ dummy agents, $n$ dummy vertices, and setting $p' = p+n$. More formally:
\begin{enumerate}
    \item Let $V(G') = V(G)\cup\set{d_1,\ldots, d_n}$. Fix a vertex $w\in U$. Now, let $E(G) = E(G)\cup \set{d_id_j~|~i,j\in A} \cup \set{d_1w}$. 
    \item $A' = A\cup \set{n+1,\ldots, 2n}$.
    \item For $i\in [n]$ and for $v\in V(G)$, let $u'_i(v) = 2 u_i(v)$. Moreover, for $i\in [n]$ and for $v\in \{d_1,\ldots, d_n\}$, let $u'_i(v) = 0$ and $u'_{n+i}(d_i) = 1$. Moreover, for $n<i\leq 2n$ and $v\notin \set{d_1,\ldots, d_n}$, let $u'_i(v) = 0$.
    \item $p' = p+n$.
\end{enumerate}

Now, we have the following lemma.
\begin{lemma}\label{L:preprocess}
$(G,A,\mathcal{U},p)$ is a Yes-instance of \PROP-\CFD~if and only if $(G',A',\mathcal{U}',p')$ is a Yes-instance of \PROP-\CFD.
\end{lemma}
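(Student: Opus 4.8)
The plan is to prove both directions of the equivalence, tracking carefully how the dummy vertices and dummy agents interact with the proportionality constraints. First I would set up the basic arithmetic of the construction. Observe that for every original agent $i \in [n]$, the preprocessing doubles all utilities on $V(G)$ and assigns zero to the dummy vertices, so $u'_i(V(G')) = 2u_i(V(G))$; and the proportionality threshold for $i$ in the new instance over $2n$ agents is $\frac{u'_i(V(G'))}{2n} = \frac{2u_i(V(G))}{2n} = \frac{u_i(V(G))}{n}$, which is exactly the original threshold for $i$. For each dummy agent $n+i$, the only vertex of positive value is $d_i$ with $u'_{n+i}(d_i)=1$, so $u'_{n+i}(V(G')) = 1$ and its proportionality threshold is $\frac{1}{2n} < 1$; thus a dummy agent is satisfied precisely when it is allocated at least one vertex of positive value to it, i.e.\ when $d_i \in \pi'_{n+i}$ (its bundle having value $\geq 1 > \frac{1}{2n}$). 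These two computations are the crux and I would state them up front.

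For the forward direction, I would take a \PROP-allocation $\Pi=(\pi_1,\dots,\pi_n)$ for $(G,A,\mathcal{U},p)$ and extend it to $\Pi'=(\pi'_1,\dots,\pi'_{2n})$ by setting $\pi'_i = \pi_i$ for $i \in [n]$ and $\pi'_{n+i} = \{d_i\}$ for $i \in [n]$. I must check this is a valid allocation for $(G',A',\mathcal{U}',p')$: each original bundle stays connected in $G'$ (since $G$ is an induced subgraph of $G'$), each singleton $\{d_i\}$ is trivially connected, the bundles are pairwise disjoint (dummy vertices are new), and the total number of allocated vertices is $p + n = p'$. For fairness, the original agents meet their thresholds by the threshold-equality computation above (their bundle values double while the threshold stays fixed), and each dummy agent $n+i$ gets $u'_{n+i}(\pi'_{n+i}) = 1 \geq \frac{1}{2n}$.

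For the reverse direction, I would take a \PROP-allocation $\Pi'$ for the augmented instance and argue that the dummy vertices and dummy agents can be ``stripped off'' to recover a \PROP-allocation for the original instance. The key observation is that each dummy agent $n+i$ must be assigned some vertex with positive value to it; since the only such vertex is $d_i$, we must have $d_i \in \pi'_{n+i}$, so the $n$ dummy vertices are distributed one to each dummy agent, and no dummy agent receives any vertex of $V(G)$ (such vertices have zero value to dummy agents, but more importantly this forces the counting). Consequently no original agent's bundle contains a dummy vertex either, because all $n$ dummy vertices are consumed by the $n$ dummy agents. I would then restrict to $\pi_i = \pi'_i$ for $i \in [n]$: these bundles lie entirely in $V(G)$, remain connected (as $G[\pi_i]=G'[\pi'_i]$), are disjoint, and total exactly $p' - n = p$ vertices. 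Finally, each original agent satisfies its original threshold by reversing the threshold computation. The main obstacle I anticipate is the counting argument that pins down exactly which vertices the dummy agents can take: I need to rule out a dummy agent grabbing an original vertex from $V(G)$ (connected to $d_i$ through the dummy clique and the edge $d_1 w$), and to confirm that whatever the dummy agents take, the original agents still receive exactly $p$ vertices of $V(G)$ and remain connected and proportional; handling the connectivity of the dummy clique plus the bridge $d_1 w$ carefully is where the argument needs the most attention.
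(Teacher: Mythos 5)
Your forward direction and your two ``crux'' computations match the paper exactly: doubling the utilities on $V(G)$ precisely offsets doubling the number of agents, so the threshold of each original agent is unchanged, and a dummy agent $n+i$ is satisfied precisely when $d_i \in \pi'_{n+i}$ (whence, by disjointness, no original agent holds a dummy vertex). The genuine gap is in the reverse direction, at exactly the step you flag as needing attention: you assert that no dummy agent receives any vertex of $V(G)$, hence that the original agents receive exactly $p' - n = p$ vertices. This cannot be ``ruled out''---it is false in general. By your own (correct) claim, $d_1 \in \pi'_{n+1}$, and $d_1$ is adjacent to $w \in U$; nothing in the \PROP condition prevents $\pi'_{n+1}$ from being, say, $\{d_1, w\}$ or a larger connected set reaching into $G$ through the bridge $d_1w$, since zero-valued items in a bundle never hurt proportionality. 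Concretely, take any \PROP allocation of the original instance that remains \PROP after dropping one suitable vertex, and hand that vertex to dummy agent $n+1$ via the bridge: the result is a valid \PROP allocation of $(G',A',\mathcal{U}',p')$ in which the original agents hold only $p-1$ original vertices. So your restricted tuple $(\pi'_1,\ldots,\pi'_n)$ is a \PROP allocation for some budget $q \le p$, not necessarily for $p$ itself, and your plan to confirm an exact count of $p$ cannot succeed.

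The missing idea, which is how the paper closes this hole, is a monotonicity statement proved separately as Observation~\ref{O:proper}: if $(G,A,\mathcal{U},q)$ is a Yes-instance of \PROP-\CFD and $q < p \le |V(G)|$, then $(G,A,\mathcal{U},p)$ is a Yes-instance as well. Since $G$ is connected, some unallocated vertex is adjacent to some bundle; absorbing it preserves connectivity and can only increase that agent's own-bundle utility, while the thresholds $\frac{u_i(V(G))}{n}$ are unchanged. (Note this argument is specific to \PROP---adding a vertex to a bundle can create envy under \EF, \EFO, or \EFX---which is precisely why the paper needs it here and nowhere in the other kernel.) With that observation in hand, your reverse direction goes through as the paper's does: strip the dummies as you describe, conclude \PROP for the budget $q = \sum_{i\in[n]}|\pi'_i| \le p$, and then lift from $q$ to $p$.
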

\begin{proof}
%We present here an outline of the proof. 
In one direction, suppose $(G,A,\mathcal{U},p)$ is a Yes-instance of \PROP-\CFD~with $\Pi = (\pi_1,\ldots,\pi_n)$ as a \PROP~allocation. Then, observe that $\Pi' = (\pi'_1,\ldots, \pi'_{2n})$ where for $i\in [n]$, $\pi'_i = \pi_i$ and $\pi'_{n+i} = \set{d_i}$ is a \PROP~allocation for $(G',A',\mathcal{U}',p')$.

In the reverse direction, suppose $(G',A',\mathcal{U}',p')$ is a Yes-instance of \PROP-\CFD~with $\Pi' = (\pi'_1,\ldots, \pi'_{2n})$ as a \PROP~allocation. First, we prove the following claim.

\begin{clm}\label{C:prop}
For $i\in [n]$, $d_i \in \pi'_{n+i}$.
\end{clm}
\begin{proofofclaim}
The proof follows from the fact that for $i\in [n]$,  $u_{n+i}(V(G)\setminus \set{d_i}) = 0$ and $u_{n+i}(d_i) > 0$.
\end{proofofclaim}

Let $q=\sum_{i\in [n]}|\pi'_i|$. Note that $q\leq p$ (since $\sum_{i\in [2n]} |\pi'_i| = p+n$, and $\sum_{n+1\leq i \leq 2n} |\pi'_i| \geq n$). Moreover, note that, due to Claim~\ref{C:prop}, for $i\in [n]$, $\pi'_i \subseteq V(G)$ as well. Next, we claim that $\Pi = (\pi_1,\ldots,\pi_n)$ where $\pi_i = \pi'_i$, for $i\in [n]$, is a \PROP~allocation for $(G,A,\mathcal{U},q)$. 

\begin{clm}\label{C:prop2}
$\Pi$ is a \PROP~allocation for $(G,A,\mathcal{U},q)$.  
\end{clm}
\begin{proofofclaim}
First, observe that, due to the construction of $\Pi$, $\Pi$ is an allocation for $(G,A,\mathcal{U},q)$. For the rest of the proof, we prove  that $\Pi$ satisfies \PROP.
Targeting contradiction, assume that $\Pi$ does not satisfy \PROP. Hence, there is some $i\in [n]$ such that $u_i(\pi_i) < \frac{u_i(V(G))}{n}$. 
Since $u'_i(v)= 2u_i(v)$, for $v\in V(G)$ and $i\in [n]$, and $\pi'_i = \pi_i$, we have that $u'_i(\pi'_i) = 2u_i(\pi_i) < \frac{2u_i(V(G))}{n}$. Since $u'_i(V(G)) = 2u_i(V(G))$, for $i\in [n]$ (due to \textsc{Preprocessing Step}), we have that $u'_i(\pi'_i) < \frac{u'_i(V(G))}{n} < \frac{u'_i(V(G))}{2n}$, a contradiction to the fact that $\Pi'$ is a \PROP allocation for $(G',A',\mathcal{U}',p')$.
  \end{proofofclaim}

%This implies that $u_i(\pi_i)<\frac{2u_i(V(G))}{2n}$.

%Since $\Pi'$ is a \PROP~allocation for $(G',A',\mathcal{U}',p')$, note that $u'_i(\pi'_i) \geq \frac{u'_i(V(G'))}{2n}$ (since $|A'| = 2n$). Moreover, note that, due to the construction, $u'_i(V(G')) = 2u_i(V(G))$ (for $i\in [n]$). Therefore, $u'_i(\pi'_i) \geq \frac{2u_i(V(G))}{2n} \geq \frac{u_i(V(G))}{n}$. 
%Finally, we show that $u_i(\pi_i) =u'_i(\pi'_i)$. To see this, first observe that, for $j\in [n]$ and $v\in V(G)$, $u_j(v) = u'_j(v)$ (due to the construction). Since $\pi'_i = \pi_i$ and $\pi'_i \in V(G)$ (due to Claim~\ref{C:prop}, for $i,j \in [n]$, $d_j \notin \pi_i$), we have that $u_i(\pi_i) =u'_i(\pi'_i)$. Therefore, we have that $u_i(\pi_i) = u'_i(\pi'_i) \geq \frac{u_i(V(G))}{n}$, a contradiction to our assumption.  

Due to Claim~\ref{C:prop2}, $(G,A,\mathcal{U},q)$ (where $q\leq p$) is a Yes-instance of \PROP-\CFD. Therefore, due to Observation~\ref{O:proper}, $(G,A,\mathcal{U},p)$ is also a Yes-instance of \PROP-\CFD.
  \end{proof}

Let $D = \set{d_1,\ldots,d_n}$ be the set of dummy vertices added in the \textsc{Preprocessing Step}. Now, we have the following reduction rule. 

\begin{RR}[RR\ref{RR2}]\label{RR2}
Let $(G',A',\mathcal{U}',p')$ be an instance of \PROP-\CFD~(obtained from an initial instance $(G,A,\mathcal{U},p)$ after applying \textsc{Preprocessing Step} and possibly RR\ref{RR2}). Let $S\subseteq I$ be a set of vertices such that $|S|>p$ and for any two vertices $u,v\in S$, $N_U(u)= N_U(v)$ and for each agent $i\in A$, $u'_i(u)=u'_i(v)$. Moreover, let $v_1,\ldots v_{|S|}$ be an ordering of vertices in $S$. Then, let $G'' \Leftarrow G'-\set{v_{p+1},\ldots, v_{|S|}}$. Moreover, we modify $\mathcal{U}'$ to get $\mathcal{U}''$ in the following manner.  For $i\in A'$ and $v \in V(G'')$, $u''_i(v)\Leftarrow u'_i(v)$ and, finally, for $i\in A$ $(i\in [n])$, $u''_i(d_i)\Leftarrow u''_i(d_i)+\sum_{p+1 \leq j \leq |S|}u'_i(v_j)$. 
\end{RR}

We have the following lemma to prove that RR\ref{RR2} is safe.
\begin{lemma}\label{L:RR2}
RR\ref{RR2} is safe.
\end{lemma}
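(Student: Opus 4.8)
The plan is to prove a biconditional between the instance $(G',A',\mathcal{U}',p')$ before applying RR\ref{RR2} and the instance $(G'',A',\mathcal{U}'',p')$ after applying it. The guiding intuition is that the deleted vertices $v_{p+1},\ldots,v_{|S|}$ can never all be needed in any allocation (since at most $p$ vertices are assigned in total), so removing them does not destroy solvability; the subtlety specific to \PROP (as flagged in the remark following Lemma~\ref{L:subgraph}) is that deletion lowers $u'_i(V(G'))$, which could spuriously turn a No-instance into a Yes-instance by shrinking the proportional threshold $\tfrac{u'_i(V(G'))}{2n}$. The modification to $\mathcal{U}''$ — dumping the lost valuation $\sum_{p+1\le j\le |S|}u'_i(v_j)$ onto the dummy vertex $d_i$, which agent $i$ never receives (Claim~\ref{C:prop}) but which still counts toward $u''_i(V(G''))$ — is precisely designed to keep each agent's total valuation, and hence the threshold, unchanged.

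First I would record the two invariants that make the argument go through: for every agent $i\in[n]$, $u''_i(V(G'')) = u'_i(V(G'))$ (the valuation removed by deleting the $v_j$'s is exactly re-added to $d_i$), and for each dummy agent $i\in\{n+1,\ldots,2n\}$ the total valuation is untouched. I would then handle the forward direction. Given a \PROP allocation $\Pi'$ for $(G',A',\mathcal{U}',p')$, I would transform it into an allocation $\Pi''$ for $(G'',A',\mathcal{U}'',p')$ by the same replacement device used in Lemma~\ref{L:RR1}: let $X\subseteq\{v_1,\ldots,v_p\}$ be the retained vertices that $\Pi'$ leaves unassigned and $Y\subseteq\{v_{p+1},\ldots,v_{|S|}\}$ be the deleted vertices that $\Pi'$ assigns; since at most $p$ vertices are assigned, $|X|\ge|Y|$, so each assigned deleted vertex can be swapped for a distinct retained unassigned one of the same neighbourhood-and-valuation type, preserving connectivity and disjointness exactly as argued in Claim~\ref{O:validKernel}. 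Because swapped vertices have identical valuation for every agent, $u''_i(\pi''_i)=u'_i(\pi'_i)$ for each $i$; combined with the invariant $u''_i(V(G''))=u'_i(V(G'))$, the proportionality inequality $u''_i(\pi''_i)\ge \tfrac{u''_i(V(G''))}{2n}$ follows from the corresponding one for $\Pi'$. The dummy agents keep their $\{d_i\}$ bundles, and $d_i$'s increased valuation only helps agent $n+i$, so proportionality for them is immediate.

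For the reverse direction, given a \PROP allocation $\Pi''$ for $(G'',A',\mathcal{U}'',p')$, I would note that since $G''$ is an induced subgraph of $G'$ and the only valuation change is on the $d_i$'s, the same tuple $\Pi''$ is already a valid allocation for $(G',A',\mathcal{U}',p')$; I would then verify proportionality in the reverse. Here the key point is that each agent $i\in[n]$ never holds $d_i$ (by the analogue of Claim~\ref{C:prop}, since $u'_i(d_i)=0$ while the dummy agent $n+i$ strictly values it), so the extra mass placed on $d_i$ in $\mathcal{U}''$ does not appear in $u''_i(\pi''_i)$; thus $u''_i(\pi''_i)=u'_i(\pi''_i)$, and using $u''_i(V(G''))=u'_i(V(G'))$ the threshold is the same in both instances, giving proportionality for $\Pi''$ as an allocation in the original instance. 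The main obstacle — and the place I would spend the most care — is this reverse implication together with the dummy-vertex accounting: I must argue cleanly that the lost valuation is \emph{unavailable} to agent $i$ in both instances (deleted vertices cannot be assigned to $i$ in $G'$ without loss of generality by the swapping argument, and $d_i$ cannot be assigned to $i$ in $G''$), so that hiding it on $d_i$ neither inflates any agent's received valuation nor distorts the threshold. Once these bookkeeping identities are nailed down, both directions reduce to substituting equal quantities into the \PROP inequality.
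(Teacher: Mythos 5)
Your proposal is correct and takes essentially the same route as the paper's proof: the invariant $u'_i(V(G')) = u''_i(V(G''))$, the use of Claim~\ref{C:prop} to pin $d_i$ to dummy agent $n+i$ (so the valuation parked on $d_i$ never enters $u''_i(\pi''_i)$ and the proportionality threshold is preserved on both sides), the $X$/$Y$ swapping device from Lemma~\ref{L:RR1} for the direction from $(G',\mathcal{U}')$ to $(G'',\mathcal{U}'')$, and the trivial induced-subgraph reuse of the allocation for the converse. One phrasing slip worth fixing, though it does not affect the argument: RR\ref{RR2} increases $u''_i(d_i)$ for the \emph{original} agent $i\in[n]$, not the dummy agent's valuation, so the dummy agent $n+i$'s proportionality is immediate simply because $u''_{n+i}$ is unchanged and $d_i\in\pi''_{n+i}$ already carries agent $n+i$'s entire positive valuation.
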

\begin{proof}[Proof Sketch]
The proof is based on techniques similar to the proof of Lemma~\ref{L:RR1}. So,  we present here a sketch of the proof.

We first claim that for $i\in A'$, $u'_i(V(G')) = u''_i(V(G''))$. Due to the construction of $G''$, observe that if there is a vertex $v \in V(G')\setminus V(G'')$, then we add $u'_j(v)$ to $u'_j(d_j)$ (for each $j\in [n]$). Hence, $u'_i(V(G')) = u''_i(V(G''))$. Moreover, due to Claim~\ref{C:prop}, in any \PROP~allocation for $(G',A',\mathcal{U}',p')$ or $(G'',A',\mathcal{U}'',p')$, for $i\in[n]$, the vertex $d_i$ is assigned to the agent $n+i$. 

In one direction, suppose $(G'',A',\mathcal{U}'',p')$ is a Yes-instance and let $\Pi'' = (\pi''_1,\ldots, \pi''_{2n})$ be a \PROP~allocation for $(G'',A',\mathcal{U}'',p')$. Let $\Pi'=\Pi''$. Since $G''$ is an induced subgraph of $G'$, observe that $\Pi'$ is an allocation for $G'$. Moreover, since for each agent $i\in [2n]$, $u'_i(V(G')) = u''_i(V(G''))$ and $\pi'_i = \pi''_i$, it follows that $\Pi'$ satisfies \PROP~for $(G',A',\mathcal{U}',p')$ as well.

In the reverse direction, suppose $(G',A',\mathcal{U}',p')$ be a Yes-instance and let $\Pi' = (\pi'_1,\ldots, \pi'_{2n})$
be a \PROP~allocation for $(G',A',\mathcal{U}',p')$. Similarly to the proof of Lemma~\ref{L:RR1}, let $X\subseteq \set{v_1,\ldots,v_p}$ such that no vertex in $X$ is assigned to any agent in $\Pi'$. Moreover, let $Y \subseteq \set{v_{p+1},\ldots, v_{|S|}}$ such that each vertex in $Y$ is assigned to some agent in $\Pi'$. Similarly to the proof of Lemma~\ref{L:RR1}, here also $|X|\geq |Y|$. Here also, let $\set{x_1,\ldots,x_{|X|}}$ be an ordering of vertices in $X$ and $\set{y_1,\ldots, y_{|Y|}}$ be an ordering of vertices in $Y$. Now, we create an allocation $\Pi''$ for $(G'',A',\mathcal{U}'',p')$ in the following manner. For each agent $i\in A'$, $\pi''_i= \set{(\pi'_i\setminus \set{y_j})\cup \set{x_j}~|~y_j\in \pi'_i}$. Now, using the fact that $u'_i(V(G')) = u''_i(V(G''))$ and that for each agent $i\in A'$, $u'_i(\pi'_i) = u''_i(\pi''_i)$, it follows that $\Pi''$ satisfies \PROP~for $(G'',A',\mathcal{U}'',p')$. 
  \end{proof}
% \todo[inline]{Similarly to Lemma~\ref{L:kernelSize}, the size of the kernel can be bounded here as well. The only thing to consider here is that we add $n$ dummy vertices that we keep forever. But since $n<p$, this does not create a problem for our kernel. I will fix all these details.}
% For $S\subseteq U$ and $j\in \mathsf{VAL}$, we define equivalence classes $E_{S,j} \in 2^U \times \mathsf{VAL}$. We partition the vertices of $I$ into these equivalence classes and a vertex $v\in I$ is in equivalence class $E_{S,j}$ if and only if $N_U(v) +S$ and 

Next, we have the following rule that we apply after we have applied RR\ref{RR2} exhaustively. This rule ensures that the valuation on any vertex does not get ``much larger'' than the size of the kernel.

\begin{RR}[RR~3]\label{RR3}
Let $(G',A',\mathcal{U}',p')$ be an instance we get after applying \text{Preprocessing Step} and then an exhaustive application of RR\ref{RR2}. If for some $i\in [n]$,  $u_i(d_i) > u_i(V(G')\setminus D)$, then report a No-instance. 
\end{RR}

\begin{lemma}\label{L:RR3}
RR\ref{RR3} is safe.
\end{lemma}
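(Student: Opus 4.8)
The plan is to show that the instance flagged by RR\ref{RR3} is genuinely a No-instance, i.e., that whenever some original agent $i\in[n]$ satisfies $u_i(d_i) > u_i(V(G')\setminus D)$, no \PROP allocation for $(G',A',\mathcal{U}',p')$ can exist. I would argue by contradiction: suppose $\Pi' = (\pi'_1,\ldots,\pi'_{2n})$ is a \PROP allocation. The first ingredient is to pin down where the dummy vertices go. By Claim~\ref{C:prop} (which holds in every \PROP allocation of the augmented instance, since agent $n+j$ values only $d_j$ positively), each $d_j$ is assigned to its dedicated dummy agent $n+j$. In particular $d_i\notin\pi'_i$, and since $d_i$ is precisely the vertex carrying the large accumulated valuation for agent $i$, agent $i$ can never recover that value. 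Hence $\pi'_i\subseteq V(G')\setminus D$ and therefore $u_i(\pi'_i)\leq u_i(V(G')\setminus D)$.

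The second ingredient is to compute agent $i$'s total valuation on $G'$. By the \textsc{Preprocessing Step}, agent $i$ values every dummy vertex at $0$, and RR\ref{RR2} only ever augments the single coordinate $u_i(d_i)$ (the values of the vertices deleted from an equivalence class are poured into $d_i$ alone). Consequently $u_i(d_j)=0$ for all $j\neq i$, so
\[
u_i(V(G')) \;=\; u_i(V(G')\setminus D) + u_i(d_i).
\]
Plugging the bound from the first step into the proportionality requirement $u_i(\pi'_i)\geq u_i(V(G'))/|A'|$ with $|A'|=2n$, I obtain $u_i(V(G')\setminus D)\geq \big(u_i(V(G')\setminus D)+u_i(d_i)\big)/(2n)$, which rearranges to $(2n-1)\,u_i(V(G')\setminus D)\geq u_i(d_i)$. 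The goal is then to contradict this with the triggering inequality $u_i(d_i) > u_i(V(G')\setminus D)$.

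The step I expect to be the main obstacle is exactly this last comparison: the naive rearrangement above only forbids $u_i(d_i) > (2n-1)\,u_i(V(G')\setminus D)$, so to justify the weaker stated threshold I would need either to confirm that the intended separation factor is $|A'|-1$, or to supply a sharper upper bound on how much of $V(G')\setminus D$ a \emph{single} original agent can actually realize. Here I would exploit the budget structure carried over from the \textsc{Preprocessing Step} and RR\ref{RR2}: the $n$ original agents collectively receive at most $p$ non-dummy vertices (since the $n$ dummy agents consume the $n$ dummy vertices out of the $p'=p+n$ allocated), and each equivalence class retains at most $p$ survivors, so no one agent's bundle can absorb the whole surviving mass of the highly-valued classes. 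I would make this counting precise to tighten $u_i(\pi'_i)$ below the proportional share and close the gap, and then conclude safeness by combining the resulting contradiction with the forward correspondence already established in Lemma~\ref{L:preprocess} (so that reporting No for $(G',A',\mathcal{U}',p')$ is equivalent to reporting No for the original instance).
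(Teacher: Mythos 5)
Your opening steps coincide exactly with the paper's proof: by the argument of Claim~\ref{C:prop}, every dummy vertex $d_j$ is pinned to dummy agent $n+j$, hence $\pi'_i\subseteq V(G')\setminus D$; and since the \textsc{Preprocessing Step} and RR\ref{RR2} pour all of agent $i$'s dummy value into $d_i$ alone, $u'_i(V(G'))=u'_i(V(G')\setminus D)+u'_i(d_i)$. But the obstacle you flag at the end is genuine, and the paper does not actually resolve it: its proof derives $u'_i(\pi'_i)\le u'_i(V(G')\setminus D)<\frac{u'_i(V(G'))}{2}$ and then asserts that therefore $u'_i(\pi'_i)<\frac{u'_i(V(G'))}{2n}$ ``for any $n\ge 1$'' --- which runs the inequality in the wrong direction, since $\frac{u'_i(V(G'))}{2n}\le\frac{u'_i(V(G'))}{2}$; the conclusion follows only when $n=1$ (and, by a similar computation, the rule is also fine for $n=2$). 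Your rearrangement is correct: the stated trigger certifies a No-instance only when $u'_i(d_i)>(2n-1)\,u'_i(V(G')\setminus D)$, i.e., with separation factor $|A'|-1$.

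Your fallback plan --- tightening via the budget of $p$ non-dummy vertices and the $p$-survivor bound per equivalence class --- cannot close this gap, because RR\ref{RR3} with the stated threshold is in fact unsafe for $n\ge 3$. Take $n=3$, $p=10$, and let $G$ be a path $P$ on $8$ vertices with endpoint $u^\ast$ (take $U=V(P)$ as the vertex cover), with leaves attached to $u^\ast$: a class of $121$ twin leaves, each valued $1$ by agent $1$ and $0$ by agents $2,3$, plus two leaves $h_2,h_3$ valued $1$ only by agents $2$ and $3$, respectively; agent $1$ additionally values one vertex of $P$ at $100$ and the rest at $0$. Allocating $\pi_1=V(P)$, $\pi_2=\{h_2\}$, $\pi_3=\{h_3\}$ assigns exactly $p=10$ vertices and satisfies \PROP (since $100\ge 221/3$ and $1\ge 1/3$), so this instance --- and, by Lemma~\ref{L:preprocess} and Lemma~\ref{L:RR2}, the reduced instance --- is a Yes-instance; yet RR\ref{RR2} deletes $111$ twins, giving $u''_1(d_1)=2\cdot 111=222>220=u''_1(V(G'')\setminus D)$, so RR\ref{RR3} fires and wrongly reports No. Hence the viable repair is precisely your first alternative, not the counting patch: strengthen the trigger to $u'_i(d_i)>(2n-1)\,u'_i(V(G')\setminus D)$. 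Safeness then \emph{is} your one-line rearrangement, and the weakened rule still serves its purpose in the kernelization, since it caps $u'_i(d_i)$ by $(2n-1)\,u'_i(V(G')\setminus D)\le 2p\cdot u'_i(V(G')\setminus D)$ (recall $p\ge n$ in any non-trivial instance), which is all that Lemma~\ref{L:sizeProp} and the remark on preserving unary/binary valuations require.
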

\begin{proof}
Targeting contradiction, assume that for some $i\in [n]$, $u_i(d_i) > u_i(V(G')\setminus D)$ and $(G',A',\mathcal{U}',p')$ is a Yes-instance and let $\Pi'=(\pi'_1,\ldots,\pi'_{2n})$ is an allocation satisfying \PROP. Note that (due to arguments identical to Claim~\ref{C:prop}) no vertex in $D$ is assigned to the agent $i$ (for $i\in [n]$) in any \PROP~allocation for $(G',A',\mathcal{U}',p')$. Hence, $\pi'_i \subseteq V(G')\setminus D$. Therefore, $u'_i(\pi'_i) \leq u'_i(V(G')\setminus D) < \frac{u'_i(V(G'))}{2}$. Thus, $u'_i(\pi'_i) < \frac{u'_i(V(G'))}{2n}$ (for any $n\geq 1$), contradicting the fact that $\Pi'=(\pi'_1,\ldots,\pi'_{2n})$ is an allocation for $(G',A',\mathcal{U}',p')$ satisfying \PROP.   
  \end{proof}

Next, we establish a bound on the size of the reduced instance. We have the following lemma.

\begin{lemma}\label{L:sizeProp}
Let $(G,A,\mathcal{U},p)$ be an instance of \PROP-\CFD~such that $G$ has a vertex cover $U$ of size $t$. Let $(G',A',\mathcal{U}',p')$ be the instance we get after applying the \textsc{Preprocessing Step} and an exhaustive application of RR\ref{RR2}. Then, $|V(G')| \leq 2^t val^{|\mathcal{A}|}p+t+n$ and $p' \leq p+n$.
\end{lemma}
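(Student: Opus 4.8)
The plan is to bound $|V(G')|$ by counting the vertices contributed by the vertex cover, the dummy vertices, and the independent set $I$, and to track how $p'$ evolves through the two modifications. First I would observe that $V(G')$ decomposes into three disjoint parts: the original vertex cover $U$ with $|U| = t$, the set $D = \{d_1,\ldots,d_n\}$ of $n$ dummy vertices added in the \textsc{Preprocessing Step}, and the independent set $I = V(G) \setminus U$ (whose vertices are only deleted, never added, by RR\ref{RR2}). Note that the dummy vertices $d_1,\ldots,d_n$ form a clique among themselves and $d_1$ attaches to a single vertex $w \in U$, so they are \emph{not} part of the independent set $I$ over which the reduction rule operates; thus RR\ref{RR2} never touches them.

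The core of the argument is bounding $|I|$ in the reduced instance, and this is essentially the same counting as in Lemma~\ref{L:kernelSize}. I would partition $I$ into equivalence classes, where two vertices $u,v \in I$ are equivalent if $N_U(u) = N_U(v)$ and $u'_i(u) = u'_i(v)$ for every agent $i \in A'$. As in the proof of Lemma~\ref{L:kernelSize}, the number of distinct neighborhoods into $U$ is at most $2^t$, and the number of distinct valuation profiles is at most $\mathsf{val}^{|\mathcal{A}|}$ (using that $u'_i(u) = u'_i(v)$ for all $i \in A$ iff the values agree across all agent types $\mathsf{a} \in \mathcal{A}$, and that the dummy agents $n+1,\ldots,2n$ assign value $0$ to every vertex of $I$, so they contribute nothing to refining the classes). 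Since RR\ref{RR2} has been applied exhaustively, no equivalence class in $I$ contains more than $p$ vertices, giving $|I| \leq 2^t \mathsf{val}^{|\mathcal{A}|} p$. Combining the three parts yields
\[
|V(G')| = |U| + |D| + |I| \leq t + n + 2^t \mathsf{val}^{|\mathcal{A}|} p,
\]
which is the claimed bound.

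For the bound on $p'$, I would simply recall that the \textsc{Preprocessing Step} sets $p' = p + n$ and that RR\ref{RR2} modifies the graph and the valuation functions but leaves the target value $p'$ unchanged (it only deletes vertices from $I$ and redistributes their lost valuation onto the dummy vertices $d_i$). Hence $p' = p + n$ throughout, giving $p' \leq p + n$.

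The main obstacle—and it is a mild one—is making sure the equivalence-class count is correct over the \emph{augmented} agent set $A'$ rather than $A$, and that the dummy vertices $D$ are accounted for separately from $I$ rather than being swept into the counting argument. Once one checks that RR\ref{RR2}'s equivalence condition is stated only in terms of the original agents $i \in A$ and only over $S \subseteq I$ (so the dummy machinery does not interfere), the bound follows by the identical reasoning as Lemma~\ref{L:kernelSize}, with the extra additive $+n$ coming from the dummy vertices. I would therefore keep the proof short, referring back to Lemma~\ref{L:kernelSize} for the counting and only spelling out the additional $+n$ term and the invariance of $p'$.
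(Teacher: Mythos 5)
Your proposal is correct and matches the paper's proof, which likewise bounds $|V(G')|$ by noting that the \textsc{Preprocessing Step} adds $n$ vertices and agents and then reuses the equivalence-class counting of Lemma~\ref{L:kernelSize} verbatim, with $p'=p+n$ unchanged by RR\ref{RR2}. The only difference is that you spell out details the paper leaves implicit (the dummy vertices lying outside $I$, the dummy agents not refining the equivalence classes), which is a faithful expansion rather than a different route.
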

\begin{proof}
The proof follows from the fact that we add $n$ vertices and agents in the \textsc{Preprocessing step} and then using the identical arguments as used in the proof of Lemma~\ref{L:kernelSize}.
  \end{proof}

Since $|A|= n \leq p$ and $|\mathcal{A}| \leq val$, RR\ref{RR2}, along with Lemmas~\ref{L:preprocess}, \ref{L:RR2}, and \ref{L:sizeProp}, imply that \PROP-\CFD~admits a kernel with at most $2^t val^{val}p+t+p$ vertices. Moreover, we have the following remark.

% \begin{theorem}\label{T:propKernel}
% \PROP-\CFD~is \FPT~parameterized by $t+p+val$ where $t$ is the vertex cover of the input graph. More specifically, \PROP-\CFD~admits a kernel with at most $2^t val^{val}p+t+p$.
% \end{theorem}

\begin{remark}
Due to RR\ref{RR3} and Lemma~\ref{L:RR3}, if the initial instance has unary valuation, then the kernel also has unary valuation. Similarly, if the initial instance has binary valuation, then the kernel also has binary valuations.
\end{remark}
Finally, due to kernelization algorithms presented in this section, we have the following theorem.
\Kernel*

\section{Incompressibility}\label{S:NPoly}
In this section, we complement our exponential kernels by showing that it is unlikely that $\varphi$-\CFD~admits a polynomial kernel parameterized by $\mathsf{vcn}+val+p+|A|$, where $\varphi \in \{ \mathsf{PROP}, \mathsf{EF},\mathsf{EF1},\mathsf{EFX}\}$. For this purpose, we first define the following problem. In \RBDS, we are given a bipartite graph $G$ with a vertex bipartition $V(G) = T \cup N$ and a non-negative integer $k$. A set of vertices $N'\subseteq N$ is said to be an \textit{RBDS} if each vertex in $T$ has a neighbour in $N'$. The aim of \RBDS is to decide whether there exists an \textit{RBDS} of size at most $k$ in $G$. We assume that $k<|N|$ as otherwise, the problem becomes trivial.
Dom et al.~\cite{RBDSIncompressibility} proved that it is unlikely for \RBDS parameterized by $|T|+k$ to admit a polynomial compression:
\begin{proposition}[\cite{RBDSIncompressibility}]\label{R:RBDS}
\RBDS parameterized by $|T|+k$ does not admit a polynomial compression, unless \NPoly.
\end{proposition}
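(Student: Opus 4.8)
The plan is to invoke the now-standard lower-bound machinery for kernelization and exhibit an OR-cross-composition into \RBDS parameterized by $|T|+k$. Recall the framework (see Cygan et al.~\cite{bookParameterized}): if an $\mathsf{NP}$-hard language admits an OR-cross-composition into a parameterized problem $Q$, then $Q$ does not admit a polynomial compression unless \NPoly. Since \RBDS is itself $\mathsf{NP}$-complete (it is equivalent to \textsc{Set Cover}, with $T$ playing the role of the universe and $N$ the role of the available sets), it suffices to cross-compose the classical version of \RBDS into its parameterized form: I would take $t$ instances and produce a single instance of \RBDS whose parameter $|T|+k$ is bounded by a polynomial in the per-instance parameter plus $\log t$, and which is a Yes-instance exactly when at least one input is. The whole difficulty is to keep \emph{both} $|T|$ and $k$ small while faithfully realizing the logical OR.

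First I would normalize the inputs. Using a polynomial equivalence relation, I bucket the $t$ instances so that within a bucket they share the same terminal count $\tau=|T|$, the same budget $k$, and the same number of blue vertices; malformed instances are mapped to a fixed trivial No-instance. By padding with trivial No-instances I may also assume $t=2^{s}$ with $s=\lceil\log t\rceil$. Then I merge the terminal sides: the combined graph keeps a single terminal set $r_1,\dots,r_\tau$, and each blue vertex $v$ of instance $j$ becomes a private blue vertex $v^{(j)}$ attached to exactly those $r_i$ that $v$ dominated inside instance $j$. Crucially, the terminals are shared, so at this stage $|T|=\tau$; the blue side grows to $t\nu$, but blue vertices are not part of the parameter. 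If the construction stopped here, a size-$k$ solution could mix blue vertices from several instances, computing a ``union'' rather than an OR, so the core work is to force any bounded solution to commit to a single instance.

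The hard part will be exactly this single-instance enforcement, because \RBDS is a \emph{covering} problem: its constraints say that each terminal is dominated by at least one chosen vertex, whereas committing to one instance is an equality/conflict constraint (``do not mix identifiers''), which covering cannot express directly. To resolve this I would follow the \emph{colors and IDs} technique of Dom, Lokshtanov, and Saurabh~\cite{RBDSIncompressibility}: assign each instance $j$ its binary identifier $\mathrm{id}(j)\in\{0,1\}^{s}$ and add only $\mathcal{O}(s)=\mathcal{O}(\log t)$ auxiliary terminals together with a small instance-selection gadget whose blue vertices encode the bit choices. The gadget is designed so that a solution of the prescribed budget must (i) fix a consistent bit-pattern, i.e.\ select a single instance $j^{\star}$, and (ii) then dominate $r_1,\dots,r_\tau$ using only the blue vertices $v^{(j^\star)}$ of that instance; hence the combined instance is a Yes-instance iff some $j^\star$ was a Yes-instance. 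Because the identifiers are binary, the extra terminals number $\mathcal{O}(\log t)$ and the budget increases only by an additive $\mathcal{O}(\log t)$, so the new parameter is $|T|+k=\tau+k+\mathcal{O}(\log t)$, as required. I expect the cleanest way to implement (i)--(ii) is to route through the colored variant \textsc{Colored Red-Blue Dominating Set} (where selecting one blue vertex per color makes the bit-coordination gadget transparent) and then apply a polynomial parameter transformation back to \RBDS; verifying that these gadgets simultaneously enforce consistency, preserve the OR-semantics, and keep $|T|+k$ polynomially bounded is the main technical obstacle, and is precisely the content established in~\cite{RBDSIncompressibility}.
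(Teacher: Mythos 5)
The paper does not actually prove this proposition---it is imported verbatim as a citation to Dom, Lokshtanov, and Saurabh~\cite{RBDSIncompressibility}---and your proposal is a faithful reconstruction of precisely the argument in that cited source: an OR-(cross-)composition for \RBDS using the colors-and-IDs technique, with shared terminals, instance identifiers adding only $\mathcal{O}(\log t)$ terminals and budget, routed through the colored variant and a polynomial parameter transformation back to \RBDS. Since you correctly isolate the key difficulty (forcing a budget-bounded covering solution to commit to a single instance) and defer the gadget-level verification to the same reference the paper itself relies on, your approach is essentially the same as the paper's.
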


\subsection{Incompressibility for \PROP and \EF} 
In this section, we show that \PROP-\CFD and \EF-\CFD~parameterized by $\mathsf{vcn}+p+val+|A|$ does not admit a polynomial compression by developing a polynomial parameter transformation from  \RBDS parameterized by $|T|+k$ to \CFD~parameterized by  $\mathsf{vcn}+val+p+|A|$. 

\medskip
\noindent\textbf{Polynomial Parameter Transformation.} Suppose $((G,k),|T|+k)$ is an instance of \RBDS parameterized by $|T|+k$ such that $V(G) = T\cup N$. For brevity, let $P$ denote the value of the parameter $\mathsf{vcn}+\mathsf{val}+p+|A|$. First we construct a graph $G'$ with $V(G') = T'\cup N'$ from $G$ by adding two dummy vertices $d_1$ and $d_2$ such that $T' = T\cup \{d_1,d_2\}$, $N'=N$, and $E(G') = E(G)\cup \{d_1x,d_2x~|~x\in N'\} $. Now, we construct the instance $((G',A,\mathcal{U},p),P)$ of \CFD~such that $((G,k),|T|+k)$ is a Yes-instance if and only if $((G',A,\mathcal{U},p),P)$ is a Yes-instance and $|P| \leq h(|T|+k)$ where $h$ is a polynomial function. Let $p = |T|+k+2$ and $A = \set{1,2}$. We will define the valuation function $\mathcal{U}$ based on the fairness criterion. %We have the following lemma.

%\NoPoly*
% \medskip
% \noindent\textbf{PROP-ICFD:}
\subsubsection{\PROP-\CFD}For each vertex $v\in N'$, let $u_1(v)=u_2(v) = 0$. For each vertex $v\in T'\setminus \set{d_1}$, let $u_1(v)=u_2(v)=1$ and $u_1(d_1)=u_2(d_1)= |T|+1$. Note that $\mathsf{val}=3$ and $|\mathcal{A}| = 1$ (i.e., identical valuation). Observe that our reduction is a polynomial parameter transformation since $P=\mathsf{vcn}+\mathsf{val}+p+|A| \leq |T'|+3+|T|+k+2+2 = 2|T|+k+9$ (as $T'$ is a vertex cover of $G'$ and $|T'| = |T|+2$). Hence, the following lemma implies the incompressibility of \PROP-\CFD.
\begin{lemma}\label{L:IPROP}
$((G',A,\mathcal{U},p),P)$ is a Yes-instance of \PROP-\CFD if and only if $((G,k),t+k)$ is a Yes-instance of \RBDS.
\end{lemma}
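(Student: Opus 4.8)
The plan is to exploit the fact that the valuations are identical and there are only $n=2$ agents, so by the definition of \PROP each agent must receive value at least $\frac{u(V(G'))}{2}$. First I would compute $u(V(G'))$: the $N$-vertices contribute $0$, each of the $|T|+1$ vertices in $T'\setminus\{d_1\}$ contributes $1$, and $d_1$ contributes $|T|+1$, so $u(V(G'))=(|T|+1)+(|T|+1)=2|T|+2$ and the proportional threshold is exactly $|T|+1$ per agent. The decisive structural observation is that the total value $2|T|+2$ equals twice the threshold, so in any \PROP allocation both agents must receive value exactly $|T|+1$, and \emph{all} of $T'$ must be allocated (leaving any positive-value vertex unassigned would drop the total allocated value below $2|T|+2$, making it impossible for both bundles to reach the threshold).

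For the forward direction, suppose $S\subseteq N$ is an RBDS of $G$ with $|S|\le k$. I would set $\pi_2=\{d_2\}\cup T\cup S$ and $\pi_1=\{d_1\}\cup R$, where $R$ is an arbitrary set of $k-|S|$ vertices of $N\setminus S$ (such $R$ exists since $|N|>k$). Connectivity is immediate: $d_1$ and $d_2$ are adjacent to every vertex of $N$, so $G'[\pi_1]$ is a star centred at $d_1$, and in $G'[\pi_2]$ each $t\in T$ reaches $d_2$ through its neighbour in $S$ (using that $S$ dominates $T$ and that $d_2$ sees all of $N$). Both bundles have value exactly $|T|+1$, so \PROP holds, and the total number of allocated vertices is $(1+|T|+|S|)+(1+(k-|S|))=|T|+k+2=p$, as required.

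The reverse direction carries the main content. Given a \PROP allocation $(\pi_1,\pi_2)$, the observation above forces all of $T'$ to be allocated and each bundle to have value exactly $|T|+1$. Since $d_1$ alone already meets the threshold, the agent holding $d_1$ (say agent $1$) can hold no further positive-value vertex; hence all of $\{d_2\}\cup T$ lies in $\pi_2$, while $\pi_1\setminus\{d_1\}\subseteq N$ and $\pi_2\setminus(\{d_2\}\cup T)\subseteq N$. I would then take $S:=\pi_2\cap N$ and argue it is an RBDS: because $T$ is independent and its only neighbours lie in $N$, connectivity of $G'[\pi_2]$ forces every $t\in T$ to have a neighbour in $S$, so $S$ dominates $T$. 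Finally, a vertex count gives $|S|\le k$: from $|\pi_1|+|\pi_2|=p$ we get $(1+|\pi_1\cap N|)+(1+|T|+|S|)=|T|+k+2$, hence $|\pi_1\cap N|+|S|=k$ and thus $|S|\le k$.

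The step I expect to be the crux is the reverse direction's structural rigidity: pinning down that the two ``heavy'' pieces $d_1$ and $\{d_2\}\cup T$ split cleanly between the two agents, and then converting the connectivity requirement on the $T$-bundle into the domination property of $S$. By contrast, the value accounting and the polynomial-parameter bound $P\le 2|T|+k+9$ are routine and already recorded in the construction.
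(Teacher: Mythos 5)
Your proof is correct and follows essentially the same route as the paper's: both directions hinge on the exact-threshold rigidity (total value $2|T|+2$ forces each bundle to value exactly $|T|+1$, splitting $d_1$ from $T'\setminus\{d_1\}$) and on converting connectivity of the $T$-side bundle into domination by $\pi_2\cap N'$, with the same vertex count yielding $|S|\le k$. The only (cosmetic) deviation is in the forward direction, where you pad $\pi_1$ with $k-|S|$ zero-value vertices of $N\setminus S$ while the paper instead pads $S$ itself to size exactly $k$; both give a valid allocation of exactly $p$ vertices.
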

\begin{proof}
    In one direction, let $G$ has an \textit{RBDS} $S$ of size $k$. Then, let $\pi_1 = \set{d_1}$ and $\pi_2 = (T'\setminus \set{d_1})\cup S$.  Observe that both $\pi_1$ and $\pi_2$ (since for any two vertices $x,y \in \pi_2\cap S$, $xd_2,yd_2\in E(G')$ and $S$ is an \textit{RBDS} in $G$) induces connected subgraphs of $G'$. Moreover, $u_1(\pi_1) = u_2(\pi_2) = |T|+1$. Since $u_1(V(G')) = u_2(V(G'))= 2|T|+2$, $\Pi= (\pi_1,\pi_2)$ is a \PROP~allocation for $((G',A,\mathcal{U},p),P)$. Thus, $((G',A,\mathcal{U},p),P)$ is a Yes-instance of \PROP-\CFD.

In the reverse direction, let $((G',A,\mathcal{U},p),P)$ be a Yes-instance of \PROP-\CFD. Then, since $u_1(V(G'))=u_2(V(G')) = 2|T|+2$ and both agents have identical valuation, for \PROP-\CFD, both $u_i(\pi_1)$ and $u_i(\pi_2)$ should be equal to $|T|+1$. The only way to achieve this is to assign the vertex $d_1$ (possibly along with some vertices from $N'$) to one of the agents and the vertices in $T'\setminus \set{d_1}$ (along with some vertices from $N'$) to the other agent. Without loss of generality, assume that $d_1\in \pi_1$ and $T'\setminus \{d_1\}\subseteq \pi_2$. Observe that $|\pi_1| \geq 1$. Therefore, $|\pi_2| \leq p-1 = |T|+k+1$. Since $T'\setminus \{d_1\}\subseteq \pi_2$, $|\pi_2 \cap N'| \leq k$. Moreover, since $G'[\pi_2]$ is connected (and $T'\cup N'$ is the bipartition of $G')$, each vertex in $T'\setminus \set{d_1}$ shares an edge with at least one vertex in $\pi_2 \cap N'$. Hence, $\pi_2 \cap N'$ dominates each vertex in $T'\setminus \set{d_1}$. Therefore, $\pi_2 \cap N'$ is an \textit{RBDS} of $G$ containing at most $k$ vertices (as $T \subseteq T'\setminus \set{d_1}$ and $N = N'$).
\end{proof}

\subsubsection{\EF-\CFD}
 Here, we have the following two cases depending on whether $|T|+k+2$ is odd or even.

\smallskip
\noindent\textbf{Case 1:} $|T|+k+2$ is odd. In this case, for each vertex $v\in N'$, let $u_1(v)=u_2(v) = 5|T|$. For each vertex $v\in T'\setminus \set{d_1}$, let $u_1(v)=u_2(v)=1$ and $u_1(d_1)=u_2(d_1)= 5k|T|+|T|+1$. Note that $\mathsf{val}=3$ and $|\mathcal{A}| = 1$ (i.e., identical valuation). Moreover, similarly to the above case for \PROP-\CFD, here also $P\leq 2|T|+k+9$. Now, we have the following lemma.

\begin{lemma}\label{L:IEFOdd}
   $((G',A,\mathcal{U},p),P)$ is a Yes-instance of \EF-\CFD if and only if $((G,k),|T|+k)$ is a Yes-instance of \RBDS.
\end{lemma}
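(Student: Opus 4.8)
The plan is to reuse the skeleton of the \PROP-\CFD argument (Lemma~\ref{L:IPROP}) and exploit that both agents have identical valuations: by Observation~\ref{O:identical}, an allocation $(\pi_1,\pi_2)$ is an \EF allocation if and only if $u(\pi_1)=u(\pi_2)$, so the whole problem reduces to balancing the two bundle values. The weights are engineered so that the only balanced connected allocations that use exactly $p=|T|+k+2$ vertices are those putting $d_1$ alone in one bundle and $(T'\setminus\set{d_1})$ together with a size-$k$ dominating subset of $N'$ in the other. For the forward direction I would take an \RBDS $S$ of size $k$ and set $\pi_1=\set{d_1}$, $\pi_2=(T'\setminus\set{d_1})\cup S$; connectivity of $\pi_2$ holds because $d_2$ is adjacent to all of $N'\supseteq S$ and each vertex of $T$ has a neighbour in $S$, while $|\pi_1|+|\pi_2|=1+(|T|+1)+k=p$. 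A direct computation gives $u(\pi_1)=u(d_1)=5k|T|+|T|+1$ and $u(\pi_2)=(|T|+1)+k\cdot 5|T|=5k|T|+|T|+1$, so the bundles are balanced and the allocation is \EF.

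For the reverse direction I would take an \EF allocation $(\pi_1,\pi_2)$, so $u(\pi_1)=u(\pi_2)$, and split each bundle's value into a ``large'' part carried by the $N'$-vertices (each worth $5|T|$) and by $d_1$ (worth $5k|T|+(|T|+1)$), and a ``small'' part of total size at most $|T|+1$ carried by the vertices of $T'\setminus\set{d_1}$ (each worth $1$, and there are $|T|+1$ of them). The first key step is to show that $d_1$ must actually be allocated. If it were not, both bundles would draw only from $N'\cup(T'\setminus\set{d_1})$, and writing $a_i=|\pi_i\cap N'|$, $b_i=|\pi_i\cap(T'\setminus\set{d_1})|$, the balance equation becomes $5(a_1-a_2)|T|=b_2-b_1$; since $|b_2-b_1|\le|T|+1<5|T|$ for $|T|\ge 1$, this forces $a_1=a_2$ and $b_1=b_2$, hence $|\pi_1|=|\pi_2|$ and $p$ would be even. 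This contradicts the case hypothesis that $|T|+k+2$ is odd, and this is exactly where that hypothesis is used.

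Having forced $d_1$ into a bundle, say $d_1\in\pi_1$ (by symmetry of the two identical agents), I would set $a=|\pi_1\cap N'|$, $b=|\pi_1\cap(T'\setminus\set{d_1})|$, $c=|\pi_2\cap N'|$, $e=|\pi_2\cap(T'\setminus\set{d_1})|$ and rewrite $u(\pi_1)=u(\pi_2)$ as $5(k+a-c)|T|=e-(|T|+1+b)$. The right-hand side lies in $[-2(|T|+1),0]$ while the left-hand side is a multiple of $5|T|$, and since $5|T|>2(|T|+1)$ for $|T|\ge 1$, both sides must vanish; this yields $b=0$, $e=|T|+1$, and $c=k+a$. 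Plugging these into the cardinality constraint $1+a+b+c+e=p$ gives $a=0$, so $\pi_1=\set{d_1}$ and $\pi_2=(T'\setminus\set{d_1})\cup S$ with $S=\pi_2\cap N'$ of size exactly $k$. Finally, connectivity of $G'[\pi_2]$ forces every vertex of $T$ to have a neighbour in $S$ via an original edge of $G$ (because $d_1,d_2\notin T$), so $S$ is an \RBDS of size $k$ in $G$, completing the equivalence.

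The main obstacle is the bookkeeping in the reverse direction: one must separate the ``large'' and ``small'' contributions cleanly, verify the inequality $5|T|>2(|T|+1)$ so that the multiple-of-$5|T|$ value is pinned to zero, and above all not overlook the possibility that $d_1$ is left unassigned---which is precisely the scenario eliminated by the parity of $p$. The forward direction and the final connectivity-to-domination step are routine by comparison.
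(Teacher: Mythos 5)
Your proof is correct and follows essentially the same route as the paper's: the same forward allocation, the same parity argument (via Observation~\ref{O:identical}) ruling out an unassigned $d_1$, and the same counting of $N'$-vertices and $(T'\setminus\{d_1\})$-vertices to pin down $\pi_1=\{d_1\}$ and $|S|=k$, followed by the bipartiteness/connectivity step. Your one-shot observation that a multiple of $5|T|$ lying in $[-2(|T|+1),0]$ must vanish is a slightly cleaner packaging of the paper's two-case analysis ($\alpha'=k+\alpha$ versus $\alpha'\neq k+\alpha$), but the underlying argument is the same.
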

\begin{proof}   
    In one direction, let $S$ be an \textit{RBDS} in $G$ with $k$ vertices. Then, let $\pi_1 = \set{d_1}$ and $\pi_2= (T'\setminus \set{d_1}) \cup S$. Observe that $|\pi_1|+|\pi_2| = p$, and both $\pi_1$ and $\pi_2$ induce connected subgraphs of $G'$. Moreover, $u(\pi_1) = u(\pi_2) = 5k|T|+|T|+1$, and therefore, none of the agents is envious of the other. Thus, $((G',A,\mathcal{U},p),P)$ is a Yes-instance of \EF-\CFD.
    
    In the reverse direction, let $((G',A,\mathcal{U},p),P)$ be a Yes-instance of \EF-\CFD~and $\Pi=(\pi_1,\pi_2)$ be an \EF~allocation for this instance. First, we show that the vertex $d_1$ is assigned to one of the agents (i.e., either $d_1\in \pi_1$ or $d_1 \in \pi_2$). Targeting contradiction, assume that $d_1 \notin \pi_1 \cup \pi_2$. Since $|\pi_1|+|\pi_2|= |T|+k+2$ is an odd number, $|\pi_1|\neq |\pi_2|$. As the valuations are identical, due to Observation~\ref{O:identical}, \EF-\CFD~implies $u(\pi_1)= u(\pi_2)$. Let $|\pi_1\cap N'| = \alpha$ and $|\pi_2\cap N'| = \beta$. Then, $u(\pi_1) = u(\pi_1\cap N')+u(\pi_1\cap T')= 5\alpha |T|+ (|\pi_1|-\alpha) \leq 5\alpha |T| + |T|+1$. Similarly, $u(\pi_2)= 5\beta |T|+ (|\pi_2|-\beta)\leq 5\beta |T|+|T|+1$. We have the following two cases:
    \begin{enumerate}
        \item $\mathbf{\alpha \neq \beta:}$ Without loss of generality assume that $\alpha > \beta$. Then, $u(\pi_1)-u(\pi_2) \geq 5\alpha|T|-5\beta |T|-|T|-1 \geq 5|T|-|T|-1 \geq 4|T|-1 > 0$ (for $|T|>0$). Hence $u(\pi_1) \neq u(\pi_2)$, which contradicts the fact that $\Pi$ is an \EF~allocation (due to Observation~\ref{O:identical}).
        
        \item $\mathbb{\alpha = \beta:}$ Since $|\pi_1| \neq |\pi_2|$ and $\alpha = \beta$, observe that $|\pi_1|-\alpha \neq |\pi_2|-\beta$. Hence, $u(\pi_1) - u(\pi_2) = 5\alpha |T| + (|\pi_1|-\alpha) - 5\beta |T| -(|\pi_2|-\beta) = (|\pi_1|-\alpha) - (|\pi_2|-\beta) \neq 0$. Hence $u(\pi_1) \neq u(\pi_2)$, which contradicts the fact that $\Pi$ is an  \EF~allocation (due to Observation~\ref{O:identical}).
    \end{enumerate}
    
    Hence, the vertex $d_1$ must be assigned to one of the agents. Without loss of generality, assume $d_1$ is assigned to $\pi_1$. Thus, $u(\pi_1)\geq 5k|T|+|T|+1$. Moreover, let $\pi_1$ be assigned $\alpha$ vertices from $N'$ and $\beta$ vertices from $T'\setminus \set{d_1}$. ( We will argue that $\alpha = \beta = 0$.) Then, $u(\pi_1) = 5k|T|+|T|+1 +5\alpha |T|+\beta = 5(k+\alpha)|T|+|T|+\beta +1$. Now, let $\pi_2$ is assigned $\alpha'$ vertices from $N'$ and $\beta'$ vertices from $T'\setminus \set{d_1}$. Therefore, $u(\pi_2)= 5\alpha'|T|+\beta'$ where $\beta' \leq |T|+1$. We again have the following two cases depending on whether $\alpha'$ equals $k+\alpha$ or not.  
    \begin{enumerate}
        \item $\alpha'= k+\alpha$. In this case, note that $u(\pi_1) = u(\pi_2)$ if and only if $|T|+\beta+1 = \beta'$. It is only possible if $\beta = 0$ and $\beta' = |T|+1$ (since $\beta' \leq |T|+1$).
        
        \item $\alpha' \neq k+\alpha$. We will show that, in this case, $u(\pi_1) \neq u(\pi_2)$. Note that $u(\pi_1)-u(\pi_2) = 5(k+\alpha)|T|+|T|+\beta +1 - 5\alpha'|T|-\beta' = 5(k+\alpha-\alpha')|T|+|T|+1+\beta-\beta'$. Since $\alpha' \neq k+\alpha$, $k+\alpha- \alpha' \neq 0$, and hence, $-1 \geq k+\alpha- \alpha' \geq 1$. Therefore, $-5|T|+|T|+1+\beta-\beta' \geq u(\pi_1)-u(\pi_2) \geq 5|T|+|T|+1+\beta-\beta'$, i.e., $-4|T|+1+\beta-\beta' \geq u(\pi_1)-u(\pi_2) \geq 6|T|+1+\beta-\beta'$. Since $0\leq \beta \leq |T|+1$ and $0 \leq \beta' \leq |T|+1$, we have that $-|T| \leq 1+\beta-\beta' \leq |T|+2$. Therefore, $u(\pi_1) \neq u(\pi_2)$.
    \end{enumerate}
    
    Hence, if $u(\pi_1) = u(\pi_2)$, then $\alpha' = \alpha +k$, $\beta = 0$, and $\beta' = |T|+1$. Moreover, note that $1+\alpha + \beta +\alpha' +\beta' = p =|T| +k+1$. Therefore, $1+\alpha+0 +\alpha'+|T|+1 = |T|+k+1$, i.e, $\alpha+\alpha' = k$. Solving equations $\alpha' = \alpha +k$ and $\alpha+\alpha' = k$ gives us $\alpha' = k$ and $\alpha = 0$. 
    
    Therefore, $\pi_1 = \set{d_1}$ and $\pi_2 = (T'\setminus \set{d_1}) \cup S$ such that $S\subseteq N'$ and $|S| = k$. Moreover, since $G'[\pi_2]$ is connected (and $T'\cup N'$ is a bipartition of $G'$), each element in $T'\setminus \set{d_1}$ has a neighbour in $S$. Since $T\subseteq T'\setminus \set{d_1}$ and $N =N'$, $S$ is an \textit{RBDS} of $G$ containing at most $k$ vertices. Therefore, $((G,k),|T|+k)$ is a Yes-instance.
 \end{proof}   
 
\smallskip  
    \noindent\textbf{Case 2:} $|T|+k+2$ is even. This case is similar to Case 1 (where $|T|+k+2$ is odd). Hence, we only provide the graph construction and the valuations here. Here, we get graph $G''$ by adding one more dummy vertex $d_3$ (i.e., $N''= N'\cup \set{d_3}$ and $T'' = T'$) and the edge $d_1d_3$ (i.e., $E(G'') = E(G') \cup \set{d_1d_3}$). Moreover, we set $u(d_1) = 5k|T|+k$, $u(d_3) = 1$, and $p = |T|+k+3$ ($p$ is odd now). Observe that if $d_1$ is not assigned to any agent, then $d_3$ also cannot be assigned to any agent (since if some agent is assigned only vertex $d_3$, then the other agent can only be assigned one vertex from $T'\setminus \set{d_1}$; but $p \geq 3$). Hence, using the arguments given in Case 1 (as $p$ is odd here as well), we can first establish that \EF-\CFD~does not exist if $d_1$ is not assigned to any agent. Next, using arguments similar to Case 1, we can also establish that if \EF-\CFD~exists in $G''$, then either $\pi_1 = \set{d_1,d_3}$ and $\pi_2 = (T''\setminus \set{d_1}) \cup S$, or vice versa, where $S\subseteq N''\setminus \set{d_3}$ and $|S| = k$. Moreover, we can then establish that $S$ is an \textit{RBDS} of $G$. It is easy to see that if $G$ has an \textit{RBDS} of size at most $k$, then $G''$ admits \EF-\CFD. This establishes the following lemma.
    
    \begin{lemma}\label{L:IEFEven}
    $((G'',A,\mathcal{U},p),P)$ is a Yes-instance of \EF-\CFD~if and only if $((G,k),|T|+k)$ is a Yes-instance of \RBDS.
 \end{lemma}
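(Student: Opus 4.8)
The plan is to reduce to the already-proved odd case, Lemma~\ref{L:IEFOdd}, whose entire engine is a parity argument: since $p$ was odd there, any two disjoint bundles covering $p$ vertices satisfy $|\pi_1|\neq|\pi_2|$, and combined with the valuation gap between $N'$-vertices (worth $5|T|$) and $T'$-vertices (worth $1$), Observation~\ref{O:identical} forces the heavy vertex $d_1$ into one of the bundles. When $|T|+k+2$ is even this trick does not apply directly, so the pendant $d_3$ (adjacent only to $d_1$, with $p$ bumped to the odd value $|T|+k+3$) is introduced precisely to restore an odd target while keeping $d_3$ inseparable from $d_1$ in any connected bundle. I would therefore treat $\{d_1,d_3\}$ as the replacement for the single heavy vertex of Case~1 and mirror that proof.

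For the forward direction I would exhibit the explicit allocation $\pi_1=\{d_1,d_3\}$ and $\pi_2=(T''\setminus\{d_1\})\cup S$, where $S$ is a size-$k$ \RBDS of $G$. Connectivity of $\pi_1$ is immediate from the edge $d_1d_3$; connectivity of $\pi_2$ holds because $d_2$ is adjacent to every vertex of $N\supseteq S$ and, since $S$ dominates $T$, each vertex of $T$ has a neighbour in $S$. A direct count gives $|\pi_1|+|\pi_2|=2+(|T|+1+k)=p$, and the valuations of $d_1$ and $d_3$ are fixed so that $u(\pi_1)=u(\pi_2)$; by Observation~\ref{O:identical} this is an \EF allocation, so $(G'',A,\mathcal{U},p)$ is a Yes-instance.

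For the reverse direction I would argue in two stages. First, I would show $d_1$ must be assigned: otherwise $d_3$, being a pendant of $d_1$, could only form a singleton bundle, and Observation~\ref{O:identical} would then force the other agent to a bundle of the same tiny value, hence a single $T''\setminus\{d_1\}$-vertex, so only two vertices are allocated, contradicting $p\geq 3$; thus $d_3$ is also unassigned, both bundles lie in the component spanned by $(T\cup\{d_2\})\cup N$, and now $|\pi_1|+|\pi_2|=p$ is odd, forcing $|\pi_1|\neq|\pi_2|$, while the $5|T|$-versus-$1$ gap forces equal $N$-counts and hence equal sizes---a contradiction, exactly as in Case~1. Second, assuming without loss of generality $d_1\in\pi_1$, I would run the same weight accounting as Case~1: the scaling $5|T|$ forces the numbers of $N$-vertices in the two bundles to differ by exactly $k$, and the vertex-count equation then pins the bundles to $\pi_1=\{d_1,d_3\}$ and $\pi_2=(T''\setminus\{d_1\})\cup S$ with $|S|=k$. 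Connectivity of $\pi_2$ finally forces every vertex of $T''\setminus\{d_1\}\supseteq T$ to have a neighbour in $S$, so $S$ is an \RBDS of size $k$.

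The hard part will be precisely this last accounting step, and it is more delicate than in Case~1. Because we now leave $|N|-k\geq 1$ vertices unassigned, there is genuine slack: a priori an allocation could discard $d_3$, slip one $N$-vertex into $d_1$'s bundle, and rebalance the $T$-vertices, producing a value-balanced \EF allocation whose $N$-part need not be a size-$k$ \RBDS. The crux is therefore to pin the valuations of $d_1$ and $d_3$ so that the two bundle values can coincide \emph{only} when $d_3$ sits with $d_1$ and $d_1$'s bundle contains no $N$-vertex; concretely, one wants the band of achievable value differences in every ``$d_3$-unassigned'' configuration to miss all multiples of $5|T|$, which closes off these spurious balancings. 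Once the chosen weights are verified to eliminate all such configurations, the extraction of the \RBDS is routine, and the polynomial-parameter-transformation bound $P\le 2|T|+k+O(1)$ is read off exactly as in the odd case.
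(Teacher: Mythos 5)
You follow the paper's route exactly---the same pendant vertex $d_3$ restoring an odd $p$, the same forward allocation $\pi_1=\{d_1,d_3\}$, $\pi_2=(T''\setminus\{d_1\})\cup S$, and the same two-stage reverse argument---and your forward direction and first stage (forcing $d_1$ to be assigned via the parity-plus-pendant argument) are sound. But the proposal does not prove the lemma, because the step you yourself flag as the crux is left open, and it is precisely where the argument as you state it breaks. In particular, your earlier claim that ``the vertex-count equation then pins the bundles to $\pi_1=\{d_1,d_3\}$'' is false for the natural weights, namely $u(d_3)=1$ with $u(d_1)=5k|T|+|T|$, the choice that makes the intended bundles balance (the paper's printed $u(d_1)=5k|T|+k$ balances them only when $k=|T|$, evidently a typo). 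Run the Case-1 accounting: if $d_1\in\pi_1$ and $\pi_1$ further contains $\alpha$ vertices of $N$, $\beta$ vertices of $T\cup\{d_2\}$, and $\gamma\in\{0,1\}$ copies of $d_3$, while $\pi_2$ contains $\alpha'$ and $\beta'$ vertices of the first two kinds, then Observation~\ref{O:identical} forces $\alpha'=k+\alpha$ and $\beta'=|T|+\beta+\gamma$, and $|\pi_1|+|\pi_2|=p=|T|+k+3$ yields $\alpha+\beta+\gamma=1$. The option $\beta=1$ is impossible (then $\beta+\beta'$ would exceed the $|T|+1$ available vertices of $T\cup\{d_2\}$), but $\alpha=1$, $\gamma=0$ survives: $\pi_1=\{d_1,v\}$ with $v\in N$, and $\pi_2$ consisting of $k+1$ vertices of $N$ together with $|T|$ of the $|T|+1$ vertices of $T\cup\{d_2\}$, is connected (route through $d_2$), uses exactly $p$ vertices, and satisfies $u(\pi_1)=u(\pi_2)=5(k+1)|T|+|T|$. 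This is exactly the ``discard $d_3$, slip one $N$-vertex into $d_1$'s bundle'' configuration you worried about, and it certifies only $k+1$ vertices of $N$ dominating all of $T$ except possibly one vertex; no size-$k$ \RBDS follows (give each vertex of $T$ a private neighbour in $N$, add one extra $N$-vertex adjacent to a single $T$-vertex, and set $k=|T|-1$ for a concrete failure of the claimed equivalence).

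Closing the gap therefore requires an actual device, not a promise that ``the chosen weights are verified''. One clean fix: set $u(d_3)=2$ and $u(d_1)=5k|T|+|T|-1$. The balance condition then gives $\alpha'=k+\alpha$ and $\beta'=|T|-1+\beta+2\gamma$, and the vertex count becomes $2\alpha+2\beta+3\gamma=3$, whose only solution in nonnegative integers with $\gamma\leq 1$ is $\gamma=1$, $\alpha=\beta=0$; parity now kills every $d_3$-unassigned configuration, $\beta'=|T|+1$ forces all of $T\cup\{d_2\}$ into $\pi_2$, and the Case-1 extraction of a size-$k$ \RBDS goes through verbatim. To be fair, the paper itself only sketches this case and glosses the same point (asserting the bundles are pinned ``using arguments similar to Case 1''), so your instinct that this accounting is the delicate spot---and genuinely harder than in the odd case---was exactly right; but identifying the obstruction and then conditioning the proof on unspecified weights does not establish the stated if-and-only-if.
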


\subsection{Incompressibility for \EFX and \EFO}
In this section, we prove that, unless \NPoly, \EFX and \EFO do not admit a polynomial compression when parameterized by $\mathsf{vcn}+\mathsf{val}+|A|+p$. Here also, we provide a polynomial parameter transformation from \RBDS parameterized by $|T|+k$ to \EFO (resp., \EFX) parameterized by $P = \mathsf{vcn}+\mathsf{val}+|A|+p$. First, we provide our construction. Let $t = |T|$. Moreover, we assume that $k>1$ and $t>1$.

\begin{figure}
    \centering
    \includegraphics[scale=0.8]{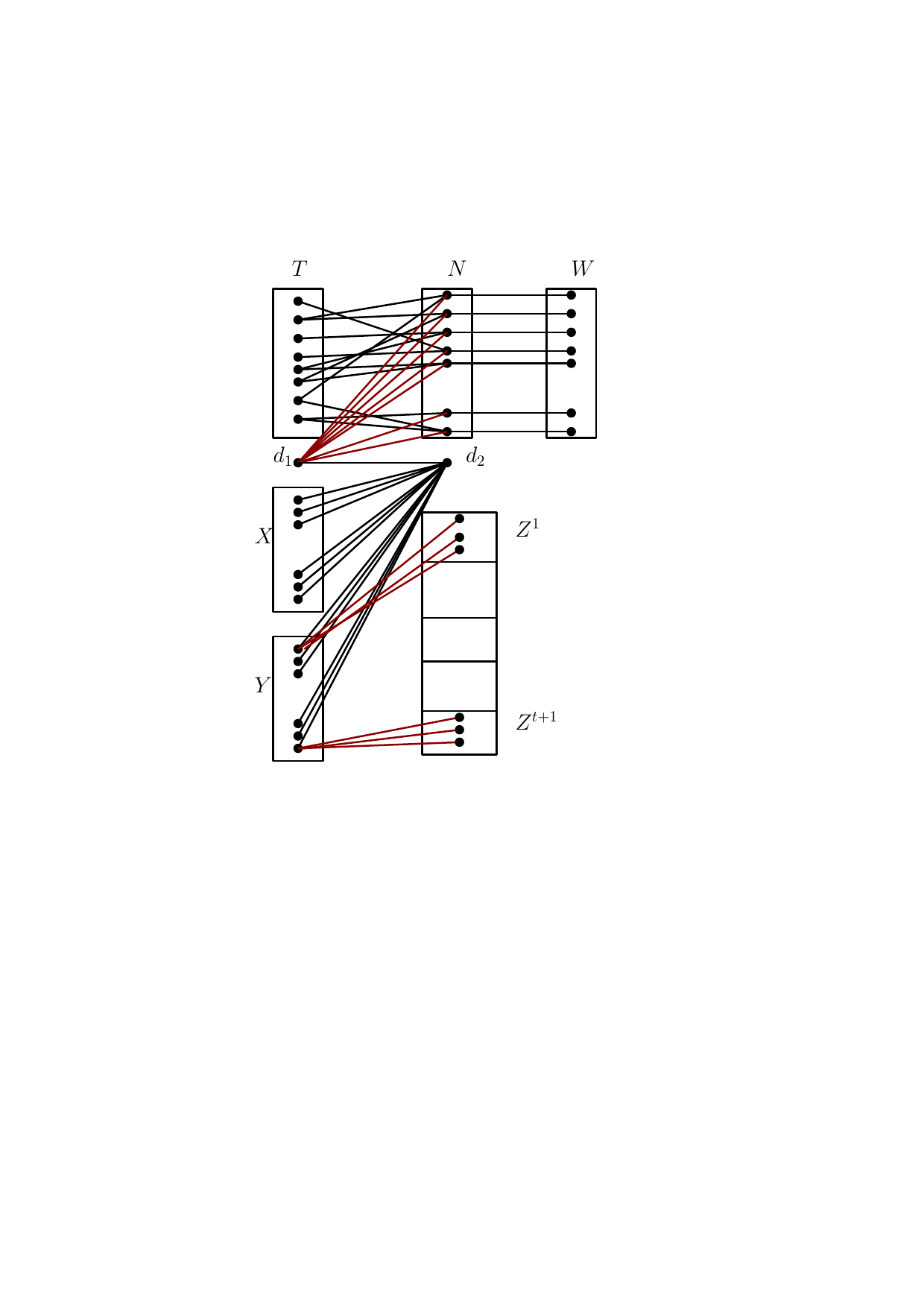}
    \caption{Illustration for the construction of $G'$.}
    \label{fig:construction1}
\end{figure}

\medskip
\noindent\textbf{Construction.} Suppose $((G,k),t+k)$ is an instance of \RBDS parameterized by $t+k$ such that $V(G) = T\cup N$. See Figure~\ref{fig:construction1} for a reference. First, we construct a graph $G'$ with $V(G') = T'\cup N'$ from $G$ by adding some dummy vertices. First, we add two dummy vertices $d_1$ and $d_2$ such that $T' = T\cup \{d_1\}$, $N'=N \cup \{d_2\}$, and $E(G') = E(G)\cup \{d_1x~|~x\in N'\}$ ($d_1d_2$ is also an edge). Next, we add two sets of vertices $X=\{x_1, \ldots, x_{t+3}\}$ and $Y=\{y_1,\ldots,y_{t+1}\}$ of size $t+3$ and $t+1$, respectively, to $T'$ and attach each $x_i$ and $y_j$ to $d_2$, i.e., $T' = T' \cup \set{x_1,\ldots, x_{t+3}} \cup \set{y_1,\ldots,y_{t+1}}$ and $E(G') = E(G')\cup \set{x_id_2, y_jd_2~|~i\in [t+3], j\in [t+1]}$. Next, for each $y_j$, $j\in [t+1]$, we add $32kt$ (here, we want a large enough number) vertices $Z^j=\{z^j_1,\ldots, z^j_{32kt}\}$ to $N'$ and for each $z^j_\ell$, $\ell \in [32kt]$, we add the edge $y_jz^j_\ell$ to $E(G')$. Finally, let $\{v_1,\ldots,v_{|N|}\}$ be an ordering of vertices of $N$. For each $v_i$, $i\in [|N|]$, we  add a vertex $w_i$ and an edge $v_iw_i$ to $G'$. Let $W$ denote the set of vertices $\set{w_1,\ldots, w_{|N}}$. This completes our construction. Note that $|V(G')| = |V(G)|+|N|+ 2+ t+3 +t+1+ (t+1)32kt$.

Next, we set $A= [2t+6]$ and $p = 4k(2t+6) = 8kt+24k$. We define the valuation function $\mathcal{U}$ below. Moreover, for ease of presentation, let us denote the set $T\cup N\cup W \cup \{d_1\}$ by $B'$ and the set $V(G')\setminus (B'\cup \{d_2\})$ by $B$.

\medskip
\noindent\textbf{Valuation Function.} We have two types of agents: Type-I and Type-II, each with cardinality $t+3$, such that agents of the same type have identical valuations. Recall that $|A|= 2t+6$. Hence, without loss of generality, assume that $\{1,\ldots, t+3\}$ are Type-I agents and $\set{t+4,\ldots, 2t+6}$ are Type-II agents. For ease of presentation, we will use $u_\text{I}$ and $u_{\text{II}}$ to represent the valuation function for all Type-I and Type-II agents, respectively. The valuation function $\mathcal{U}$ is defined as follows:
\begin{itemize}
\item for $v\in T$, $u_{\text{I}}(v) = u_{\text{II}}(v) = 1$,
\item $u_{\text{I}}(d_1) = u_{\text{II}}(d_1) = 1$,
\item for $v\in N$, $u_{\text{I}}(v) =0$ and $u_{\text{II}}(v) = 5kt$,
\item for $v\in W$, $u_{\text{I}}(v) =5kt$ and $u_{\text{II}}(v) = 0$,
%\item $u_{\text{I}}(d_2) = u_{\text{II}}(d_2) = 6tp+1 = 48kt^2+414kt+1$,
\item $u_{\text{I}}(d_2) = u_{\text{II}}(d_2) = 10kt$,
\item for $i\in [t+3]$, $u_{\text{I}}(x_i) = 0$ and $u_{\text{II}}(x_i) = 5kt+t+1$,
\item for $i\in [t+1]$, $u_{\text{I}}(y_i) = 5kt+t+1$ and $u_{\text{II}}(y_i) = 0$, and
\item for $i\in [t+1]$, $j\in [32kt]$, $u_{\text{I}}(z^i_j) =  u_{\text{II}}(z^i_j) = 0$.
\end{itemize}

First, we have the following easy observation concerning our construction and $\mathcal{U}$.
\begin{observation}\label{O:sizeOfB}
For each connected component $C$ in $G[B]$, $u_{\I}(C) \leq 5kt+t+1$ and $u_{\II}(C) \leq 5kt+t+1$.
\end{observation}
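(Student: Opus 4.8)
The plan is to identify the set $B$ explicitly and then read off the connected components of the induced subgraph, after which the bound follows by a one-line computation for each component type. Recall that $B' = T\cup N\cup W\cup\{d_1\}$ and $B = V(G')\setminus(B'\cup\{d_2\})$. Since $V(G')$ is the disjoint union of $T$, $N$, $W$, the dummy vertices $d_1,d_2$, the set $X=\{x_1,\ldots,x_{t+3}\}$, the set $Y=\{y_1,\ldots,y_{t+1}\}$, and the vertices $\bigcup_{j\in[t+1]}Z^j$, deleting $B'\cup\{d_2\}$ leaves exactly $B = X\cup Y\cup\bigcup_{j\in[t+1]}Z^j$.

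Next I would determine the components of $G'[B]$. In the construction each $x_i$ is attached only to $d_2$, so within $G'[B]$ every $x_i$ is an isolated vertex. Each $y_j$ is adjacent to $d_2$ and to the $32kt$ vertices of $Z^j$, while each $z^j_\ell$ is adjacent only to $y_j$; hence inside $G'[B]$ the set $\{y_j\}\cup Z^j$ induces a star and forms a single connected component. Therefore the connected components of $G'[B]$ are precisely the singletons $\{x_i\}$ for $i\in[t+3]$ and the stars $\{y_j\}\cup Z^j$ for $j\in[t+1]$.

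Finally, I would evaluate the two valuation functions on each component type. For a singleton $\{x_i\}$ the definition gives $u_{\I}(x_i)=0$ and $u_{\II}(x_i)=5kt+t+1$. For a star $\{y_j\}\cup Z^j$ the leaves contribute nothing, as $u_{\I}(z^j_\ell)=u_{\II}(z^j_\ell)=0$, so the value is determined entirely by $y_j$: one gets $u_{\I}(\{y_j\}\cup Z^j)=5kt+t+1$ and $u_{\II}(\{y_j\}\cup Z^j)=0$. In every case both $u_{\I}(C)$ and $u_{\II}(C)$ are at most $5kt+t+1$, which is exactly the claimed bound.

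There is no substantive obstacle here; the only points requiring care are correctly reading off $B$ from the nested definitions and observing that the large padding gadget $Z^j$ is valued at zero by both agent types. This is the crux: it ensures that the value of a star component is capped by the single vertex $y_j$ (or is zero) rather than scaling with $|Z^j|$, which is precisely why the $32kt$ vertices can be attached without violating the bound.
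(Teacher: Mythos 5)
Your proposal is correct: the paper states this observation without proof (labeling it as easy to see from the construction), and your argument supplies exactly the intended verification---identifying $B = X\cup Y\cup\bigcup_{j\in[t+1]}Z^j$, noting the components of $G'[B]$ are the singletons $\{x_i\}$ and the stars $\{y_j\}\cup Z^j$, and evaluating both valuation functions on each. Your closing remark correctly pinpoints the key design feature, namely that both agent types value the padding vertices $Z^j$ at zero, so component values do not scale with $|Z^j|$.
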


Now, we will show that $((G',A,\mathcal{U},p),P)$ is Yes-instance of \EFO-\CFD (resp., \EFX-\CFD) if and only if $((G,k),t+k)$ is a Yes-instance of \RBDS. First, we prove one side of this argument in the following lemma.
\begin{lemma}\label{L:OneSide}
If $((G,k),t+k)$ is a Yes-instance of \RBDS, then $((G',A,\mathcal{U},$ $p),P)$ is Yes-instance of \EFO-\CFD as well as of \EFX-\CFD.
\end{lemma}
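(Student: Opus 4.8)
The plan is to exhibit a single allocation $\Pi=(\pi_1,\dots,\pi_{2t+6})$ built from the given red‑blue dominating set and to verify that it is simultaneously an \EFO- and an \EFX-allocation; since \EFX implies \EFO (subtracting the minimum good is the harder constraint), it would suffice to establish \EFX, though I would keep the two min/max bookkeepings explicit for clarity. First I would normalise the dominating set: take an \RBDS $S\subseteq N$ with $|S|\le k$ and replace it by an inclusion-minimal subset that still dominates $T$, so that every $s\in S$ has a private $T$-neighbour and in particular $|S|\le t$. The high-value goods split cleanly by type — $W$ and $Y$ are worth about $5kt$ only to Type-\I{} agents, $N$ and $X$ are worth about $5kt$ only to Type-\II{} agents, and $d_2$ is worth $10kt$ to both — so the core idea is to hand every agent one ``unit'' of value close to $5kt+t+1$, letting the domination structure of $S$ supply exactly the connectivity needed to also place all of $T$.

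Concretely, I would distribute the vertices of $T$ among the members of $S$ according to adjacency (using, where convenient, the hub $d_1$, which is adjacent to all of $N'$), obtaining connected ``covering'' bundles centred at the $s\in S$ whose value to either type is $5kt+O(t)$, and hand these to the appropriate agents. The remaining agents would receive the essentially-singleton bundles $\{x_i\}$ (to Type-\II{} agents) and $\{y_j\}$ (to Type-\I{} agents), each of value exactly $5kt+t+1$, together with $d_2$ and the pendant pairs $\{v_i,w_i\}$ used to top the last few agents up to a comparable value. Finally I would pad the $y_j$-bundles with the zero-valued vertices $Z^j$ so that the total number of allocated vertices is exactly $p=8kt+24k$; since each $y_j$ absorbs up to $32kt$ such vertices, the padding capacity $(t+1)\cdot 32kt$ comfortably exceeds the required slack.

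Checking that $\Pi$ is a valid allocation is routine: connectivity of each covering bundle follows because $S$ dominates $T$ and $d_1,d_2$ act as universal hubs on the $N'$-side, while the singletons, pendant pairs, and $Z$-fillers each hang off a unique neighbour; disjointness is immediate; and the count equals $p$ by the choice of padding. The real work, and the step I expect to be the main obstacle, is establishing \EFX by checking every ordered pair of agents. Here I would lean on two facts: Observation~\ref{O:sizeOfB} caps the value of any bundle drawn from $B=X\cup Y\cup\bigcup_j Z^j$ at $5kt+t+1$, and a bundle whose only relevant good is a single $x_i$ or $y_j$ is automatically \EFX-safe (removing that good drops its value to $0$). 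The delicate cases are therefore (i) two agents of the same type holding covering bundles, where ``up to any good'' only allows subtracting the minimum good — a $T$-vertex of value $1$ — so these bundles must be balanced to within a single $T$-vertex; and (ii) cross-type comparisons, where a good worth $0$ to the envying agent (an $N$-vertex for Type-\I{}, or a $W$-vertex / $Z$-filler for Type-\II{}) occurs as a removable good and thus forces near-exact envy-freeness rather than genuine \EFX slack. Controlling these two phenomena — guaranteeing that no single bundle concentrates too many of the $N$-vertices (each worth the large value $5kt$ to Type-\II{}) and that the padded $y_j$-bundles are never envied across types — is exactly where the specific magnitudes $5kt$, $10kt$, $32kt$ and the $+t+1$ offset built into the construction are used, and constitutes the heart of the verification.
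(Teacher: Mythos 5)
Your high-level skeleton agrees with the paper's: hand every agent one ``unit'' worth about $5kt+t+1$, use $d_2$ and the singletons $\{x_i\},\{y_j\}$ (which are \EFX-safe because removing the sole item drops the value to $0$), and pad with worthless $Z$-vertices to hit the exact count $p$. But the allocation you describe differs from the paper's in exactly the place that matters, and as described it fails \EFO (hence \EFX). Since $p=\Theta(kt)$ while covering bundles, pairs and singletons supply only $O(k+t)$ vertices, some bundle must have the form $\{y_j\}$ plus several $Z^j$-vertices; there $y_j$ is a cut vertex, so $\tau$ of that bundle consists only of $Z$-vertices of value $0$, and therefore \emph{every} Type-I agent must receive value at least $5kt+t+1$ --- your ``near-exact envy-freeness'' in case (ii) is in fact an exact threshold with no slack. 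Now count: Type-I agents can reach that threshold only via the $t+1$ vertices $y_j$, via $d_2$, or via a bundle drawn from $T\cup N\cup W\cup\{d_1\}$ containing pendants worth exactly $5kt$ \emph{and} all $t+1$ vertices of $T\cup\{d_1\}$; since $T\cup\{d_1\}$ is partitioned among bundles, at most one bundle of the latter kind exists. With $t+3$ Type-I agents the structure is forced: exactly one agent gets a \emph{single} covering bundle containing all of $T\cup\{d_1\}$, a dominating set $S$, and matched pendants calibrated to contribute exactly $5kt$. This is the paper's $\pi_1=S\cup W'\cup T\cup\{d_1\}$ with $S$ padded \emph{up} to size exactly $k$ (note the verification actually uses weight $5t$ per $N$- or $W$-vertex, so $k$ matched pendants give $5kt$; the $5kt$-per-vertex figure in the valuation table is a typo), making $u_{\I}(\pi_1)=u_{\II}(\pi_1)=5kt+t+1$ and the allocation outright \EF between all pairs except toward the $d_2$-holder, where the singleton is trivially safe.

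Your proposal does the opposite on both decisive points. First, you normalise $S$ to an inclusion-minimal set, which can make $|S|<k$ and undershoot the pendant contribution; the paper instead pads $S'$ to size exactly $k$. Second, you split $T$ across several per-$s$ covering bundles of value $5kt+|T_s|$ with $|T_s|\le t$, and hand pendant pairs $\{v_i,w_i\}$ (value $5kt$, strictly below $5kt+t+1$) to ``the last few agents'' as bundles of merely ``comparable'' value. Every Type-I holder of such a bundle envies the $Z$-padded $y$-bundle even after removing its best removable item, since all its removable items are worth $0$; so when $|S|\ge 2$, or whenever any top-up bundle falls short of $5kt+t+1$, the allocation violates \EFO and \EFX. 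In short, the two ``delicate cases'' you defer to the end are not verification details to be controlled --- they are constraints that rule out the multi-bundle design and force the paper's single calibrated covering bundle. (Your side remark that \EFX implies \EFO is correct and harmless, but it does not rescue the construction.)
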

\begin{proof}
Let $S'$ be an RBDS of $G$ of size at most $k$. Then, let  $S\subseteq N$ be an RBDS of $G$ of size $k$ by possibly adding $k-|S'|$ vertices from $N\setminus S'$ to $S'$. Moreover, let $W'\subseteq W$ such that $|W'| = |S|$ and each vertex in $S$ has a unique neighbour in $W'$ (there is a perfect matching from $S$ to $W'$). Now, we define the allocation $\Pi = (\pi_1, \ldots, \pi_{2t+6})$ in the following manner. First, let $\pi_1 = S\cup W' \cup T \cup d_1$. Observe that $G'[\pi_1]$ is connected since $d_1$ is connected to each vertex in $N$. Moreover, $u_{\text{I}}(\pi_1) = u_{\text{II}}(\pi_1) = 5kt+t+1$ and $|\pi_1| = 2k+t+1$. Next, let $\pi_2 = \{d_2\}$.

Now, let $\ell =32kt-2k-3t-5$. Note that $\ell>0$. Next, let $\pi_3 = \{y_1\} \cup \{z^1_1,\ldots, z^1_{\ell}\}$. Note that $u_{\text{I}} (\pi_3) =  u_{\text{I}}(y_1) = 5kt+t+1$ and $u_{\text{II}} (\pi_3)=0$.  
Next, for $4 \leq i \leq t+3$, let $\pi_i = \{y_{i-2}\}$. Note that for $4 \leq i \leq t+3$, $u_{\I}(\pi_i) = 5kt+t+1$, $u_{\II} (\pi_i) = 0$, and $|\pi_i| = 1$.
Finally, for $t+4 \leq i \leq 2t+6$, (Type-II agents), let $\pi_i = x_{i-t-3}$.
Observe that for $t+4\leq i \leq 2t+6$, $u_{\II}(\pi_i) = 5kt+t+1$ and $|\pi_i| = 1$. 

Now, notice that for agent $i\neq 2$, $u_2(\pi_2) > u_2(\pi_i)$. Hence, agent 2 is not envious of any other agent. Conversely, since $|\pi_2| = 1$, no agent is envious of agent 2 up to one item (resp., up to any item). Finally, observe that for every two agents $i,j$ such that $i\neq 2$ and $j\neq 2$, $u_i(\pi) = 5kt+t+1$ and $u_i(\pi_j) \leq 5kt+t+1$. Hence $\Pi$ satisfies both \EFX and \EFO. Now, the only thing remaining to observe is that $\sum_i|\pi_i| = |\pi_1|+|\pi_2|+|\pi_3| +2t+3 = 2k+t+1+1+32kt-2k-3t-5+2t+3 = 32kt = p$. This completes our proof.
\end{proof}

In the remaining part of this section, we will prove the other side of our argument. For the rest of this section, let $((G',A,\mathcal{U},p),P)$ be Yes-instance of \EFO-\CFD (resp., \EFX-\CFD) and $\Pi = (\pi_1,\ldots, \pi_{2t+6})$ be an allocation for $((G',A,\mathcal{U},p),P)$ that satisfies \EFO (resp., \EFX). We will need the following observations about the properties of $\Pi$. These observations will be valid for both \EFO as well as \EFX. 

\medskip
\noindent\textbf{Important Observations.}
First, we have the following easy observation. 
\begin{observation}\label{O:EFXtrivial}
Let there be an $i$ such that for any vertex $v\in \tau_i$, $u_{\I}(\pi_i)-u_{\I}(v) \geq 5t+1$ (resp., $u_{\II}(\pi_i)-u_{\II}(v) \geq 5t+1$), then at most $t+1$ agents of Type-I (resp., Type-II) can be assigned vertices only from $B' = T\cup N \cup W\cup \{d_1\}$. 
\end{observation}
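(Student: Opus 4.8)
The plan is to turn the hypothesis into a uniform lower bound on what every Type-I agent must receive, and then to exploit the sparse structure of $G'[B']$ to bound how many pairwise-disjoint bundles can meet that bound while staying inside $B'$.

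First I would unpack the hypothesis. Saying that $u_{\I}(\pi_i)-u_{\I}(v)\geq 5t+1$ for \emph{every} $v\in\tau_i$ is exactly saying that $u_{\I}(\pi_i)-\max_{v\in\tau_i}u_{\I}(v)\geq 5t+1$ (and, a fortiori, $u_{\I}(\pi_i)-\min_{v\in\tau_i}u_{\I}(v)\geq 5t+1$ for the \EFX variant). Hence, by the contrapositive of Observation~\ref{O:PreEFO}, any Type-I agent $j$ with $u_{\I}(\pi_j)<5t+1$ would envy $i$ beyond one item, contradicting that $\Pi$ is an \EFO~(resp.\ \EFX) allocation. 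So in any valid allocation \emph{every} Type-I agent $j$ satisfies $u_{\I}(\pi_j)\geq 5t+1$. The whole statement then reduces to a counting question: how many pairwise-disjoint connected bundles contained in $B'$ can simultaneously achieve Type-I value at least $5t+1$?

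The key structural step is to examine $G'[B']$. Within $B'=T\cup N\cup W\cup\{d_1\}$ the only edges are the original $T$--$N$ edges of $G$, the pendant edges $w_iv_i$, and the edges from $d_1$ to $N$; thus $G'[B']$ is bipartite with sides $N$ and $T\cup W\cup\{d_1\}$, the vertices of $W$ are pendant, and $N$ is an independent set. For the Type-I valuation, $N$ contributes $0$, each vertex of $T\cup\{d_1\}$ contributes $1$, and the large value sits on $W$, calibrated so that a lone $W$-vertex falls just short of the threshold. Using this I would argue that every connected $\pi_j\subseteq B'$ with $u_{\I}(\pi_j)\geq 5t+1$ must contain at least one vertex of $T\cup\{d_1\}$: a bundle with no $W$-vertex is worth at most $t+1<5t+1$; a bundle whose value comes from a single $W$-vertex still needs one extra unit, which only $T\cup\{d_1\}$ can supply; and a bundle using two or more $W$-vertices needs a connector, which---since the $W$-vertices are pendant and $N$ is independent---cannot be another $W$- or $N$-vertex and hence is a vertex of $T\cup\{d_1\}$. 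Since the bundles of distinct agents are pairwise disjoint and $|T\cup\{d_1\}|=t+1$, at most $t+1$ Type-I agents can hold bundles contained in $B'$. The Type-II statement follows by the symmetric argument, swapping the roles of $N$ and $W$: now $N$ carries the large value while the $W$-vertices are worthless, but being pendant they still cannot link two distinct (independent) $N$-vertices, so a vertex of $T\cup\{d_1\}$ is again consumed by each qualifying bundle.

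The main obstacle is precisely this last connectivity-plus-valuation step: showing that \emph{every} qualifying $B'$-bundle is forced to spend one of the only $t+1$ vertices of $T\cup\{d_1\}$. This is where one must combine three facts carefully---that $N$ is an independent set, that the $W$-vertices (resp.\ $W$-vertices in the Type-II case) are pendant, and that the threshold $5t+1$ is set just above the value of a single heavy vertex---so that neither a single heavy vertex alone nor a multi-vertex heavy bundle can be assembled inside $B'$ without touching $T\cup\{d_1\}$. Once that is established, the disjointness bookkeeping and the Type-II symmetry are routine.
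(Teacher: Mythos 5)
Your proof is correct and takes essentially the same route as the paper: both first use Observation~\ref{O:PreEFO} to turn the hypothesis into the uniform bound $u_{\I}(\pi_j)\geq 5t+1$ (resp.\ $u_{\II}(\pi_j)\geq 5t+1$) for every agent of the relevant type, then argue that any qualifying connected bundle inside $B'$ must contain a vertex of $T\cup\{d_1\}$, so disjointness caps the count at $|T\cup\{d_1\}|=t+1$. Your explicit case analysis (no heavy vertex, one heavy vertex, two or more heavy vertices needing a connector outside the independent set $N$ and the pendant set $W$) is simply a spelled-out version of the paper's compact statement that every connected component of $G'[N\cup W]$ --- the matched pairs $\{v_i,w_i\}$ --- has value at most $5t$, just below the threshold $5t+1$.
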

\begin{proof}
We will prove this only for Type-I agents as the proof for Type-II agents is identical. To satisfy \EFO (resp., \EFX), for each Type-I agent $j$, $u_{\I}(\pi_j) \geq 5t+1$. Moreover, it follows directly from the construction that for each connected component, say $C$, in $N\cup W$, $u_{\I}(C) \leq 5t$ and $u_{\II}(C)\leq 5t$. Therefore, each agent that is assigned vertices only from $T\cup N \cup W\cup \{d_1\}$ is assigned at least one vertex from $T\cup \{d_1\}$. Since $|T\cup \{d_1\}| = t+1$, the  proof follows. 
\end{proof}

Next, we have the following crucial  observation concerning $\Pi$.

\begin{observation}\label{O:2inTau}
If $d_2 \in \pi_i$ for some $i\in A$, then $d_2\in \tau_i$.
\end{observation}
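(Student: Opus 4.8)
The statement asserts that $d_2$, if assigned to some agent $i$, cannot be a cut-vertex of the bundle $G'[\pi_i]$; equivalently, $G'[\pi_i \setminus \{d_2\}]$ remains connected. The plan is to argue by examining the structure of $G'$ around $d_2$ together with the valuation constraints that an \EFO (resp.\ \EFX) allocation must satisfy. Recall that $d_2$ is adjacent to $d_1$, to all of $X = \{x_1,\ldots,x_{t+3}\}$, to all of $Y = \{y_1,\ldots,y_{t+1}\}$, and (through $d_1$) indirectly to the rest of the graph; crucially, the only way a bundle can reach the high-value vertices in $B$ (the sets $X$, $Y$, and the $Z^j$'s) is through $d_2$, since $d_2$ separates $B$ from $B'$.

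First I would suppose toward a contradiction that $d_2 \in \pi_i$ but $d_2 \notin \tau_i$, i.e.\ removing $d_2$ disconnects $G'[\pi_i]$ into at least two components. Since $d_2$ is the unique cut separating $B$ from $B' = T\cup N\cup W\cup\{d_1\}$, at least one of the resulting components must consist entirely of vertices of $B$ (reached only via $d_2$). The key point is then a value estimate: by Observation~\ref{O:sizeOfB}, each connected component of $G[B]$ has value at most $5kt+t+1$ for either type, and $u_{\I}(d_2)=u_{\II}(d_2)=10kt$ is itself large. I would then look at how much value agent $i$ loses when $d_2$ is deleted from $\tau_j$-style reasoning applied to $\pi_i$, and compare the resulting envy that this forces from the agent holding $d_2$. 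The heart of the argument is that because $d_2$ is a cut vertex of $\pi_i$, it does \emph{not} lie in $\tau_i$, so when another agent computes envy toward $i$ up to one (resp.\ any) item, $d_2$ cannot be the item removed; this means the full weight $10kt$ of $d_2$ stays counted in $u(\pi_i)$, and the agents that are starved of value (those restricted to components of $B'$ or singletons) will have $u(\pi_i) - \max_{v\in\tau_i}u(v) > u(\pi_\cdot)$, violating the fairness notion via Observation~\ref{O:PreEFO}.

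Concretely, I would fix an agent $j\ne i$ whose bundle has small value — such an agent must exist by a counting argument, since the total value and the number of agents ($2t+6$) are arranged so that not everyone can hold a component of value close to $5kt+t+1$ plus the $d_2$-anchored mass. Then I would show $u_{\mathsf{a}_j}(\pi_i) - \max_{v\in\tau_i} u_{\mathsf{a}_j}(v) \geq u_{\mathsf{a}_j}(d_2) - (\text{value of the largest removable item}) > u_{\mathsf{a}_j}(\pi_j)$, invoking the bound that removable items of $\pi_i$ (those in $\tau_i$) have value at most $5kt+t+1 \ll 10kt$, so deleting any single removable item still leaves the $d_2$-mass intact. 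This contradicts \EFO/\EFX by Observation~\ref{O:PreEFO}, completing the proof. The same inequality works for \EFX since $\min_{v\in\tau_i}u(v)\le\max_{v\in\tau_i}u(v)$, so the \EFX envy is at least as large.

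The main obstacle I anticipate is making the value accounting airtight: I must verify that whichever type agent $j$ is, the component structure of $\pi_i$ after deleting $d_2$ genuinely forces a removable-item bound strictly below $10kt$, and that a suitably-starved agent $j$ is guaranteed to exist regardless of how the adversary distributes the remaining $p-|\pi_i|$ items among the other $2t+5$ agents. Establishing the latter cleanly may require quantifying the total available value of each type relative to $p$ and the threshold $5kt+t+1$, and carefully separating the Type-I and Type-II bookkeeping (since $N$ and $W$, and $X$ and $Y$, carry value asymmetrically across the two types). Once these value inequalities are pinned down, the contradiction via Observation~\ref{O:PreEFO} follows mechanically for both fairness notions.
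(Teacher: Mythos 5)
Your high-level strategy coincides with the paper's: assume $d_2\in\pi_i$ but $d_2\notin\tau_i$, deduce via Observation~\ref{O:PreEFO} that every agent $j$ must satisfy $u_j(\pi_j)\geq 10kt$, and then exhibit an agent whose achievable bundle value is structurally capped below $10kt$. But there is a genuine gap precisely where you flag your own uncertainty: the existence of the starved agent. You locate the starved agents among ``those restricted to components of $B'$ or singletons'' and propose a counting argument over total value. This is the wrong side of the cut: agents confined to $B'$ are \emph{not} value-starved --- a Type-I agent inside $B'$ can accumulate many $W$-vertices (each worth $5kt$ to it) and a Type-II agent many $N$-vertices, so no analogue of Observation~\ref{O:sizeOfB} holds in $B'$, and the total value available there is easily large enough to feed all remaining agents. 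The step your sketch is missing is the paper's Observation~\ref{O:EFXtrivial}, a pigeonhole on $T\cup\{d_1\}$: each connected piece of $N\cup W$ (a matched pair $v_jw_j$) is worth too little on its own, so any agent confined to $B'$ that reaches the required threshold must hold a vertex of $T\cup\{d_1\}$, and $|T\cup\{d_1\}|=t+1$; hence at most $t+1$ agents of each type can live in $B'$. Since there are $t+3$ agents of each type, since only one agent holds $d_2$, and since every other bundle lies entirely in $B'$ or entirely in $B$ (as $d_2$ separates them), at least one agent of \emph{each} type is confined to $B$, where Observation~\ref{O:sizeOfB} caps its value at $5kt+t+1<10kt$ --- the contradiction. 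Without this pigeonhole, the ``counting argument'' you defer to does not go through.

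There is also a quantitative slip that would sink the argument even with the starved agent in hand: you bound the envy toward $i$ by $u_{\mathsf{a}_j}(d_2)-(\text{value of the largest removable item})$, i.e.\ by roughly $10kt-(5kt+t+1)=5kt-t-1$. That threshold lies \emph{below} the cap $5kt+t+1$ of Observation~\ref{O:sizeOfB}, so even a correctly identified agent confined to $B$ would yield no contradiction. The removable item must be subtracted from $u_j(\pi_i)$, not from $u_j(d_2)$: since $\{d_2\}$ and $\tau_i$ are disjoint subsets of $\pi_i$ and valuations are additive and nonnegative, $u_j(\pi_i)\geq u_j(d_2)+\max_{v\in\tau_i}u_j(v)$, so the full threshold $10kt$ survives the subtraction --- as you in fact assert informally earlier (``the full weight $10kt$ of $d_2$ stays counted''). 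With that correction and the pigeonhole above, the argument closes for both \EFO and \EFX exactly as in the paper.
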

\begin{proof}
Targeting contradiction, let us assume that there is some $i \in A$ such that $d_2 \in \pi_i$ and $d_2 \notin \tau_i$. 
Then, for every $j\in A$, it should be that $u_j(\pi_j) \geq u_j(d_2) = 10kt$. 
Now, observe that for each $j\neq i$, either $\pi_j \subseteq B'= T \cup N \cup W \cup \{d_1\}$ or $\pi_j \subseteq B=V(G') \setminus (A \cup \{d_1,d_2\})$. Since $u_{\I}(\pi_i) \geq 10kt > 5t+1$ and $u_{\II}(\pi_i) \geq 10kt > 5t+1$, due to Observation~\ref{O:EFXtrivial}, at most $t+1$ agents of Type-I as well as of Type-II are assigned vertices from $B$. Since there are $t+3$ agents of each type (Type-I and Type-II), there is at least one agent of each type that is assigned vertices only from $B$. Let $j$ and $j'$ be Type-I and Type-II agents, respectively, that are assigned vertices only from  $B$.

Now, it follows from Observation~\ref{O:sizeOfB} that $u_{\I}(\pi_j) \leq 5kt+t+1 < 10kt$ and $u_{\II}(\pi_{j'}) \leq 5kt+t+1 < 10kt$, which contradicts the fact that $\Pi$ is an \EFO (resp., \EFX) allocation. 
\end{proof}

In the following observation, we establish a necessary condition for $\Pi$ to be an \EFO (resp., \EFX) allocation.

\begin{observation}\label{O:notEFX}
If for every vertex $v\in \tau_i$, $\pi_i\setminus \{v\}$ contains at least $k+1$ vertices from $N$, then $\Pi$ is not an \EFO (resp. \EFX) allocation.
\end{observation}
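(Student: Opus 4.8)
The plan is to argue by contradiction: assume that $\Pi$ satisfies \EFO (resp.\ \EFX) and that the stated hypothesis holds for agent $i$, and derive that some Type-II agent envies $\pi_i$ even after the deletion of any single item, which contradicts \EFO/\EFX via Observation~\ref{O:PreEFO}. First I would rewrite the hypothesis in value terms for the Type-II agents. Each vertex of $N$ is worth $5t$ to every Type-II agent, so additivity of the valuation gives, for every $v\in\tau_i$,
\[
u_{\II}(\pi_i)-u_{\II}(v)=u_{\II}(\pi_i\setminus\{v\})\ \geq\ (k+1)\cdot 5t\ =\ 5kt+5t,
\]
because $\pi_i\setminus\{v\}$ still contains at least $k+1$ vertices of $N$. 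In particular this quantity exceeds $5t+1$, so the Type-II premise of Observation~\ref{O:EFXtrivial} holds at agent $i$; it also strictly exceeds $5kt+t+1$ (since $t>1$), which is the comparison I will use at the end.

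The structural heart of the argument is that $d_2$ is a cut vertex separating $B'=T\cup N\cup W\cup\{d_1\}$ from $B=X\cup Y\cup Z$: after deleting $d_2$ there is no edge between $B'$ and $B$ (the only edges leaving $B$ toward $B'$ run through $d_2$), so every connected component of $G'-d_2$ lies entirely in $B'$ or entirely in $B$. Consequently, any bundle not containing $d_2$ is contained in $B'$ or in a single connected component of $G[B]$. I would then count over the $t+3$ Type-II agents: at most one of them can hold the single vertex $d_2$, and, by Observation~\ref{O:EFXtrivial}, at most $t+1$ of them can be assigned vertices only from $B'$. Hence at least $(t+3)-1-(t+1)=1$ Type-II agent $j'$ satisfies $\pi_{j'}\subseteq B$. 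Since $\pi_i$ contains vertices of $N\subseteq B'$, it is not contained in $B$, so $j'\neq i$. As $\pi_{j'}$ is connected and lies in $B$, it sits inside one connected component of $G[B]$, and Observation~\ref{O:sizeOfB} yields $u_{\II}(\pi_{j'})\leq 5kt+t+1$.

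Combining the two bounds gives, for every $v\in\tau_i$,
\[
u_{\II}(\pi_{j'})\ \leq\ 5kt+t+1\ <\ 5kt+5t\ \leq\ u_{\II}(\pi_i)-u_{\II}(v),
\]
so agent $j'$ (envious) and agent $i$ (envied) satisfy the hypothesis of Observation~\ref{O:PreEFO}; therefore $\Pi$ is neither an \EFO nor an \EFX allocation, contradicting our assumption. I expect the main obstacle to be exactly the structural step guaranteeing a ``leftover'' Type-II agent confined to $B$: one must notice the cut-vertex role of $d_2$, feed the hypothesis into Observation~\ref{O:EFXtrivial} to cap at $t+1$ the number of Type-II agents living in $B'$, and verify that the pigeonhole count (with the at-most-one $d_2$-holder) still leaves a Type-II agent inside $B$, whose bundle is provably cheap for Type-II by Observation~\ref{O:sizeOfB}. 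Once the per-item Type-II value $5t$ of an $N$-vertex and the component bound $5kt+t+1$ are in hand, the remaining inequalities are routine.
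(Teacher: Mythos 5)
Your proof is correct and follows essentially the same route as the paper's: the bound $u_{\II}(\pi_i\setminus\{v\})\geq 5t(k+1)$ feeds Observation~\ref{O:EFXtrivial}, a pigeonhole over the $t+3$ Type-II agents (at most one holding $d_2$, at most $t+1$ assigned vertices only from $B'$) leaves a Type-II agent confined to $B$, and Observation~\ref{O:sizeOfB} together with Observation~\ref{O:PreEFO} yields the contradiction. Your two explicit additions---that $d_2$ is a cut vertex separating $B'$ from $B$, and that the leftover agent $j'\neq i$---are points the paper leaves implicit, and you correctly work with the per-vertex value $5t$ on $N$ that all of the paper's proofs rely on (despite the ``$5kt$'' in its stated valuation function).
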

\begin{proof}
Targeting contradiction, let us assume that  for every vertex $v\in \tau_i$, $\pi_i\setminus \{v\}$ contains at least $k+1$ vertices from $N$ (and $\Pi$ is an \EFO (resp. \EFX) allocation). Since for each vertex $a\in N$, we have that $u_{\II}(a) = 5t$, it follows that for each vertex $v\in \tau_i$, $u_{\II}(\pi_i\setminus \{v\}) \geq 5t(k+1)$. Therefore, due to Observation~\ref{O:PreEFO}, for each Type-II agent $j\in A$, $u_j(\pi_j) \geq 5t(k+1)$. Moreover, since $u_{\II}(\pi_i\setminus \{v\}) \geq 5t(k+1)$, due to Observation~\ref{O:EFXtrivial}, at most $t+1$ Type-II agents can be assigned vertices from $B'$. Moreover, (at most) one agent can be assigned the vertex $d_2$, possibly  along with some other vertices. Therefore, there is a Type-II agent, say $j$, that is assigned vertices only from $B$. Now, due to Observation~\ref{O:sizeOfB}, $u_{\II}(\pi_{j}) \leq 5kt+t+1<5k(t+1)$. This leads to a contradiction.
\end{proof}

The following observation establishes that ``too many'' vertices from $B'$ cannot be allocated to agents in $\Pi$.

\begin{observation}\label{O:notmany}
For $i \in A$, $\pi_i$ can be assigned at most $k$ vertices from $W$ and at most $k+1$ vertices from $N$.     
\end{observation}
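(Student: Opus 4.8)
The plan is to prove the two bounds separately, obtaining the bound on $|\pi_i\cap N|$ directly from Observation~\ref{O:notEFX} and then bootstrapping it to the bound on $|\pi_i\cap W|$ using the leaf structure of $W$.

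First I would establish that $|\pi_i\cap N|\leq k+1$ for every $i\in A$. Since $((G',A,\mathcal{U},p),P)$ is a Yes-instance and $\Pi$ satisfies \EFO (resp.\ \EFX), the contrapositive of Observation~\ref{O:notEFX} applied to the index $i$ tells us that it cannot be the case that every $v\in\tau_i$ leaves at least $k+1$ vertices of $N$ inside $\pi_i\setminus\{v\}$. Hence there is some $v^\ast\in\tau_i$ with $|(\pi_i\setminus\{v^\ast\})\cap N|\leq k$, and adding $v^\ast$ back increases the count of $N$-vertices by at most one, giving $|\pi_i\cap N|\leq k+1$. The case $|\pi_i|=1$ is trivial, and for $|\pi_i|\geq 2$ the set $\tau_i$ is nonempty, as removing a leaf of a spanning tree of $G'[\pi_i]$ preserves connectivity.

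Next I would deduce $|\pi_i\cap W|\leq k$. Each $w_\ell\in W$ has $v_\ell\in N$ as its unique neighbour in $G'$; therefore, whenever $|\pi_i|\geq 2$, connectivity of $G'[\pi_i]$ forces $v_\ell\in\pi_i$ for every $w_\ell\in\pi_i\cap W$. This yields an injection from $\pi_i\cap W$ into $\pi_i\cap N$, so $|\pi_i\cap W|\leq|\pi_i\cap N|\leq k+1$. To sharpen this to $k$, suppose towards a contradiction that $|\pi_i\cap W|=k+1$; then necessarily $|\pi_i\cap N|=k+1$ and every $N$-vertex of $\pi_i$ carries its matched $W$-leaf inside $\pi_i$. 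But then removing any such $v_\ell$ isolates its leaf $w_\ell$ and disconnects $G'[\pi_i\setminus\{v_\ell\}]$ (as $|\pi_i|\geq 2(k+1)\geq 6$ since $k>1$), so no vertex of $N$ lies in $\tau_i$. Consequently every $v\in\tau_i$ satisfies $v\notin N$, whence $\pi_i\setminus\{v\}$ still contains all $k+1$ vertices of $N$; by Observation~\ref{O:notEFX} this contradicts $\Pi$ being an \EFO (resp.\ \EFX) allocation. Hence $|\pi_i\cap W|\leq k$, and the degenerate case $|\pi_i|=1$ is immediate since $k>1$.

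The main obstacle is precisely this gap of one between the two bounds: the naive matching argument only gives $|\pi_i\cap W|\leq|\pi_i\cap N|\leq k+1$, and closing it to $k$ requires the observation that a fully matched bundle has \emph{no} $N$-vertex in $\tau_i$, which re-activates Observation~\ref{O:notEFX}. The only remaining care is the bookkeeping of the small-bundle cases ($|\pi_i|=1$ and $|\pi_i|=2$), where $\tau_i$ and the leaf argument behave trivially and are dispatched using $k>1$.
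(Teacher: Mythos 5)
Your proof is correct and follows essentially the same route as the paper: both arguments hinge on Observation~\ref{O:notEFX} together with the fact that each leaf $w_\ell\in\pi_i\cap W$ forces its matched neighbour $v_\ell\in N$ into $\pi_i$ and out of $\tau_i$, so that a bundle containing $k+1$ matched pairs retains $k+1$ vertices of $N$ after deleting any vertex of $\tau_i$. The only difference is presentational: the paper proves both bounds at once via a case split ($|\pi_i\cap W|\geq k+1$ or $|\pi_i\cap N|\geq k+2$), whereas you first extract $|\pi_i\cap N|\leq k+1$ from the contrapositive and then bootstrap it to $|\pi_i\cap W|\leq k$ --- the underlying ideas coincide.
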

\begin{proof}
First, we will show that if $\pi_i$ contains more than $k$ vertices from $W$ or more than $k+2$ vertices from $N$, then for any vertex $v\in \tau_i$, $\pi_i \setminus \{v\}$ contains at least $k+1$ vertices from $N$. We have the following two cases: 
\begin{enumerate}
    \item $\pi_i$ contains  at least $k+1$ vertices from $W$. Since $W$ is an independent set and each vertex in $W$ is uniquely matched to a vertex in $N$, it follows that $\pi_i$ contains at least $k+1$ vertices from $N$ as well (along with some other vertices from the rest of the graph). Let these $k+1$ vertices from $N$ be denoted by $N'$. Observe that for any vertex $v\in N'$, $v\notin \tau_i$ (since removing $v$ from $\pi_i$ disconnects at least one vertex from $W$ that is a part of $\pi_i$).  Therefore, for any vertex $v\in \tau_i$, $\pi_i\setminus \{v\}$ contains at least $k+1$ vertices from $N$. 

    \item $\pi_i$ contains at least $k+2$ vertices from $N$. Then, it follows directly that for any vertex $v\in \pi_i$, $\pi_i \setminus \{v\}$ contains at least $k+1$ vertices from $N$.
\end{enumerate}

The rest of the proof follows from Observation~\ref{O:notEFX}. 
\end{proof}
%The proof of the statement that $\pi_i$ cannot contain more that $k+1$ vertices from $N$. The proof follows from the fact that for any vertex $v\in \pi_i$, $\pi_i\setminus \{v\}$ contains at least $k+1$ vertices from $N$ if $\pi_i$ contains $k+2$ vertices from $N$.

Our next observation establishes that at least one agent of Type-I is assigned a vertex from $\{y_1,\ldots, y_{t+1}\}$, say $y_i$, along with some vertices (at least two) of the form $z^i_j$. 
\begin{observation}\label{O:min2}
 There is at least one Type-I agent, say $i$, such that $\pi_i$ contains at least one vertex from  $Y$, say $y_j$, such that $y_j\notin \tau_i$. Equivalently, $u_{\I}(\pi_i) - \min_{v\in \tau_i}u_{\I}(v) \geq u_{\I}(\pi_i) - \max_{v\in \tau_i}u_{\I}(v) \geq 5kt+t+1$.
\end{observation}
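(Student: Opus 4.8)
The plan is to argue by contradiction: suppose that no Type-I agent $i$ satisfies $y_j \in \pi_i$ and $y_j \notin \tau_i$ for any $y_j \in Y$, and derive a contradiction from a counting argument on the $Z$-vertices. The structural starting point is that each $z^j_\ell$ is adjacent only to $y_j$ in $G'$, so any bundle meeting $Z^j$ must contain $y_j$; moreover, a bundle containing $y_j$ together with two vertices of $Z^j$ (or one vertex of $Z^j$ and $d_2$) has $y_j$ as a cut vertex, hence $y_j\notin\tau$ for that bundle. I will call such a bundle \emph{$z$-heavy}. Under the hypothesis, \emph{no} Type-I bundle is $z$-heavy, so every Type-I bundle holds at most one $z$-vertex.

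First I would bound how many non-$Z$ vertices any allocation can use. By Observation~\ref{O:notmany} each of the $2t+6$ bundles receives at most $k$ vertices of $W$ and at most $k+1$ vertices of $N$; together with the fixed sets $T,\{d_1\},\{d_2\},X,Y$ (of total size $3t+6$) this caps the number of assigned non-$Z$ vertices at $(2t+6)(2k+1)+3t+6 = 4kt+O(k+t)$. Since $p=32kt$, at least $28kt-O(k+t)$ vertices of $\bigcup_j Z^j$ must be assigned, and this quantity exceeds $2t+6$ for all $k,t\ge 2$ (this is exactly what the ``large enough'' choice $32kt$ buys). If there were no $z$-heavy bundle at all, every bundle would hold at most one $z$-vertex, permitting at most $2t+6$ assigned $z$-vertices, a contradiction. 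Hence some bundle is $z$-heavy; by the hypothesis it must belong to a Type-II agent $i'$, and it contains some $y_j$ with $y_j\notin\tau_{i'}$.

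The key step is to turn this $z$-heavy Type-II bundle into a value obstruction. Because $y_j\notin\tau_{i'}$, the item removed in the \EFO (resp.\ \EFX) comparison is never $y_j$, so $u_{\I}(\pi_{i'})-\max_{v\in\tau_{i'}}u_{\I}(v)\ge u_{\I}(y_j)=5kt+t+1$ (and the \EFX quantity, using $\min\le\max$, is at least this); thus every Type-I agent $i$ must satisfy $u_{\I}(\pi_i)\ge 5kt+t+1$. I would then show this is impossible once a $y$-vertex is ``spent'' on $i'$. A Type-I bundle reaches value $5kt+t+1$ only if it contains $d_2$, or contains a vertex of $Y$, or is built from $W\cup(T\cup\{d_1\})$; at most one bundle contains $d_2$, and since $|T\cup\{d_1\}|=t+1$ while each bundle takes at most $k$ vertices of $W$ (Observation~\ref{O:notmany}), any $W$-based bundle reaching $5kt+t+1$ must contain \emph{all} of $T\cup\{d_1\}$, so at most one such bundle exists. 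As two vertices of $Y$ are joined only through $d_2$, each non-$d_2$ bundle holds at most one $y$, and since at least one $y$ lies in $\pi_{i'}$, at most $t$ remain for Type-I bundles. Hence at most $t+1+1=t+2$ Type-I bundles reach value $5kt+t+1$, whereas there are $t+3$ Type-I agents — the desired contradiction. For the ``equivalently'' reformulation, the Type-I agent $i$ produced has $y_j\notin\tau_i$, so $y_j$ is never removed and $u_{\I}(\pi_i)-\max_{v\in\tau_i}u_{\I}(v)\ge u_{\I}(y_j)=5kt+t+1$, with the first inequality immediate from $\min\le\max$.

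The main obstacle is precisely the Type-II case: ruling out the possibility that the $Z$-vertices are simply ``dumped'' onto Type-II agents (whose valuations for $Y$ and $Z$ are $0$). This is where the $z$-counting and the value-counting must be combined, and where one must check that the envy bound transfers \emph{uniformly} to both \EFO and \EFX — the crucial point being that a non-removable $y_j$ retains its full value $5kt+t+1$ under deletion of a single item, so the weaker \EFO bound already forces all Type-I bundles up to $5kt+t+1$. The remaining care is bookkeeping in the value-counting (the interplay of the $k$-cap on $W$ per bundle with the size $t+1$ of $T\cup\{d_1\}$), which pins the number of high-value Type-I bundles at $t+2$.
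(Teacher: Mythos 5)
Your proof is correct, and its first stage is essentially the paper's own argument: both of you use Observation~\ref{O:notmany} to cap the number of assigned non-$Z$ vertices at $(2t+6)(2k+1)+3t+6$, conclude that many more than $2t+6$ vertices of $Z$ must be assigned so that some bundle receives two $z$-vertices, and then note that two $z$-vertices from one $Z^j$ make $y_j$ a cut (hence non-removable) vertex of that bundle, while $z$-vertices from two different groups force $d_2$ to be a cut vertex (the paper rules this out via Observation~\ref{O:2inTau}; you simply absorb it into your ``$z$-heavy'' case, which is equally valid since $y_j\notin\tau$ there too). The genuine divergence is your second stage. The paper stops once it has \emph{some} agent $i\in A$ with $y_q\in\pi_i\setminus\tau_i$ --- read literally, its proof never shows this agent is Type-I, so it only establishes the displayed inequality for an agent of unspecified type (which is in fact all that Lemma~\ref{L:EFOreverse} uses downstream). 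You, by contrast, prove the statement as written: assuming the $z$-heavy agent is Type-II, the \EFO (resp., \EFX) comparison against its non-removable $y_j$ forces all $t+3$ Type-I agents up to value $5kt+t+1$, and your disjointness count --- at most one bundle containing $d_2$, at most $t$ bundles containing one of the remaining $Y$-vertices, and at most one bundle built from $k$ vertices of $W$ plus all of $T\cup\{d_1\}$, the cap of $k$ on $W$ again by Observation~\ref{O:notmany} --- limits the achievers to $t+2$, a contradiction. Your route is longer, but it buys the literal ``Type-I'' claim and in effect repairs an imprecision in the paper's proof.

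Three harmless caveats. First, your opening assertion that any bundle meeting $Z^j$ must contain $y_j$ fails for singleton bundles $\{z^j_\ell\}$, but you only invoke it for bundles with two $z$-vertices, where it is correct. Second, you take $p=32kt$, matching the arithmetic in Lemma~\ref{L:OneSide}, whereas the construction sets $p=4k(2t+6)=8kt+24k$ and the paper's proof of this observation uses the latter; the paper is internally inconsistent here, and under either value the $Z$-surplus exceeds $2t+6$ for $k,t\geq 2$, so nothing breaks. Third, your ``$W$-based bundle'' count relies on $u_{\I}(w)=5t$ for $w\in W$; this is the value the paper itself uses in Observation~\ref{O:EFXtrivial} and Lemma~\ref{L:EFOreverse}, although its valuation table misprints it as $5kt$ (under the literal table value your ``at most one $W$-based bundle'' step would fail, so it is worth stating which reading you adopt).
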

\begin{proof}
Due to Observation~\ref{O:notmany}, each agent gets at most $k$ vertices from $W$  and at most $k+1$ vertices from $N$. Thus, in total, at most $(2t+6)(2k+1)$ vertices are assigned from $N\cup W$. Therefore, the number of vertices that can be assigned without assigning any vertex from $Z$ is at most $(2t+6)(2k+1)+|T|+|\{d_1,d_2\}|+|X|+|Y| = 4kt+2t+12k+6+t+1+1+(t+3)+(t+1) = 4kt+ 5t+ 12k+12$ vertices. Therefore, at least $p-4kt-5t-12k-12 = 8kt+24k-4kt-5t-12k-12 = 4kt+12k-5t-12$ vertices are assigned from $Z$. Recall that $|A| = 2t+6$. Observe that for $k>1$ and $t>1$, $4kt+12k-5t-12 > 2t+6$, and hence, there is at least one agent, say $i\in A$, that receives two vertices of the form $z^q_r$ and $z^{q'}_{r'}$. Now, we distinguish the following two cases:
\begin{enumerate}
    \item $q = q'$. In this case, observe that the vertex $y_q$ is also assigned to the agent $i$ since $Z$ is an independent set (from construction) and each vertex of the form $z^q_{r}$ is attached to the vertex $y_q$. Hence, in this case, observe that since deleting $y_q$ from $\pi_i$ gives at least two connected components containing vertices $z^q_r$ and $z^q_{r'}$, $y_q$ is the vertex we are looking for.

    \item $q\neq q'$. Observe that here, due to the construction of $G'$, to assign both $z^q_r$ and $z^{q'}_{r'}$ to the agent $i$, we have to assign both $y_q$ and $y_{q'}$ also to $i$. Since $Y$ is an independent set connected by $d_2$, we have to, in turn, assign $d_2$ to agent $i$ as well. Now, in this case, observe that $d_2\in \pi_i$ and $d_2\notin \tau_i$ (since removing $d_2$ from $\pi_i$ gives at least two connected components containing $y_q$ and $y_{q'}$). Notice that this contradicts Observation~\ref{O:2inTau}.
\end{enumerate}
This completes our proof. 
\end{proof}

Finally, we have the following lemma that proves the other side of the reduction.
\begin{lemma}\label{L:EFOreverse}
If $((G',A,\mathcal{U},p),P)$ is Yes-instance of \EFO-\CFD (resp., \EFX-\CFD), then $((G,k),t+k)$ is a Yes-instance of \RBDS.    
\end{lemma}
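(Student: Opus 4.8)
The plan is to read off, from any EF1 (resp.\ EFX) allocation $\Pi=(\pi_1,\dots,\pi_{2t+6})$ of $((G',A,\mathcal{U},p),P)$, a red–blue dominating set of $G$ of size at most $k$. The whole argument isolates one distinguished Type-I agent whose bundle is forced to contain the entire original class $T$ together with exactly $k$ vertices of $N$. First I would use Observation~\ref{O:min2}: it furnishes a Type-I agent $i$ with $u_\I(\pi_i)-\max_{v\in\tau_i}u_\I(v)\ge 5kt+t+1$. Feeding this into the EF1 inequality $u_\I(\pi_\ell)\ge u_\I(\pi_i)-\max_{v\in\tau_i}u_\I(v)$ (resp.\ the EFX inequality with $\min$, which only makes the right-hand side larger) shows that \emph{every} Type-I agent $\ell$ satisfies $u_\I(\pi_\ell)\ge 5kt+t+1$.

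Next I would locate the distinguished agent by counting. No agent can hold two vertices of $Y$: two such vertices lie in distinct components of $G'[B]$, so holding both would force $d_2$ into the bundle with $d_2\notin\tau$, contradicting Observation~\ref{O:2inTau}. Hence at most $t+1$ Type-I agents own a $Y$-vertex and at most one owns $d_2$; as there are $t+3$ Type-I agents, at least one Type-I agent $i_0$ owns neither. Since $d_2$ is the only link between $B'$ and $B$, the connected bundle $\pi_{i_0}$ lies entirely in $B'$ or entirely in $B$; lying in $B$ would give $u_\I(\pi_{i_0})=0$ (only $Y$ contributes to Type-I inside $B$), so $\pi_{i_0}\subseteq B'=T\cup N\cup W\cup\{d_1\}$. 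Because $u_\I$ vanishes on $N$, we have $u_\I(\pi_{i_0})=5t\,\lvert W\cap\pi_{i_0}\rvert+\lvert T\cap\pi_{i_0}\rvert+[d_1\in\pi_{i_0}]$, and Observation~\ref{O:notmany} bounds $\lvert W\cap\pi_{i_0}\rvert\le k$; the requirement $u_\I(\pi_{i_0})\ge 5kt+t+1$ then forces equality throughout, i.e.\ $T\subseteq\pi_{i_0}$, $d_1\in\pi_{i_0}$, and exactly $k$ vertices of $W$ lie in $\pi_{i_0}$.

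Finally I would extract the dominating set. The $k$ chosen $W$-vertices are pendants whose only neighbours are their matched $N$-partners, so connectivity of $G'[\pi_{i_0}]$ drags these $k$ distinct partners $S_0\subseteq N$ into $\pi_{i_0}$, and by Observation~\ref{O:notmany} at most one further vertex of $N$ lies in $\pi_{i_0}$. I claim $S_0$ already dominates $T$. If some vertex of $T$ had no neighbour in $S_0$, it could be dominated only by the extra $N$-vertex $v^\star$, so $v^\star\notin\tau_{i_0}$; but the $k$ partners are themselves non-removable (each carries a pendant $W$-vertex), so none of the $k+1$ vertices of $N\cap\pi_{i_0}$ lies in $\tau_{i_0}$, whence every $v\in\tau_{i_0}$ leaves all $k+1$ of them inside $\pi_{i_0}\setminus\{v\}$ — precisely the configuration forbidden by Observation~\ref{O:notEFX}. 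This contradiction shows $S_0$ dominates $T$; since $T$-vertices retain their $G$-neighbourhoods in $G'$ and $S_0\subseteq N$ (no $W$-vertex is matched to $d_2$), the set $S_0$ is an RBDS of $G$ with $\lvert S_0\rvert=k$, so $((G,k),t+k)$ is a Yes-instance.

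I expect the second and last steps to be the delicate points. The counting must correctly absorb the single $d_2$-agent and rule out an agent holding two $Y$-vertices, and the final step is exactly where Observation~\ref{O:notEFX} is indispensable: without it the connectivity argument would only yield a dominating set of size $k+1$, and it is the non-removability of all $k+1$ candidate $N$-vertices that collapses this back to the desired size $k$.
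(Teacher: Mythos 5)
Your proposal is correct and follows essentially the same route as the paper's proof: the same pigeonhole over the $t+3$ Type-I agents (using Observations~\ref{O:min2}, \ref{O:2inTau} and the disjointness of bundles on $Y\cup\{d_2\}$) to isolate an agent confined to $B'$, the same valuation accounting with Observation~\ref{O:notmany} forcing $T\cup\{d_1\}$ plus exactly $k$ pendant-matched $W$-vertices into that bundle, and the same appeal to Observation~\ref{O:notEFX} to handle the at most one extra $N$-vertex. The only difference is cosmetic: you fold the paper's two-case analysis (showing the extra vertex $v$ lies in $\tau_i$ and then that $N_T(v)\subseteq N(S)$) into a single direct contradiction, which is equivalent.
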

\begin{proof}
Let $\Pi= (\pi_1,\ldots, \pi_{2t+6})$ be an \EFO (resp., \EFX) allocation for $((G',A,\mathcal{U},p),P)$. Then, due to Observation~\ref{O:min2} and Observation~\ref{O:PreEFO}, we have that for each Type-I agent $i$, $u_i(\pi_i) \geq 5kt+t+1$.

Next, we establish that there is at least one Type-I agent $i$ such that $\pi_i \subseteq B'$. First, observe that there can be at most one (Type-I) agent to whom the vertex $d_2$ is assigned, possibly along with some other vertices. Now, observe that once $d_2$ is allocated to some agent, say $j$, every other agent $j' \neq j$ can be assigned vertices either only from $B'$ or only from $B$, i.e., either $\pi_{j'} \subseteq B'$ or $\pi_{j'} \subseteq B$. Since for every vertex $v\in B \setminus Y$, $u_{\I}(v) = 0$, each Type-I agent that is assigned vertices only from $B$ is assigned at least one vertex from $Y$. Since $|Y| = t+1$, at most $t+1$ agents can be assigned vertices only from $B$. Therefore, there is at least one Type-I agent $i$ such that $V(\pi_i) \subseteq B'$.

Now, consider $\pi_i$ ($i$ is a Type-I agent). We know that, due to Observation~\ref{O:min2} and Observation~\ref{O:PreEFO},  $u_i(\pi_i) \geq 5kt+t+1$. 
Now, $u_i(\pi_i) = u_i(\pi_i \cap (T \cup \{d_1\}))+ u_i(\pi_i\cap N) + u_i(\pi_i \cap W)$. Recall that for $v\in T\cup \set{d_1}$, $u_i(v) = 1$ and for $v' \in N$, $u_i(v') = 0$. Therefore, $u_i(\pi_i) \leq t+1+ 0 + u_i(\pi_i \cap W)$. We have already established that $u_i(\pi_i) \geq 5kt+t+1$. Therefore, $u_i(\pi_i \cap W) +t+1 \geq 5kt+t+1$, i.e., $u_i(\pi_i \cap W) \geq 5kt$. Since for each vertex $w\in W$, $u_i(w) = 5t$, we have that $|\pi_i \cap W| \geq k$. Due to Observation~\ref{O:notmany}, we have that $|\pi_i \cap W| \leq k$. Therefore, we have that $i$ is assigned exactly $k$ vertices from $W$. Using similar arguments, we have that $u_i(\pi_i)= u_i(\pi_i \cap (T\cup \{d_1 \}))+5kt \geq 5kt+t+1$, and hence, $u_i(\pi_i \cap (T\cup \{d_1 \})) \geq t+1$. Since $u_i(v) = 1$, for each $v\in T\cup \{d_1\}$, we have that $\pi_i \cap (T\cup \{d_1 \}) = T\cup \{d_1\}$. 

Let $W' = \pi_i \cap W$ and let $S \subseteq N$ be the set of vertices that are uniquely matched to $W'$. Observe that $S\subseteq \pi_i$. Note that $|S| =k$. We have the following claim to complete our proof.

\begin{clm}
$S$ is an RBDS of $G$.    
\end{clm}
\begin{proofofclaim}
We distinguish the following two cases.
\begin{enumerate}
\item $\pi_i \cap N = S$. Since $T \cup \{d_1\}$ is an independent set and vertices of $T \cup \{d_1\}$ can only have edges to vertices in $N$ in $G'[B']$, we have that $S$ dominates each vertex in $T\cup \{d_1\}$. Hence, $S$ dominates each vertex in $T$ and is an RBDS of $G$.

\item $\pi_i \cap N \neq S$. In this case, let $S' = \pi_i \cap N$. Observe that $S\subseteq S'$. Due to Observation~\ref{O:notmany}, we have that $|S'| = k+1$. Therefore, let $v$ be the vertex such that $S' = S \cup \{v\}$. Similarly to the previous case, we have that $S'$ is an RBDS of $G$. Moreover, similarly to the arguments used in the proof of Observation~\ref{O:notmany}, it is easy to see that no vertex $v' \in S$ is in $\tau_i$ (as $G'[\pi_i \setminus \{v'\}]$ contains at least two connected components, one containing the unique vertex $w' \in W$ attached to $v'$). 

Next, we establish that $v\in \tau_i$. Targeting contradiction assume that $v \notin \tau_i$. Hence, for each vertex $a \in S'$, $a\notin \tau_i$, and hence for any vertex $v\in \tau_i$, $\pi_i \setminus \{v\}$ contains at least $k+1$ vertices from $N$. This contradicts the fact that $\Pi$ is an \EFO (resp., \EFX) allocation due to Observation~\ref{O:notEFX}. Therefore, $v\in \tau_i$.%, and hence, $G'[\pi_i \setminus \set{v}]$ is connected. 

Now, if we can establish that for each vertex $v' \in N_T(v)$, $v' \in N(S)$, then this will prove that $S$ is an RBDS of $G$ (since $S'$ is an RBDS of $G$). Targeting contradiction assume that there is some $v' \in N_T(V)$ such that $v' \notin N(S)$. Then, since $T$ is an independent
set, $G'[\pi_i \setminus \{v\}]$ contains at least two connected components, one of which contains the vertex $v'$ (and another contains $d_1$). Therefore, $v \notin \tau_i$, which contradicts the fact that $v\in \tau_i$.  Hence, $S$ is an RBDS of $G$.
\end{enumerate}
This completes the proof of our claim. 
\end{proofofclaim}

Finally, since $S$ is an RBDS of $G$ and $|S|=k$, we have that $((G,k),t+k)$ is a Yes-instance of \RBDS.
\end{proof}

% Thus, we have the following lemma as a consequence of our reduction, Proposition~\ref{R:RBDS}, and Lemma~\ref{L:OneSide} and Lemma~\ref{L:EFOreverse}.

% \begin{lemma}\label{T:NPolyEFX}
% \EFO-\CFD~and \EFX-\CFD~parameterized by $\mathsf{vcn}+\mathsf{val}+ |A|+p$ do not admit polynomial compression, unless \NPoly. 
% \end{lemma}

Finally, we have the following theorem as a consequence of Proposition~\ref{R:RBDS}, and Lemma~\ref{L:IPROP}, Lemma~\ref{L:IEFOdd}, Lemma~\ref{L:IEFEven}, Lemma~\ref{L:OneSide}, Lemma~\ref{L:EFOreverse}, and the fact in all our constructions in Section~\ref{S:NPoly}, the value of $\mathsf{vcn}+\mathsf{val}+|A|+p$ is bounded by a function polynomial in $k+|T|$.

\NoPoly*

\section{Color Coding For \PROP-\CFD}\label{S:ColorProp}
In this section, we present a color-coding~\cite{cc} based \FPT~algorithm for \PROP-\CFD~parameterized by $p$. An $(n,k)$-\textit{perfect hash family} $\mathcal{F}$ is a family of functions from $[n]$ to $[k]$ such that for every set $S\subseteq [n]$ of size $k$, there exists a function $f\in \mathcal{F}$ that is injective on $S$. We shall use the following result.

\begin{proposition}[\cite{splitter}]\label{P:cc}
For any $n,k\geq 1$, we can construct an $(n,k)$-perfect hash family of size $e^kk^{\mathcal{O}( \log k)}\log n$ and in  $e^kk^{\mathcal{O}( \log k)}$ $n\log n$ time.
\end{proposition}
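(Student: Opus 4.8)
The plan is to prove this by reconstructing the two-stage construction of Naor, Schulman and Srinivasan, which separates an inexpensive \emph{universe reduction} from the expensive \emph{small-universe} perfect hashing in which the $e^k$ factor is unavoidable. To motivate the bound, I would first record a purely existential benchmark: a uniformly random $f\colon[n]\to[k]$ is injective on a fixed $k$-set $S$ with probability $k!/k^k=\Theta(\sqrt{k}\,e^{-k})$, so by a union bound over the $\binom{n}{k}$ sets a family of $O(e^k\sqrt{k}\log n)$ random functions is an $(n,k)$-perfect hash family with positive probability. The entire difficulty is to make such a family \emph{explicit} and computable within the claimed time, and the two-stage scheme below achieves this at the price of only an extra $k^{O(\log k)}$ factor.

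First I would reduce the universe from $[n]$ to $[k^2]$, i.e.\ build a family $\mathcal{G}$ of functions $[n]\to[k^2]$ of size $k^{O(1)}\log n$ such that every $k$-set is injectively mapped by some $g\in\mathcal{G}$. Since injecting $k$ elements into $k^2$ buckets succeeds for a random map with constant probability (the expected number of collisions is $\binom{k}{2}/k^2<1$), \emph{no} $e^k$ factor appears at this stage, and an explicit family of polynomial-in-$k$ size follows from the standard splitter constructions (based on error-correcting codes and expanders). Each $g$ is evaluable in $O(n)$ time, so $\mathcal{G}$ is produced in $k^{O(1)}n\log n$ time.

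The technically hard step is the small-universe construction: an explicit family $\mathcal{H}$ of functions $[k^2]\to[k]$ of size $e^k k^{O(\log k)}$ that injects every $k$-subset of $[k^2]$. I would build these through a \emph{splitting tree} of depth $\ell=\lceil\log_2 k\rceil$. At level $j$ a single function $[k^2]\to\{0,1\}$ is required to balance simultaneously all $2^j$ current blocks (each of size $\approx k/2^j$) of any configuration; the final bucket of an element is the length-$\ell$ bit-string formed by its choices down the tree, so an injecting function is exactly a tuple consisting of one balancing function per level, and $\mathcal{H}$ is the product of the level-families $\mathcal{F}_0,\dots,\mathcal{F}_{\ell-1}$. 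Given a $k$-set $S$ one selects a balancing function at each level for whatever configuration the previous levels produced, and correctness holds because each $\mathcal{F}_j$ covers every configuration. Each $\mathcal{F}_j$ is obtained by derandomizing the probabilistic existence argument via the method of conditional expectations over the fixed universe $[k^2]$, in time polynomial in $k$. The crucial bookkeeping is that $\prod_j|\mathcal{F}_j|=e^{O(k)}k^{O(\log k)}$: the geometric growth of the block counts weighed against the shrinking balancing penalties sums to $O(k)$ in the exponent of $e$, while the per-level polynomial derandomization overhead accumulates to $k^{O(\log k)}$ over the $\ell=O(\log k)$ levels, with no dependence on $n$.

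Finally I would compose the two stages as $\{h\circ g : g\in\mathcal{G},\,h\in\mathcal{H}\}$, a family of functions $[n]\to[k]$. For any $k$-set $S$, choosing $g$ injective on $S$ makes $g(S)$ a $k$-subset of $[k^2]$, and then choosing $h$ injective on $g(S)$ makes $h\circ g$ injective on $S$; hence the family is an $(n,k)$-perfect hash family. Its size multiplies to $k^{O(1)}\log n\cdot e^k k^{O(\log k)}=e^k k^{O(\log k)}\log n$, and since evaluating each composed function costs $O(n)$, the total time is $e^k k^{O(\log k)}\,n\log n$, as claimed. I expect the main obstacle to be the derandomization in the small-universe stage---controlling the blow-up of the family size across the $O(\log k)$ recursion levels so that the product stays at $e^k k^{O(\log k)}$ rather than degrading to, say, $2^{O(k\log k)}$---together with verifying that the balanced-splitter families $\mathcal{F}_j$ are indeed constructible deterministically in the stated time over the reduced universe.
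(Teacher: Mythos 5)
This proposition is imported verbatim from \cite{splitter}; the paper contains no proof of it, so your proposal can only be judged against the Naor--Schulman--Srinivasan construction itself. At the level of architecture you have reconstructed it faithfully: the probabilistic benchmark (injectivity probability $k!/k^k=\Theta(\sqrt{k}\,e^{-k})$, hence an existential family of size $O(e^k\sqrt{k}\log n)$), the universe reduction from $[n]$ to $[k^2]$ by an explicit splitter of size $k^{O(1)}\log n$, and the final composition $\{h\circ g\}$ with the correct size and time bookkeeping for the outer stages are all exactly as in NSS.

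The genuine gap is inside your small-universe stage, in two places. First, the derandomization claim ``each $\mathcal{F}_j$ is obtained via conditional expectations over $[k^2]$ in time polynomial in $k$'' does not survive scrutiny as stated: a single function $[k^2]\to\{0,1\}$ cannot balance all configurations (take a block lying inside one preimage), so $\mathcal{F}_j$ must be a family covering \emph{every} possible configuration of blocks, and the natural constraint set has size on the order of $\binom{k^2}{k}=2^{\Theta(k\log k)}$; a greedy cover or conditional-expectations computation over it costs $2^{\Theta(k\log k)}$ time, blowing the budget of $e^k k^{O(\log k)}$. This is precisely the obstruction NSS's $k$-restriction machinery is built to overcome: they insert an intermediate splitter that reduces to blocks of size $O(\log k)$, where exhaustive search for near-optimal perfect families costs only $k^{O(\log k)}$, rather than balancing all $2^j$ blocks of a depth-$\log k$ binary tree at once. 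Second, there is a quantitative mismatch you acknowledge mid-proof but erase in the conclusion: your own bookkeeping yields $\prod_j|\mathcal{F}_j|=e^{O(k)}k^{O(\log k)}$, yet the proposition claims base exactly $e$, i.e.\ $e^k k^{O(\log k)}$, and for a generic balanced binary splitting tree the constant in the exponent is not obviously $1$. Obtaining base exactly $e$ is where the NSS accounting does real work: the per-block families on blocks of size $b_i\approx\log k$ are taken near-optimal, of size $e^{b_i}b_i^{O(1)}$, so the product telescopes to $e^{\sum_i b_i}=e^k$ up to the $k^{O(\log k)}$ slack. So your proposal is the right skeleton with the hard middle vertebra missing; as written it proves at best a weaker bound of the form $2^{O(k)}k^{O(\log k)}\log n$, and only if the per-level construction time is repaired along NSS's lines.
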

    
\begin{definition}[Coloring]
For a graph $G$, a {\em coloring of $G$ using $k$ colors} is a function $col:V(G)\rightarrow[k]$. A connected subgraph $H$ of $G$ is {\em colorful} if all vertices in $H$ are colored with distinct colors.
\end{definition}

%Next, we have the following lemma, which we will use to solve the Step 3 of Algorithm~\ref{A:cc}.

In the following lemma, we solve a sub-problem that we will use for our algorithm for \PROP-\CFD. 
\begin{lemma}\label{L:imp}
Let $G$ be a vertex-weighted graph with weight function $w:V(G)\rightarrow \mathbb{N}_0$. Then, given numbers $t,k \in \mathbb{N}$, we can decide in \FPT~time whether there exists a connected subgraph $H$ of $G$ such that $|V(H)|= k$ and $\sum_{v\in V(H)} w(v) \geq t$.
\end{lemma}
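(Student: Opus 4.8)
The plan is to combine the color-coding technique with a dynamic program that, for every color set, computes the maximum weight of a \emph{colorful} connected subgraph. Since the weights sit on the vertices, a connected subgraph and any of its spanning trees span the same vertex set and thus have equal total weight; it therefore suffices to search for a colorful \emph{tree} on exactly $k$ vertices, which is more convenient to build by a recurrence.

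Fix a coloring $col : V(G) \to [k]$. I would compute a table $D[v][C]$, indexed by $v \in V(G)$ and $C \subseteq [k]$ with $col(v) \in C$, equal to the maximum weight of a tree $T \subseteq G$ rooted at $v$ whose set of vertex colors is exactly $C$ (so $T$ has $|C|$ vertices, one per color). The base case is $D[v][\{col(v)\}] = w(v)$, and for $|C| \ge 2$ the recurrence attaches one child subtree at a time:
\[
D[v][C] \;=\; \max_{\substack{u \in N(v),\ C = C_0 \,\uplus\, C_1 \\ col(v)\in C_0,\ col(u)\in C_1}} \big(\, D[v][C_0] + D[u][C_1] \,\big).
\]
Soundness is immediate: trees attaining $D[v][C_0]$ and $D[u][C_1]$ use disjoint color sets, hence are vertex-disjoint, and joining them through the edge $vu$ produces a colorful tree rooted at $v$ with color set $C$. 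For completeness, given an optimal colorful tree rooted at $v$ with $|C| \ge 2$, I would select any child $u$ of $v$, set $C_1$ to be the color set of the subtree hanging at $u$ and $C_0 = C \setminus C_1$ to be the colors of the remaining tree (which still contains $v$ and is connected); induction on $|C|$ yields the claimed value. A given coloring admits a colorful connected subgraph on exactly $k$ vertices of weight at least $t$ if and only if $\max_{v} D[v][\,[k]\,] \ge t$.

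To make the decision deterministic I would invoke the $(n,k)$-perfect hash family $\mathcal{F}$ of \cref{P:cc}: if some connected $H$ with $|V(H)| = k$ and $\sum_{v\in V(H)} w(v) \ge t$ exists, then some $f \in \mathcal{F}$ is injective on $V(H)$, so $H$ is colorful under $col = f$ and is detected by the table; conversely every colorful subgraph found is a genuine witness with $k$ vertices and weight $\ge t$. Running the dynamic program once for each $f \in \mathcal{F}$ and answering Yes iff some run reports a value $\ge t$ therefore decides the problem correctly. The table has $n\,2^{k}$ entries, and summing the work of the recurrence over all edges and all disjoint color pairs $(C_0,C_1)$ costs $O(3^{k} m)$ up to polynomial factors; multiplying by $|\mathcal{F}| = e^{k}k^{\mathcal{O}(\log k)}\log n$ keeps the total running time of the form $f(k)\cdot m^{\mathcal{O}(1)}$, i.e.\ \FPT.

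The main obstacle is getting the connectivity recurrence right. A naive dynamic program that deletes $v$ and recurses on a single neighbor using colors $C\setminus\{col(v)\}$ is incorrect precisely when $v$ is a cut vertex of the optimal subgraph, since then the remainder is disconnected and no single neighbor captures it. The subtree-merge formulation above, which decomposes an optimal rooted tree at \emph{one child} rather than peeling off the root, is exactly what circumvents this issue. The remaining steps---enumerating disjoint color pairs efficiently and the reduction from connected subgraphs to spanning trees---are routine.
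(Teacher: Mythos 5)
Your proposal is correct and follows essentially the same route as the paper: color the vertices via the $(n,k)$-perfect hash family of \cref{P:cc}, then run the standard subtree-merge dynamic program over pairs $(v,C)$ with the recurrence that splits the color set between $v$'s part and a neighbor's part, and accept iff some entry $D[v][[k]]$ reaches $t$. You supply slightly more detail than the paper (the explicit soundness/completeness induction via rooted trees, and the sharper $\mathcal{O}(3^k m)$ bound for the subset enumeration, where the paper settles for $2^k\cdot|V(G)|^{\mathcal{O}(1)}$ per coloring and leaves correctness to ``standard induction techniques''), but the underlying argument is the same.
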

\begin{proof}
We will use color-coding to find such a subgraph. Let the vertices of $G$ be colored using $k$ colors. A coloring of the vertices of $G$ is a \textit{good coloring} if there exists a colorful subgraph $G'$ in the coloring that satisfies $|V(G')|= k$ and $\sum_{v\in V(G')}w(v) \geq t$. If we are dealing with a Yes-instance, then due to Proposition~\ref{P:cc}, we can get such a coloring in $e^kk^{\mathcal{O}(k \log k)}m\log m$ time. (In total $e^kk^{\mathcal{O}(k \log k)}\log m$ colorings will be generated and if we have a Yes-instance, at least one of these colorings will be a good coloring. )

Now, for each of the generated coloring ($e^kk^{\mathcal{O}(k \log k)}\log m$ colorings in total), we show how to find a colorful subgraph $H$ in $G$, if exists, such that $|V(H)|= k$ and $\sum_{v\in V(H)} w(v) \geq t$ using dynamic programming. For $v\in V(G)$ and $C\subseteq [k]$, let $M[v,C]$ be a table entry whose purpose is to hold the maximum weight of a subgraph, say, $H'$, such that each vertex in $V(H')$ has a distinct color from $C$ (all colors from $C$ are used) and $v\in V(H')$. Note that $|V(H')|= |C|$. %We denote the value $M[v,C]$ is supposed to hold by $M_{v,C}$. 

Now, for each vertex $v\in V(G)$, let $M[v,col(v)] = w(v)$. For $|C|>2$, let $M[v,C] = \max_{u\in N[v], C'\subseteq C\setminus col(v)} M[u,C']+M[v,C\setminus C']$. Finally, if for some $v\in V(G)$, $M[v,[k]] \geq t$, then report a Yes-instance, else, report a No-instance. The correctness of this dynamic programming can be proved formally using standard induction techniques.

Note that our dynamic programming table has size $|V(G)|\cdot2^k$ and computing each entry takes polynomial time. Hence, our algorithm takes $2^k\cdot |V(G)|^{\mathcal{O}(1)}$ time for one fixed coloring.
Since we generate $e^kk^{\mathcal{O}(k \log k)}\log m$ colorings in total (using Proposition~\ref{P:cc}) in $e^kk^{\mathcal{O}(k \log k)}m\log m$ time, and then use dynamic programming on each coloring, we can solve our problem in $e^kk^{\mathcal{O}(k \log k)}2^km^{\mathcal{O}(1)}$ time in total.
  \end{proof}

Let $((G,A,\mathcal{U},p),p)$ be an instance of \PROP-\CFD~parameterized by $p$. Let the vertices of $G$ be colored using $n$ colors (since $n=|A|$, the set of colors used is same as $A$). A coloring is  \textit{suitable} if there exists a \PROP~allocation $\Pi = (\pi_1,\ldots, \pi_n)$ such that each vertex in $\pi_i$ is colored with the color $i$.  
Now, we have the following lemma.
\begin{lemma}\label{L:CC}
Given a coloring $col:V(G)\rightarrow A$  for an instance $(G,A,\mathcal{U},p)$ of \PROP-\CFD, we can decide whether this is a suitable coloring for $(G,A,\mathcal{U},p)$ in $e^pp^{\mathcal{O}(p \log p)}m ^{\mathcal{O}(1)}$ time.
\end{lemma}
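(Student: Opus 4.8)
The plan is to exploit the fact that fixing the coloring $col$ decouples \PROP-\CFD\ across color classes. In any allocation witnessing that $col$ is suitable, every vertex of $\pi_i$ is colored $i$; hence distinct bundles automatically satisfy $\pi_i\cap\pi_j=\emptyset$, and the only interaction among the bundles is the global size budget $|\bigcup_{i}\pi_i|=p$. Writing $G_i=G[col^{-1}(i)]$ for the subgraph induced by the color-$i$ vertices, a suitable allocation is precisely a choice, for every color $i\in A$, of a nonempty connected subgraph $\pi_i$ of $G_i$ with $u_i(\pi_i)\geq u_i(V(G))/n$ and $\sum_{i\in A}|\pi_i|=p$. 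Since each $u_i(\pi_i)$ is a nonnegative integer, the \PROP\ requirement for agent $i$ is equivalent to $u_i(\pi_i)\geq t_i$, where $t_i=\lceil u_i(V(G))/n\rceil$.

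First I would, for each color $i\in A$ and each target size $k\in[p]$, invoke Lemma~\ref{L:imp} on the vertex-weighted graph $G_i$ with weight function $w=u_i$, size parameter $k$, and threshold $t_i$. This decides whether $G_i$ contains a connected subgraph on exactly $k$ vertices of $u_i$-value at least $t_i$, i.e.\ whether agent $i$ can be made \PROP-satisfied using exactly $k$ vertices of her own color. Let $K_i\subseteq[p]$ collect the sizes $k$ for which the answer is ``yes''. (If no vertex is colored $i$, then $K_i=\emptyset$, correctly certifying that $col$ is unsuitable, since agent $i$ cannot receive any bundle.)

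It then remains to decide whether one can select $k_i\in K_i$ for every $i$ with $\sum_{i\in A}k_i=p$. This is a bounded additive feasibility check, which I would resolve by a straightforward dynamic program over the colors: process colors $1,\dots,n$ in order and maintain, for each partial budget $b\in[0,p]$, a bit indicating whether $b$ is realizable as a sum of one chosen size from each processed color's set. The table has $n(p+1)$ entries, each updated in $O(p)$ time, so this combination step is polynomial; the coloring is suitable iff budget $p$ is realizable after all colors are processed. I emphasize that using \emph{exact}-size queries per color (rather than relying on any monotonicity such as Observation~\ref{O:proper}) is what lets me meet the exact total $p$ even when the individual $G_i$ are disconnected and vertices cannot be shifted across colors.

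For the running time, recall $n=|A|\leq p$ (otherwise the instance is a trivial No-instance). There are at most $n\cdot p\leq p^2$ calls to Lemma~\ref{L:imp}, each with size parameter $k\leq p$ and hence running in time $e^{k}k^{\mathcal{O}(k\log k)}2^{k}m^{\mathcal{O}(1)}\leq e^{p}p^{\mathcal{O}(p\log p)}m^{\mathcal{O}(1)}$, where the factors $2^{p}$ and $p^{2}$ (the number of calls) are absorbed into $p^{\mathcal{O}(p\log p)}$. Adding the polynomial combination step yields total time $e^{p}p^{\mathcal{O}(p\log p)}m^{\mathcal{O}(1)}$, as claimed. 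The only mild subtlety—and the step I would verify most carefully—is the reduction of the per-agent \PROP\ constraint to a single connected-subgraph query per $(\text{color},\text{size})$ pair; this is valid precisely because the proportionality threshold $u_i(V(G))/n$ depends only on the input and not on the other bundles, so the per-color feasibility questions are genuinely independent once the sizes are fixed.
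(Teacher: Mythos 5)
Your proof is correct, and it follows the paper's decomposition exactly up to the final aggregation step: like the paper, you observe that fixing $col$ decouples the problem into independent per-color questions, each answered by Lemma~\ref{L:imp} on $G_i=G[col^{-1}(i)]$ with weight $u_i$ and the proportionality threshold. Where you diverge is in how the size budget $p$ is distributed. The paper enumerates all vectors $(p_1,\ldots,p_n)$ with $\sum_{i} p_i=p$ (there are $\binom{p-1}{n-1}\leq p^p$ of them, using $n\leq p$) and invokes Lemma~\ref{L:imp} for each coordinate of each vector; you instead precompute, with only $np\leq p^2$ invocations, the per-color feasible size sets $K_i\subseteq[p]$, and then combine them by a polynomial-time subset-sum dynamic program over the colors. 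Both arguments rest on the same key independence observation (which you state explicitly): feasibility for agent $i$ depends only on the pair $(i,k)$ and the fixed threshold $u_i(V(G))/n$, not on the other bundles. Your route buys a genuinely smaller overhead on top of the single-call cost of Lemma~\ref{L:imp} --- polynomial rather than the paper's $p^p$ factor --- although both are absorbed into the stated bound $e^p p^{\mathcal{O}(p\log p)} m^{\mathcal{O}(1)}$; the paper's enumeration is marginally simpler to state. Two small points in your favor: rounding the threshold to $t_i=\lceil u_i(V(G))/n\rceil$ is cleaner than passing a possibly non-integral target to Lemma~\ref{L:imp} (whose statement requires $t\in\mathbb{N}$), and your handling of an empty color class ($K_i=\emptyset$ forces a No) makes explicit an edge case the paper leaves implicit via the constraint $p_i\in[p]$.
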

\begin{proof}
For $i\in A$, Let $V_i = \set{v~|~col(v) = i}$ and let $G_i = G[V_i]$. Observe that $col$ is a suitable coloring if and only if there exists an allocation $\Pi=(\pi_1,\ldots,\pi_n)$ such that for each vertex $v\in \pi_i$, $col(v)=i$. Let $p_i=|\pi_i|$. Then, observe that $\sum_{i\in A} p_i=p$. Alternately, we can also say that $col$ is a suitable coloring if and only if there exists a vector $(p_1,\ldots,p_n)$ such that $p_i\in [p]$ and $\sum_{i\in A} p_i = p$, and for each $i\in A$, there exists a subset $\pi_i \subseteq V_i$ with the properties: (1) $G_i[\pi_i]$ is connected, (2)$|\pi_i|=p$, and (3) for $w=u_i$, $\sum_{v\in \pi_i} w(v) \geq \frac{u_i(V(G)}{n}$. 

Hence for each vector $(p_1,\ldots,p_n)$ such that $\sum_{i\in A} p_i  =p$, we use Lemma~\ref{L:imp} to decide if there exists a subset $\pi_i \subseteq V_i$ with the properties: (1) $G_i[\pi_i]$ is connected, (2)$|\pi_i|=p$, and (3) for $w=u_i$, $\sum_{v\in \pi_i} w(v) \geq \frac{u_i(V(G)}{n}$ in $e^pp^{\mathcal{O}(p \log p)}m^{\mathcal{O}(1)}$ time (since $p_i \leq p$). If for some vector $(p_1,\ldots,p_n)$ such that $\sum_{i\in A}p_i = p$, Lemma~\ref{L:imp} returns a Yes-instance, else, if for every such vector  Lemma~\ref{L:imp} returns a No-instance, then we declare a No-instance.

It is well known that the total number of vectors $(p_1,\ldots,p_n)$ such that $\sum_{i\in A}p_i = p$ is ${{p-1}\choose {n-1}}\leq p^n \leq p^p$ (since $n\leq p$). So, we may call Lemma~\ref{L:imp} at most $p^p$ times and each call takes $e^pp^{\mathcal{O}(p \log p)}m^{\mathcal{O}(1)}$ time. Hence, the algorithm is this lemma takes $p^pe^pp^{\mathcal{O}(p \log p)}m^{\mathcal{O}(1)}$ time.
  \end{proof}

Now, consider the instance $(G,A,\mathcal{U},p)$. We color the vertices of $G$ using $|A|$ colors uniformly at random. Next, we establish the probability of getting a suitable coloring of $G$ given that $(G,A,\mathcal{U},p)$ be a Yes-instance in the following lemma.
\begin{lemma}\label{L:color}
If $(G,A,\mathcal{U},p)$ is a Yes-instance of \PROP-\CFD, then coloring the vertices of $G$ uniformly at random using $|A|$ colors returns a suitable coloring with probability at least $(\frac{1}{|A|})^p$. 
\end{lemma}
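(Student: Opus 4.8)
The plan is to argue directly from a witness allocation together with the independence of the random colors. Since $(G,A,\mathcal{U},p)$ is a Yes-instance, I would fix a \PROP allocation $\Pi = (\pi_1,\ldots,\pi_n)$, and set $P = \bigcup_{i\in A}\pi_i$ to be the set of assigned vertices; by the third defining property of an allocation, $|P| = p$. The key observation is that a coloring is already \emph{suitable} as soon as it agrees with $\Pi$ on $P$, i.e., whenever $col(v) = i$ for every $i\in A$ and every $v\in \pi_i$. Indeed, in that event the very allocation $\Pi$ witnesses suitability, since each vertex of $\pi_i$ then carries color $i$; the colors assigned to the $m-p$ unassigned vertices of $V(G)\setminus P$ play no role in the definition of suitability.

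First I would introduce the event $E$ that $col$ agrees with $\Pi$ on $P$. Because the vertices of $G$ are colored independently and uniformly at random from the palette $A$ of $|A|$ colors, for each fixed vertex $v\in \pi_i$ we have $\Pr[col(v)=i] = \frac{1}{|A|}$. The $p$ events indexed by the distinct vertices of $P$ are mutually independent, so
$$\Pr[E] = \prod_{v\in P}\frac{1}{|A|} = \left(\frac{1}{|A|}\right)^p.$$
Since the occurrence of $E$ guarantees that $col$ is suitable, the probability of drawing a suitable coloring is at least $\Pr[E] = \left(\frac{1}{|A|}\right)^p$, which is exactly the claimed bound.

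There is essentially no hard step here; the only point worth emphasizing is that we are content with a lower bound, so it suffices to track a single fixed witness allocation rather than to enumerate all suitable colorings (of which there may be many, arising from other \PROP allocations or from arbitrary recolorings of the unassigned vertices). This is precisely the statement needed downstream: combined with Lemma~\ref{L:CC}, which verifies in \FPT~time whether a given coloring is suitable, repeating the random coloring $\mathcal{O}(|A|^p)$ times (and using $|A|\le p$, as may be assumed by the preliminaries) boosts the success probability to at least $1-\frac1e$ while keeping the total running time within the bound of Theorem~\ref{th:CCoding}.
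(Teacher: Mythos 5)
Your proposal is correct and follows essentially the same argument as the paper: fix a witness \PROP~allocation, note that the coloring is suitable whenever it agrees with that allocation on the $p$ assigned vertices, and use independence of the colors to bound this probability by $\left(\frac{1}{|A|}\right)^p$. Your explicit remark that the unassigned vertices' colors are irrelevant and that a single witness suffices for a lower bound matches the paper's closing observation about multiple \PROP~allocations.
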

\begin{proof}
Let $\set{v_1,\ldots,v_p} \subseteq V(G)$  be a set of vertices that is assigned to the agents in some (fixed) \PROP~allocation, say, $\Pi$, of $(G,A,\mathcal{U},p)$. Let the vertex $v_i$ is assigned to the agent $j$ in $\Pi$ ($i\in [p]$ and $j\in A$). The probability of $v_i$ getting color $j$ is $\frac{1}{|A|}$. Since each vertex is colored independently, the probability that each vertex $v_i$ gets color $j$, for $i\in [p]$ and $j\in A$, such that the vertex $v_i$ is assigned to the agent $j$ is $(\frac{1}{|A|})^p$. Since there might be multiple \PROP~allocations for $(G,A,\mathcal{U},p)$, the desired probability is at least $(\frac{1}{|A|})^p$.
  \end{proof}

Finally, we have the following theorem. 

%\todo[inline]{Can you please point out which Lemma in the book can derandomize this?}

% \begin{algorithm}
% \caption{\PROP-\CFD~parameterized by $p$}\label{A:cc}
% \begin{algorithmic}
% %\Require $n \geq 0$
% %\Ensure $y = x^n$
% \State 1) Color the vertices of $G$ randomly using colors $\set{1,\ldots, |A|}$.
% \State 2) Let $V_i$ be the set of vertices assigned color $i$. Let $G_i= G[V_i]$.
% \State 3) Find a subset $\pi_i \subseteq V_i$ minimizing $|\pi_i|$ such that $G[\pi_i]$ is connected and $u_i(\pi_i) \geq \frac{u_i(V(G))}{|A|}$. 
% \end{algorithmic}
% \end{algorithm} 

% \begin{claim}[Lemma~5.4 in~\cite{bookParameterized}]\label{C:cc}
% Let $G'$ be a graph such $|V(G')|\leq k$ and $\sum_{v\in V(G')} w(v) \geq t$. Then, in a coloring, the probability that $G'$ is a colorful path is at least $\frac{1}{k!}$ 
% \end{claim}

\CCoding*
\begin{proof}
Using Lemma~\ref{L:CC} and \ref{L:color}, we have an algorithm that given a Yes-instance, returns a solution with probability at least $\frac{1}{n^p}$. Clearly, by repeating this algorithm $n^p$ times, we obtain the promised constant probability bound. More specifically, the probability of failure of our algorithm all these $n^p$ times is at most $(1-\frac{1}{n^p})^{n^p} \leq \frac{1}{e}$. Therefore, the success probability is at least $1-\frac{1}{e}$.

Since one execution of Lemma~\ref{L:CC} requires $e^pp^{\mathcal{O}(p \log p)}m^{\mathcal{O}(1)}$ time and we repeat it  $n^p \leq p^p$ times (since $n\leq p$), we have an overall running time of $e^pp^{2p+\mathcal{O}(p \log p)}m^{\mathcal{O}(1)} =$ $e^pp^{\mathcal{O}(p \log p)}m^{\mathcal{O}(1)}$. 
  \end{proof}

\section{Conclusion}
In this paper, we defined \CFD, a generalization of \CFDO, and conducted a comprehensive parameterized analysis of the problem. On the one hand, we proved that for $\varphi \in \{\mathsf{EF}, \mathsf{EF1},\mathsf{EFX}\}$, $\varphi$-\CFD remains W[1]-hard parameterized by $p+\mathsf{vcn}+|A|$, even when $\mathsf{vcn}+|A|$ is a small constant. On the other hand, we proved that \PROP-\CFD admits a randomized \FPT algorithm parameterized by $p$ alone. We also presented positive  as well as negative results  on the kernelization complexity of \CFD (for all fairness notions considered) parameterized by $\mathsf{vcn}+\mathsf{val}+p$.

There are several other generalizations of \CFD (or \CFDO) that can be practically well motivated as well as theoretically interesting. For instance, the requirement of connectivity can be   ``relaxed'' (for example, by requiring that each agent receives at most $c$ connected components where $c\geq 1$) as well as ``restricted'' (for example, by requiring that each agent should receive a connected component that is 2-connected, or, more generally, $c$-connected where $c\geq 1$). Moreover, it might be desirable sometimes that the diameter of each connected component (or the sum of diameters of all connected components) assigned to an agent is bounded by a parameter $d$. For the examples we have already described in the Introduction, it is easy to see that such variants make sense.

Furthermore, several other notions of \textit{fairness}, \textit{welfare}, and \textit{chores} were studied with respect to classical \textsc{Fair Division} (see, e.g., \cite{bouveret2016characterizing, caragiannis2019unreasonable,igarashi2019pareto}). Hence, it might be interesting to consider \CFD in light of these notions. Moreover, it might be interesting to study \CFD by considering the parameter $|V(G) - p|$ along with other parameters. This might be specifically applicable when the allocator wants to allocate as much resources as possible. 

In the classical fair division (\CFD on a clique with $p=|V(G)|$), it is known that \EFO always exists and can be computed in polynomial time using the famous envy-cycles  procedure~\cite{lipton2004approximately}. On the other hand,  the guarantee of the existence of \EFX exists for very restricted settings (for example, 4 agents with additive valuations~\cite{berger2022almost} and 2 agents with arbitrary valuations~\cite{plaut2020almost}). It will be interesting to study if this difference in the ``difficulty'' of the two notions extends to the setting of \CFD and/or to the setting of \CFDO.

Finally, we discuss a natural variant of \PROP-\CFD, namely $\mathsf{PRP}$-\CFD, where every agent measures its happiness not in the context of its valuation of the whole graph $G$, but the subgraph $G'$ of $G$ that is assigned to the agents. More formally, an allocation $\Pi = (\pi_1,\ldots, \pi_n)$ respects $\mathsf{PRP}$-\CFD if for every $i\in A$, $u_i(\pi_i) \geq \frac{u_i(V(G'))}{n}$, where $G' = G[\pi_1 \cup \cdots \cup \pi_n]$. Observe that, for $\mathsf{PRP}$-\CFD, RR~\ref{RR1} is safe, and is sufficient to get an exponential kernel parameterized $\mathsf{vcn}+\mathsf{val}+p$, unlike \PROP-\CFD, which requires more careful reduction rules. Moreover, the color-coding based \FPT algorithm for \PROP-\CFD is unlikely to work for $\mathsf{PRP}$-\CFD since the key ingredient of this algorithm is that we know beforehand the exact valuation that will make an agent ``happy'', which is not the case for $\mathsf{PRP}$-\CFD. In this regard, $\mathsf{PRP}$-\CFD behaves more like \EF, \EFO, and \EFX \CFD. Hence, it will be interesting to figure out whether $\mathsf{PRP}$-\CFD is \FPT or W[1]-hard when parameterized by $p+\mathsf{vcn}$.

%%%%%%%%%%%%%%%%%%%%%%%%%%%%%%%%%%%%%%%%%%%%%%%%%%%%%%%%%%%%%%%%%%%%%%%%

%%% The next two lines define, first, the bibliography style to be 
%%% applied, and, second, the bibliography file to be used.

 \bibliographystyle{plainurl} 
% \bibliography{sample}

\bibliography{main}

\end{document}